\newcommand*{\red}{\textcolor{red}}
 \definecolor{myOrange}{rgb}{1,0.5,0}
\newtheorem{proposition}{Proposition}
\newtheorem{corollary}{Corollary}
\newtheorem{definition}{Definition}
\newtheorem{remark}{Remark}
\newtheorem{lemma}{Lemma}
\newtheorem{assumption}{Assumption}
\newtheorem{example}{Example}
\newtheorem{theorem}{Theorem}
\begin{document}
%
\title{Single module identifiability in linear dynamic networks with partial excitation and measurement}
%
%
%

\author{Shengling~Shi,~\IEEEmembership{Member,~IEEE,}
        Xiaodong~Cheng,~\IEEEmembership{Member,~IEEE,}
        and Paul~M.J.~Van~den~Hof,~\IEEEmembership{Fellow,~IEEE}
\thanks{\copyright 2021 IEEE. Personal use of this material is permitted.  Permission from IEEE must be obtained for all other uses, in any current or future media, including reprinting/republishing this material for advertising or promotional purposes, creating new collective works, for resale or redistribution to servers or lists, or reuse of any copyrighted component of this work in other works.  \newline Submitted to IEEE Trans. Automatic Control, 21 December 2020; revised 11 August 2021 and 19 October 2021. Accepted 18 December 2021. \newline   This work is supported by the European Research Council (ERC), Advanced Research Grant SYSDYNET, under the European Unions Horizon 2020 research and innovation programme (Grant Agreement No. 694504).}
\thanks{S. Shi and P. M. J. Van den Hof are with the Control Systems Group, Department of Electrical Engineering, 	Eindhoven University of Technology,
			5600 MB Eindhoven,
			The Netherlands. {\tt\small \{s.shi, p.m.j.vandenhof\}@tue.nl}}.
\thanks{X. Cheng was with the Control Systems Group, Department of Electrical Engineering, Eindhoven University of Technology, 5600 MB Eindhoven,
			The Netherlands. He is currently a research associate with the Control Group, Department of Engineering,
University of Cambridge, CB2 1PZ, United Kingdom. {\tt\small\{xc336@cam.ac.uk\} }}
}

\maketitle

\begin{abstract}
Identifiability of a single module in a network of transfer functions is determined by whether a particular transfer function in the network can be uniquely distinguished within a network model set, on the basis of data. Whereas previous research has focused on the situations that all network signals are either excited or measured, we develop generalized analysis results for the situation of partial measurement and partial excitation. As identifiability conditions typically require a sufficient number of external excitation signals, this work introduces a novel network model structure such that excitation from unmeasured noise signals is included, which leads to less conservative identifiability conditions than relying on measured excitation signals only. More importantly, graphical conditions are developed to verify global and generic identifiability of a single module based on the topology of the dynamic network. Depending on whether the input or the output of the module can be measured, we present four identifiability conditions which cover all possible situations in single module identification. These conditions further lead to synthesis approaches for allocating excitation signals and selecting measured signals, to warrant single module identifiability. In addition, if the identifiability conditions are satisfied for a sufficient number of external excitation signals only, indirect identification methods are developed to provide a consistent estimate of the module. All the obtained results are also extended to identifiability of multiple modules in the network.
\end{abstract}

\begin{IEEEkeywords}
System identification, identifiability, dynamic networks, graph theory
\end{IEEEkeywords}

%
\IEEEpeerreviewmaketitle

\section{Introduction}
%
%
%
%
\IEEEPARstart{D}{ue} to the increasing complexity of current technological systems, the study of large-scale interconnected dynamic systems receives considerable attention. They can adequately describe a wide class of complex systems in various applications, including multi-robot coordination \cite{mesbahi2010graph}, power grids \cite{XiaodongACOM2018Power}, gene networks \cite{adebayo2012dynamical} and brain networks \cite{Esch2020}. For data-driven modeling problems in structured dynamic networks, different types of network models have been used. Connecting to  prediction-error identification methods, the most popular modeling framework is based on a network of transfer functions as introduced in \cite{Goncalves&Warnick:08,van2013identification}, where vertices represent the internal signals, that can be measured, and directed edges denote transfer functions referred to as modules that represent the causal relations among the signals. This modeling framework has been applied to brain networks \cite{Esch2020}, gas pipeline networks \cite{dankers2021} and physical networks with diffusive couplings \cite{kivits2019dynamic}. While there are alternatives, e.g. in the form of state-space models \cite{haber2014subspace,Yu2018Subspace2,YU2019,henk2019topology}, this paper will adhere to the former so-called {\it module} framework. 

Various identification problems of dynamic networks have been addressed in the literature. One can e.g. focus on the estimation of the network topology, i.e. the interconnection structure of the network \cite{materassi2012problem,sanandaji2011acc,chiuso2012bayesian, hayden2016sparse,ShiTopology,Zorzi&Chiuso:17}. Another problem is to identify a single module in the network while the topology of the network is given. This includes the selection of internal signals that need to be measured and excited for achieving consistent module estimates \cite{van2013identification,dankers2015,gevers2015identification,linder2017identification,Everitt&Bottegal&Hjalmarsson_Autom:18,gevers2018practical,materassi2019signal,ramaswamy2019local}. Identification of the full network dynamics, for a given network topology, is addressed in, e.g., \cite{chiuso2012bayesian,weerts2018prediction}. 

In this paper, we focus on {\it network identifiability}, which is a concept that is independent of the particular identification method chosen. Based on the deterministic network reconstruction problems in \cite{Goncalves&Warnick:08,adebayo2012dynamical}, a novel concept of global network identifiability is introduced in an identification setting in \cite{Weerts&etal_IFAC:15, HarmAutomatica}, as a property that reflects the ability to distinguish between network models in a network model set on the basis of measurement data. In the literature, there are three notions of network identifiability which are introduced here from the strongest to the weakest: \textit{global identifiability}  \cite{Weerts&etal_IFAC:15, HarmAutomatica, henk2018necessary} that requires \textit{all} models to be distinguishable from all other models in the model set\footnote{There are actually two versions of global identifiability, reflecting whether either one particular model in the set can be distinguished or all models in the set \cite{HarmAutomatica}.};  \textit{generic identifiability} \cite{bazanella2017, hendrickx2018identifiability,chengallocation, 2019arXivChen, shi2020generic} which requires \textit{almost all} models to be distinguishable from all other models in the model set; and \textit{local identifiability} \cite{legat2020local} which requires models to be distinguishable from the models in a small neighborhood. In this work, we consider global identifiability and generic identifiability. 

In addition, identifiability of a network can be analyzed in a localized fashion by considering identifiability of each multiple-input-single-output (MISO) subsystem or each single-input-multiple-output (SIMO) subsystem independently \cite{HarmAutomatica,HarmGraphCDC,henk2018necessary, hendrickx2018identifiability}. On the other hand, identifiability of different MISO subsystems or SIMO subsystems can also be analyzed jointly based on the topology of the complete network \cite{chengallocation,2019arXivChen,bazanella2019network}. In this work, we consider the localized analysis of identifiability. Particularly, identifiability of a single module is investigated, and the results in this work are also extended to multiple modules from the same MISO or SIMO subsystem.

The above problem of global and generic identifiability has been investigated for different settings. In \cite{bazanella2017, hendrickx2018identifiability, henk2018necessary}, all vertices are excited by external excitation signals, while only a subset of vertices is measured. In contrast, the analysis in e.g., \cite{HarmAutomatica, shi2020generic} assumes that all vertices are measured, while a subset of vertices is excited, i.e. the so-called full measurement setting. However, the above works do not address the problem of global and generic identifiability when not all vertices are excited and not all vertices are measured, i.e. the setting with partial measurement and partial excitation. The analysis of generic identifiability in this setting has rarely been addressed. A recent work  \cite{cheng2021necessary} only addresses a necessary condition for network identifiability. In \cite{bazanella2019network}, sufficient conditions on generic identifiability are presented, requiring a sufficient number of excitation signals to achieve generic identifiability, where only measured excitation signals are considered. However, the contribution of the unmeasured noise signals, which may also serve as excitation sources for identifiability as shown in \cite{HarmAutomatica, shi2020generic} for the full measurement setting, is not explored. In addition, the main result in \cite{bazanella2019network} is not fully graphical as it also requires the availability of certain mappings from excitation signals to node signals, and thus the conditions cannot be tested solely based on the network topology. Special network structures are then considered in \cite{bazanella2019network} such that the required mappings are obtainable; however, how to handle networks with more general topology is not addressed. 

In this paper, we consider the network identifiability concept in \cite{HarmAutomatica, shi2020IFAC, shi2020generic} and generalize the results significantly from the full measurement setting to the setting with partial measurement and partial excitation. We also address the limitations of \cite{bazanella2019network} by exploring the excitation contributed by unmeasured noises and by developing fully graphical identifiability conditions. Additionally, the model sets considered are allowed to contain a priori known/fixed modules.

In Section~\ref{sec:Noise} we show how unmeasured noise signals can serve as excitation sources for identifiability analysis. This analysis has been performed in \cite{HarmAutomatica,shi2020generic} with all the vertices measured, but the results are not directly applicable to networks with unmeasured vertices. In this work, these results are generalized to the partial measurement case by introducing a concept of equivalence between network models and by developing a novel network model structure. Due to the contribution of the noise signals, a smaller number of measured excitation signals is needed for network identifiability, compared to the result in \cite{bazanella2019network}.
 
More importantly, for the developed model structure, this work develops a series of novel graphical conditions to analyze both global and generic identifiability of a single module with different excitation and measurement schemes in Sections~\ref{sec:nece}, \ref{sec:suffi} and \ref{sec:Un-inputorUn-output}. With the obtained conditions, single module identifiability can be checked by only inspecting the topology of the dynamic network. It is worth emphasizing that the conditions presented in this paper cover all possible measurement schemes of the input and the output when identifying the target module in a network. In addition, the graphical conditions further lead to comprehensive synthesis approaches in Section~\ref{sec:synthesis} for excitation and sensor allocation to achieve identifiability, and indirect identification methods for single module estimation in Section~\ref{sec:Indirec}. All the above results also extend to multiple modules from the same MISO or SIMO subsystem of the network. The proofs of all the technical results are collected in the appendix.

\section{Preliminaries and problem formulation} \label{sec:preAndPro}
\subsection{Dynamic networks}
The dynamic network model describes the relationship among a set of $L$ scalar \textit{internal signals} $ \mathcal{W} \triangleq \{w_1(t),w_2(t),\cdots,w_L(t)\}$, a set of deterministic excitation signals $ \{r_1(t),\cdots,r_K(t)\}$ with $K \leqslant L$, and a set of unmeasured disturbances $\{v_1(t),\cdots,v_L(t)\}$. The model is written as 
\begin{align}
w(t) &= G(q)w(t)+R r(t)+v(t), \nonumber \\
w_{\mathcal{C}}(t) &= C w(t), \label{eq:origiNet}
\end{align}
where $G(q)$ is a $L \times L$ matrix of rational transfer operators with delay operator $q^{-1}$, i.e.   $q^{-1} w_i(t) = w_i(t-1)$; $w(t)$, $r(t)$ and $v(t)$ are the column vectors that collect all the internal signals, excitation signals and disturbances, respectively; $C$ is a binary matrix that extracts a subvector $w_\mathcal{C}(t)$ from $w(t)$, i.e. $C$ consists of a subset of rows of an $L \times L$ identity matrix; $R$ is a binary matrix that decides which internal signals are influenced by $r(t)$, i.e. $R$ consists of $K$ columns of an $L \times L$ identity matrix. In the above model, only $w_\mathcal{C}(t)$ and $r(t)$ signals are measured.

In addition, $v(t)$ is a vector of zero-mean stationary stochastic processes. Let $\Phi_v(q)$ of dimension $L \times L$ denote the rational power spectral density matrix of $v(t)$ with rank $p \leqslant L$, and then a noise model for $v(t)$ can be introduced based on the spectral factorization of $\Phi_v(q)$ as
\begin{equation}\label{eq:noisemodel}
v(t)=H(q) e(t),
\end{equation}
where $ e(t)$ is a vector of white noises with a covariance matrix $\Lambda$; $H(q)$ is proper and stable. Depending on the chosen spectral factorization of $\Phi_v$, $\Lambda$ may have size $L$ or $p$ \cite{HarmAutomatica, Gevers2019Singular}.

Combining \eqref{eq:origiNet} and \eqref{eq:noisemodel} leads to a complete network model specified as a quintuple $M \triangleq (G(q),R,C,H(q),\Lambda)$, on which the following assumptions are made:
\begin{assumption}
\label{ass1}
It will be assumed that
\begin{enumerate}[label=(\alph*)]
\item $G(q)$ is proper, stable, and hollow, i.e., the entries on its main diagonal are zeros;

\item $[I-G(q)]^{-1}$ is stable;

\item The network is well-posed in the sense that all principal minors of $\lim_{z \to \infty}(I-G(z))$ are non-zero \cite{dankers2014system};

\item $H(q)$ is proper and stable;

\item $\Lambda$ is real and positive semi-definite.
\end{enumerate}
\end{assumption}
Note that Assumption~\ref{ass1}(c) ensures that every principal submatrix of $[I-G(q)]$ has a proper inverse \cite{scherer2001theory}, i.e. every 
 closed-loop transfer function is proper.

In a network model, both the excitation signals in $r(t)$ and the noise signals in $e(t)$ are called \textit{external signals}, which are collected in a set $\mathcal{X}$. The entries in $G(q)$ are referred to as \textit{modules}. Let set $\mathcal{C} \subseteq \mathcal{W}$ contain all the measured internal signals in $w_\mathcal{C}(t)$, and $\mathcal{Z} = \mathcal{W} \setminus \mathcal{C}$ contains all the unmeasured internal signals. Without loss of generality, vector $w(t)$ is considered to be ordered as $w(t) = \begin{bmatrix}
w_\mathcal{C}(t) \\
w_\mathcal{Z}(t)
\end{bmatrix}$, where $w_\mathcal{C}(t)$ and $w_\mathcal{Z}(t)$ contain the measured internal signals and the unmeasured ones. Accordingly, $C$ is partitioned as $C = \begin{bmatrix}
I & 0
\end{bmatrix}.$

The external-to-internal mapping of \eqref{eq:origiNet} is
\begin{equation}
w_{\mathcal{C}} = C[I-G(q)]^{-1}Rr + C[I-G(q)]^{-1}H(q)e,  \label{eq:origiNetMap}
\end{equation} 
and a standard open-loop identification of the above model \cite{ljung1999system} can typically lead to consistent estimates of the following objects:
\begin{equation}
C T(q) R, \quad C\Phi(\omega) C^T, \label{eq:objects}
\end{equation}
where $T(q) \triangleq [I-G(q)]^{-1}$, $
\Phi(\omega)  \triangleq [I-G(e^ {i\omega})]^{-1}H(e^ {i\omega}) \Lambda H(e^{i\omega})^\star[I-G(e^ {i\omega})]^{-\star}$, and $(\cdot)^\star$ denotes the complex conjugate transpose. Note that $CT(q)R$ contains a subset of rows and columns of $T(q)$, based on which internal signals are measured or excited. In addition, it can be found that the two objects in \eqref{eq:objects} describe the stochastic properties of the measured internal signals $w_\mathcal{C}$, i.e. its mean and spectral density \cite{HarmAutomatica}. Thus, an identifiability question arises to determine the uniqueness of modules in $G(q)$ given the objects in \eqref{eq:objects}. In addition, we consider the situation where the measured internal signals are affected by a full-rank process noise.

\begin{assumption} \label{ass:fullrankPhi}
The power spectrum $C\Phi(\omega) C^T$ has full rank for almost all $\omega$.
\end{assumption}

\subsection{Model sets and identifiability} \label{sec:modelSet}
Network identifiability is defined on the basis of a network model set whose definition is given first. For a network model $M$ and by parameterizing its entries in a rational form as $M(\theta) =  (G(q,\theta),R,C, H(q,\theta),\Lambda(\theta))$, a network model set $\mathcal{M} \triangleq  \{M(\theta)|\theta \in \Theta \subseteq \mathbb{R}^n \}$ is formulated, where $M(\theta)$ satisfies Assumption~\ref{ass1} for every $\theta \in \Theta$. \textit{Note that in the sequel the dependency of transfer matrices on $q$ and $\theta$ is often omitted for simplicity of notation when no confusion arises.} For the graphical identifiability analysis, the transfer functions in $G$ and $H$ are typically parameterized with independent parameters \cite{HarmAutomatica}.
\begin{assumption}\label{ass:InPara}
The transfer functions in a parameterized network model set are parameterized independently.
\end{assumption}
%

There can be certain entries in the matrices of $M(\theta)$ that are fixed and thus do not depend on the parameters. These entries are called \textit{known} or {\it fixed} modules, which reflect the prior knowledge or simply the modeling assumptions imposed by the user. For example, the absence of an interconnection between internal signals is represented by a fixed $0$ in $G$; some entries in $G$ may be particularly designed controllers that are \textit{known}, while in $H(q)$ entries can be $1$ or $0$ specifying where $e$ signals enter the network. The entries that depend on the parameters are called \textit{unknown} or {\it parameterized} entries.

The above structural information of a model set can be represented by a directed graph $\mathcal{G} = (\mathcal{V},\mathcal{E})$, where $\mathcal{V} \triangleq \mathcal{W} \cup  \mathcal{X}$ is a set of vertices representing both the internal signals in $w$ and the external signals in $r,e$, and $\mathcal{E} \subseteq \mathcal{V} \times \mathcal{V}$ denotes a set of directed edges representing those entries in $(G,R,H)$ that are not fixed to zero, e.g. the directed edge from $w_i$ to $w_j$, denoted by $(w_i,w_j)$, exists if and only if $G_{ji}$ is not fixed to zero, and this edge is said to be known (or unknown) if $G_{ji}$ is known (or unknown). Similarly, $(e_k,w_j) \in \mathcal{E}$ and $(r_p,w_j) \in \mathcal{E}$ if and only if $H_{jk}$ and $R_{jp}$ are not fixed to zero, respectively. In this way, any parameterized model set or network model induces a directed graph $\mathcal{G}$ to encode its structural information. Note that notation $w_i$ now represents both a signal and a vertex, and the dependency of the signal on $t$ is often omitted for the simplicity of notation.

Concerning network identifiability, we follow the concept of global network identifiability as defined in \cite{HarmAutomatica} and also consider its generic version obtained by combining it with the concept of generic identifiability that was originally introduced in \cite{bazanella2017, hendrickx2018identifiability} for a different setting. In this respect, we follow an approach that is formulated in \cite{shi2020generic}.
\begin{definition}
 \label{def:defnOriIden}
Given a parameterized network model set $\mathcal{M}$, consider $\theta_0 \in \Theta$ and the following implication:
\begin{equation} \label{equivTP}
		\left. \begin{array}{c} CT(q,\theta_0)R= CT(q,\theta_1)R \\ C\Phi(\omega,\theta_0)C^\top= C\Phi(\omega,\theta_1)C^\top \end{array} \right\}
		\Rightarrow
 G_{ji}(q,\theta_0) = G_{ji}(q,\theta_1),
\end{equation}
for all $\theta_1 \in\Theta$. Then the module $G_{ji}(q,\theta)$ is
\begin{itemize}
\item globally identifiable in $\mathcal{M}$ from $(w_\mathcal{C},r)$ if the implication (\ref{equivTP}) holds for all $\theta_0 \in \Theta$;
\item generically identifiable in $\mathcal{M}$ from $(w_\mathcal{C},r)$ if the implication (\ref{equivTP}) holds for almost all $\theta_0 \in \Theta$.
\end{itemize}
\end{definition}
In the above definition, the notion ``almost all'' excludes a subset of Lebesgue measure zero from $\Theta$. The definition based on a single module also extends trivially to a subset of modules, i.e. a subnetwork, by replacing $G_{ji}$ in the RHS of \eqref{equivTP}. The concept of identifiability in this definition concerns the uniqueness of a module given the first and second moment information of the measured signals \cite{HarmAutomatica}. If the module is not identifiable in the model set, no identification method, that is based on the first and second moments for estimating the module, can guarantee to provide a unique estimate of the module. This identifiability issue is illustrated in the following example.
\begin{example} \label{example0}
Consider the graph of a model set in Fig.~\ref{fig:exam0}(a), where identifiability of $G_{31}$ is of interest with measured input and output. The question is whether $G_{31}$ can be uniquely distinguished within the model set, on the basis of $CTR$ and $C\Phi C^\top$. As there is no noise, we consider $CTR$ only, i.e. the mappings from $r_1$ to the measured internal signals $w_1$ and $w_3$, denoted by $T_{11}$ and $T_{31}$, respectively.

\begin{figure}[h]
\begin{minipage}{0.23\textwidth}
\centering
\includegraphics[scale=0.34]{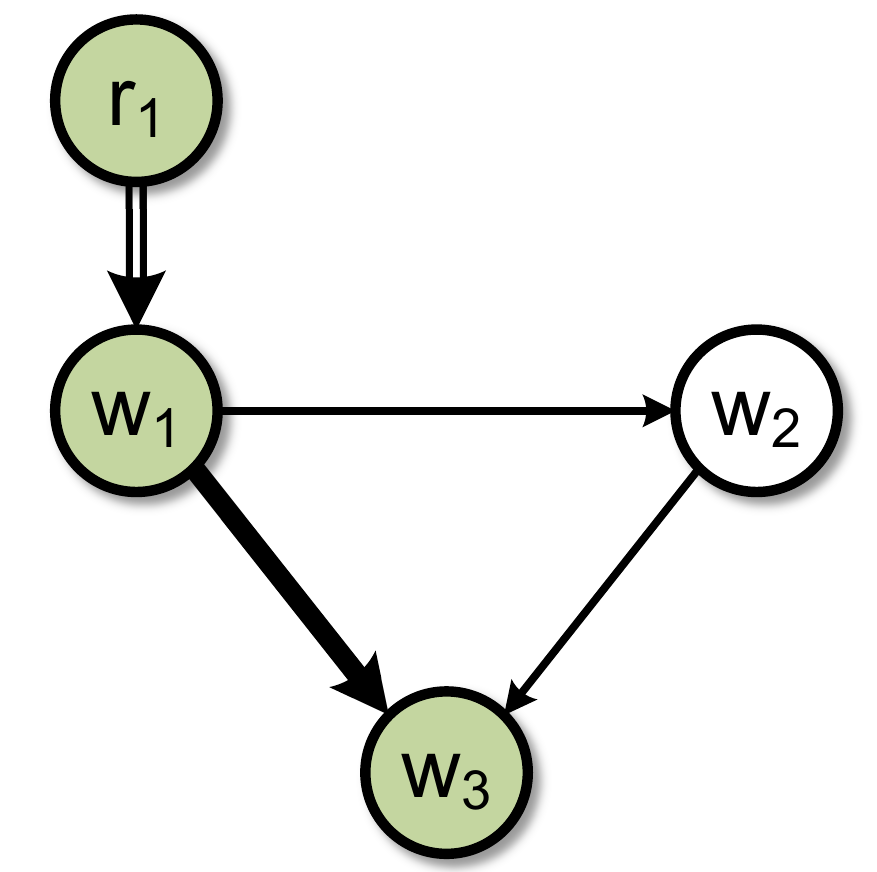}
\\(a)
\end{minipage}
\begin{minipage}{0.23\textwidth}
\centering
\includegraphics[scale=0.34]{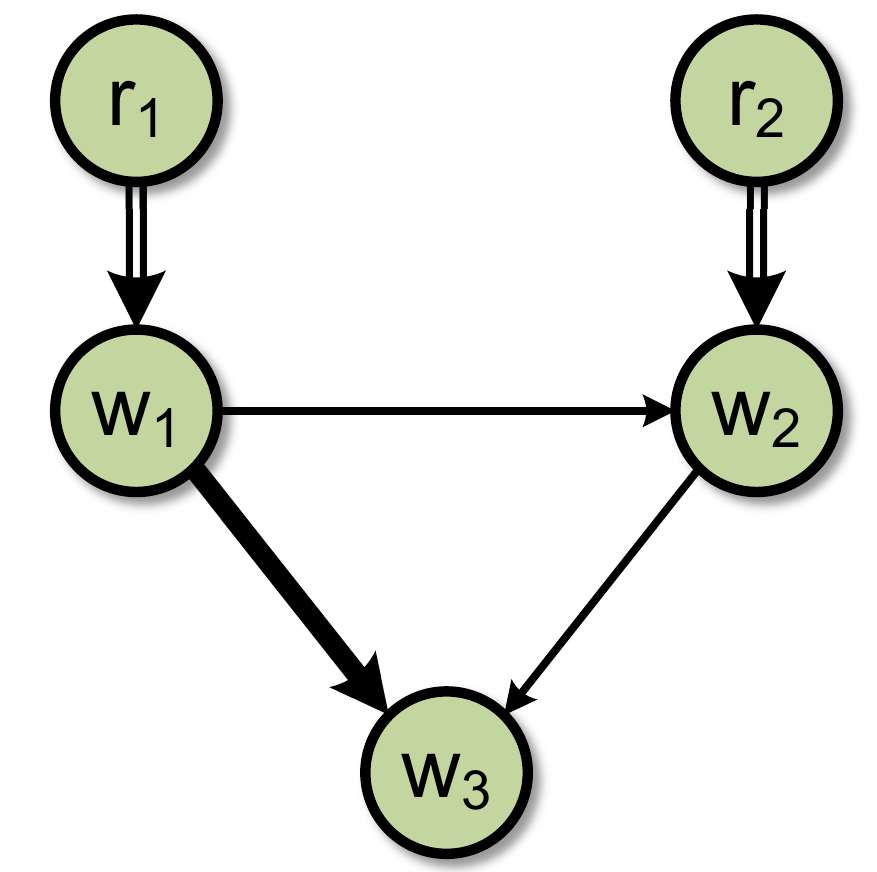}
\\(b)
\end{minipage}
\caption{Two network model sets with $G_{31}$ as the target module (thick edge) and known $R_{11}$, $R_{22}$ (double-line edges). The input and the output of $G_{31}$ are measured, indicated in green; however, $w_2$ is unmeasured in (a). }
\label{fig:exam0}
\end{figure}
Since $R$ is a binary matrix, it is known that $T_{11} = R_{11}=1$, and furthermore, it holds that
\begin{equation}
T_{31} = G_{31}+ G_{32}G_{21}. \label{eq:example0}
\end{equation}
Since both $G_{32}$ and $G_{21}$ are unknown, the module $G_{31}$ cannot be uniquely recovered from $T_{11}$ and $T_{31}$ and thus is not identifiable.

However, when $w_2$ is also measured and excited as in Fig.~\ref{fig:exam0}(b), the additional mappings given by $CTR$, i.e.
$$
T_{21} = G_{21}, \quad T_{32} = G_{32},
$$
together with $T_{31}$ in \eqref{eq:example0} ensure the uniqueness of $G_{31}$ as
$$
G_{31} = T_{31} - T_{32} T_{21}.
$$
Therefore, $G_{31}$ is globally identifiable in this model set.
\end{example}

\subsection{Problem formulation}
In this paper, we are going to investigate under which conditions a module $G_{ji}$ is identifiable in a model set $\mathcal{M}$, on the basis of measured signals $w_\mathcal{C}$ and $r$. While focusing on generic identifiability, we will develop conditions that are fully graph-based and that can handle flexible model sets including prior known/fixed modules.

Since identifiability conditions typically require sufficient excitation signals \cite{HarmAutomatica,shi2020generic}, a challenge will be to explore how noise signals can be utilized for this purpose. This will require a further analysis of the spectral factorization of the noise power spectrum $C \Phi C^\top$ in \eqref{equivTP}. 

\subsection{Notations and definitions}
\label{sec:notation}
The following notations are used throughout the paper. Matrix $T(q,\theta)$ is called to have full rank generically (or full rank globally) if it has full rank for almost all $\theta$ (or for all $\theta$). More generally, a property that depends on parameter $\theta$ is said to hold generically (or globally) if it holds for almost all $\theta$ (or for all $\theta$). For subsets $\mathcal{W}_1$, $\mathcal{W}_2 \subseteq \mathcal{W}$, $T_{\mathcal{W}_1 \mathcal{W}_2}$ denotes the mapping from the internal signals in $\mathcal{W}_2$ to the ones in $\mathcal{W}_1$, i.e. a submatrix of $T$ in \eqref{eq:objects} whose rows and columns correspond to the signals in $\mathcal{W}_1$ and $\mathcal{W}_2$, respectively. A set in the subscript with a single element is replaced by the index of this element, i.e. $T_{\mathcal{W}_1 \{w_i\}}$ is simply written as $T_{\mathcal{W}_1 i}$. A binary matrix is a selection matrix if it consists of a subset of rows from an identity matrix. By pre-multiplying a matrix $A$ by the selection matrix, a subset of rows in $A$ can be extracted.

In a graph $\mathcal{G}$, a directed edge from $w_i$ to $w_j$, i.e. $(w_i,w_j)$, is called an \textit{in-coming} edge of $w_j$, and an \textit{out-going} edge of $w_i$. In this case, $w_i$ is called an in-neighbor of $w_j$, and $w_j$ is an out-neighbor of $w_i$. The out-degree of $w_i$ is the total number of out-neighbors of $w_i$. A (directed) \textit{path} from $w_i$ to $w_j$ is a sequence of vertices and out-going edges starting from $w_i$ to $w_j$ without repeating any vertex.  The \textit{length} of a path is the number of edges in the path. A single vertex is also regarded as a directed path to itself. In a path, \textit{internal vertices} are the vertices excluding the starting and the ending vertices. A vertex $w_i$ is said to be excited by a vertex $x_i \in \mathcal{X}$ if there is a directed edge $(x_i,w_i)$, and $w_i$ is indirectly excited by $x_i$ if there exists a path from $x_i$ to $w_i$ with length larger than one. Similarly, $w_i$ is said to be measured if $w_i \in \mathcal{C}$, and it is indirectly measured if $w_i \notin \mathcal{C}$ but it has a path to a measured internal signal.

Two directed paths are called \textit{vertex disjoint} if they do not share any vertex, including the starting and ending vertices, otherwise they \textit{intersect}. Given two subsets of vertices $\mathcal{V}_1$ and $\mathcal{V}_2$, $b_{\mathcal{V}_1 \to \mathcal{V}_2}$ denotes the maximum number of vertex disjoint paths from $\mathcal{V}_1$ to $\mathcal{V}_2$. A vertex set $\mathcal{D}$ is a $\mathcal{V}_1 - \mathcal{V}_2$ \textit{disconnecting set} if it intersects with all paths from $\mathcal{V}_1$ to $\mathcal{V}_2$, where $\mathcal{D}$ may also include vertices in $\mathcal{V}_1 \cup \mathcal{V}_2$. It is a minimum disconnecting set if it has the minimum cardinality among all $\mathcal{V}_1-\mathcal{V}_2$ disconnecting sets \cite{schrijver2003combinatorial}.

Important set notations in this work are collected in Table~\ref{table:Notation}, and some of them will also be formally introduced later in the main results.

\begin{table}[h]
\caption{Description of important set notations.} \label{table:Notation}
\begin{tabular}{ll}
\toprule
$\mathcal{X}$             & All the external signals.\\
$\mathcal{X}_j$             & All the external signals that do not have an unknown edge to $w_j$. \\
$\mathcal{W}$             & All the internal signals. \\
$\mathcal{C}$             & All the measured internal signals. \\
$\mathcal{W}^-_j$        & The internal signals that have an unknown directed edge to $w_j$.\\
$\mathcal{W}^+_i$           & The internal signals to which $w_i$ has an unknown directed edge.\\
$\mathcal{N}_j^-$        &  All internal signals that are in-neighbors of $w_j$ excluding \\ & the measured signals that have a known directed edge to $w_j$. \\
$\mathcal{N}_i^+$            & The out-neighbors of $w_i$ excluding the ones that satisfy \\ & (i) $w_i$ has a known directed edge to it, \textit{and}  (ii) it is excited \\ & by an external signal that has out-degree one  \\ & with a known out-going edge.  \\
\bottomrule
\end{tabular}
\end{table}

\section{Equivalent network for noise excitation} \label{sec:Noise}
Identifiability in Definition~\ref{def:defnOriIden} concerns the uniqueness of a single module given the mapping $CTR$ and the spectrum $C \Phi C^\top$. In this section, we introduce a novel network model structure to model the data without loss of generality. With this model structure, the noise spectrum $C \Phi C^\top$ admits a unique spectral factor $CT \tilde{H}$ for a transformed noise model $\tilde{H}$. Therefore, implication~\eqref{equivTP} can be equivalently simplified by considering $(CTR,CT \tilde{H})$ in the LHS instead of $(CTR, C \Phi C^\top)$, which implies that the unmeasured noises can act as excitation sources for the identifiability analysis. 

\subsection{Noise spectrum analysis and equivalent networks}
We introduce the novel model structure by exploiting a concept of network equivalence. Based on \eqref{eq:origiNetMap}, the objects $CTRr$ and $C \Phi C^\top$ reflect the mean and the power spectral density of the measured process $w_\mathcal{C}$. These objects encode all the stochastic properties of interest for the measured processes $(w_\mathcal{C},r)$, as the first and the second moments are the main focus. This motivates the concept of network equivalence by extending \cite[Definition~4]{weerts2019abstractions} to the setting with partial measurement.

\begin{definition} \label{defn:equiv}
Any two network models $M_1=(G_1(q),R_1,C_1,\\ H_1(q), \Lambda_1)$ and $M_2=(G_2(q),R_2,C_2, H_2(q), \Lambda_2)$ are said to be (observationally) equivalent if it holds that 
\begin{align*}
C_1 T_1(q) R_1 = C_2 T_2(q) R_2, \text{ and } C_1\Phi_1(\omega) C_1^T = C_2 \Phi_2(\omega) C_2^T,
\end{align*}
where $T(q)$ and $\Phi(\omega)$ are defined in \eqref{eq:objects}, and the equivalence is denoted by $M_1 \sim M_2$.
\end{definition}

The above concept of equivalence characterizes two network models that can be used to model the same measured processes $(w_\mathcal{C},r)$, because given measured $r$, the stochastic processes $w_\mathcal{C}$ in two equivalent models have the same mean $CTRr$ and power spectrum $C \Phi C^\top$. Note that $G_1$ and $G_2$ from two equivalent models may have different dimensions, e.g. a model $M$ and its immersed network where $w_\mathcal{Z}$ is eliminated \cite{dankers2015, weerts2019abstractions}.

It can be found that any network admits the following equivalent network by exploiting the spectral factorization of the disturbance spectrum $C \Phi C^\top$.
\begin{theorem} \label{theorem:equiv}
For any network model $M = (G(q),R,C,H(q),\\ \Lambda)$, there exists an equivalent network model as \begin{equation}
\tilde{M} \triangleq (G(q),R,C,\begin{bmatrix} \tilde{H}^\star(q) & 0 \end{bmatrix}^\star,\tilde{\Lambda}), \label{eq:netNh}
\end{equation}
where $\tilde{H}(q)$ is a $c \times c$ rational transfer matrix, with $c = |\mathcal{C}|$, and is minimum phase and monic; $\tilde{\Lambda} \in \mathbb{R}^{c\times c}$ is positive semi-definite. In addition, 
\begin{itemize}
\item if $M$ satisfies Assumption~\ref{ass:fullrankPhi}, $(\tilde{H}(q),\tilde{\Lambda})$ is unique with positive definite $\tilde{\Lambda}$.
\end{itemize}
\end{theorem}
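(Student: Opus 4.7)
The plan is to reduce the equivalence claim to a spectral factorization problem on the observed $c\times c$ noise spectrum and then invoke a canonical factorization theorem. Since $\tilde M$ and $M$ share $(G,R,C)$, the mean mapping $CT_{\tilde M}R=CT_M R$ is immediate, so equivalence in the sense of Definition~\ref{defn:equiv} reduces to matching $C\Phi C^\top$. Because $w$ is ordered with $w_\mathcal{C}$ first, the proposed noise model $[\tilde H^\star\ 0]^\star$ is exactly $C^\top\tilde H$; plugging it into the spectrum formula yields $C\Phi_{\tilde M}C^\top = P\,\tilde H\,\tilde\Lambda\,\tilde H^\star\,P^\star$, with $P(q):=CT(q)C^\top$. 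Thus existence of $\tilde M$ reduces to producing a monic minimum-phase $c\times c$ rational $\tilde H$ and a PSD constant $\tilde\Lambda$ satisfying $\tilde H\,\tilde\Lambda\,\tilde H^\star=P^{-1}(C\Phi C^\top)P^{-\star}$.

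The next step is to verify that $P(q)$ is invertible as a rational matrix so that the right-hand side of the preceding identity makes sense. Partitioning $I-G$ with respect to $(\mathcal{C},\mathcal{Z})$ and applying block inversion identifies $P$ with the inverse of the Schur complement $S:=(I-G)_{\mathcal{CC}}-(I-G)_{\mathcal{CZ}}(I-G)_{\mathcal{ZZ}}^{-1}(I-G)_{\mathcal{ZC}}$. By Assumption~\ref{ass1}(c) every principal minor of $\lim_{z\to\infty}(I-G(z))$ is nonzero, and the Schur determinant identity then shows that both $(I-G)_{\mathcal{ZZ}}$ and $S$ are nonsingular at infinity. Hence $P$ and $P^{-1}=S$ are both proper rational matrices.

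Now set $\Psi:=P^{-1}(C\Phi C^\top)P^{-\star}$, which is rational, Hermitian and positive semidefinite on the unit circle (these properties are inherited from $C\Phi C^\top$ because $P$ is invertible). I would invoke the canonical spectral factorization for rational PSD spectra, in the singular-spectrum version available in \cite{Gevers2019Singular}, to obtain a proper, stable, monic, minimum-phase $c\times c$ transfer matrix $\tilde H$ and a PSD constant $\tilde\Lambda\in\mathbb{R}^{c\times c}$ such that $\Psi=\tilde H\,\tilde\Lambda\,\tilde H^\star$. Substituting this back establishes $C\Phi_{\tilde M}C^\top=C\Phi C^\top$, hence $M\sim\tilde M$ with the stated structure; stability and properness of the full noise model follow since only the top $c$ rows are nonzero.

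For the uniqueness part under Assumption~\ref{ass:fullrankPhi}, the invertibility of $P$ transports full rank of $C\Phi C^\top$ to full rank of $\Psi$, so $\Psi$ has full rank for almost all $\omega$. In this regular regime the monic minimum-phase spectral factor of $\Psi$ is uniquely determined and the accompanying $\tilde\Lambda$ is the unique positive definite constant completing the factorization, which is classical Wiener–Hopf/Youla uniqueness. The main obstacle I anticipate is locating the precise spectral factorization statement that simultaneously guarantees properness, stability, monicity of $\tilde H$ and a possibly rank-deficient PSD $\tilde\Lambda$ in the existence part; once such a theorem is cited, the remainder of the argument is Schur-complement bookkeeping.
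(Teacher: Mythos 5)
Your proof is correct and follows essentially the same route as the paper: your Schur complement $S$ is exactly $I-\bar G$ for the immersed (Kron-reduced) network the paper constructs, your $\Psi=P^{-1}(C\Phi C^\top)P^{-\star}$ is exactly the paper's $\bar H\Lambda\bar H^\star$, and both arguments then invoke the same singular spectral factorization from \cite{Gevers2019Singular} together with Assumption~\ref{ass1}(c) for properness and Assumption~\ref{ass:fullrankPhi} for uniqueness. The only difference is presentational: the paper exhibits $\Psi$ as $\bar H\Lambda\bar H^\star$ explicitly, while you establish its PSD para-Hermitian structure by congruence, which is equally valid.
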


Based on the above result, the measured process $(w_\mathcal{C},r)$ that is modeled by $M$ can be equivalently modeled by $\tilde{M}$ in \eqref{eq:netNh}, which has the same matrices $G$, $R$, $C$; however, the unmeasured internal signals of $\tilde{M}$ are noise-free. Note that the model $\tilde{M}$ \eqref{eq:netNh} has a related white noise vector, denoted by $\tilde{e}$, with covariance matrix $\tilde{\Lambda}$. This noise model is simpler than the one in $M$, and more importantly, $\tilde{M}$ keeps the $G$ matrix invariant as in $M$. This invariance is important for identifiability analysis and identification of network modules.

The equivalence between $M$ and $\tilde{M}$ is obtained due to the freedom in transforming the unmeasured internal signals and modeling the noises, since the objects in \eqref{eq:objects} only reflect the properties of the measured processes.
\subsection{Equivalent network for handling noise excitation}
Since a network $M$ and its corresponding $\tilde{M}$ are equivalent and contain the same $G$ matrix, both of them can be used to model the same data set, i.e. the measured $(w_\mathcal{C},r)$, for the identification of the modules in a dynamic network \eqref{eq:origiNet}. In the previous section, it is discussed that $\tilde{M}$ in \eqref{eq:netNh} can potentially be a better option due to its simpler noise model. In this section, we further show that the particular noise model of $\tilde{M}$ is also beneficial for the identifiability analysis.

From now on, we use $\mathcal{M}$ to specifically refer to a parameterized set of models that are structured according to $\tilde{M}$ \eqref{eq:netNh}, defined as follows.

\begin{definition}
Let $\mathcal{M}$ be a network model set that is obtained from the rational parameterization of $\tilde{M}$ in \eqref{eq:netNh} as 
$$
\mathcal{M} \triangleq \{ \tilde{M}(\theta)| \theta \in \Theta \subseteq \mathbb{R}^n\}, 
$$
where $\tilde{M}(\theta)$ satisfies Assumption~\ref{ass1} for all $\theta \in \Theta$.
\end{definition}

It can be found that the implication~\eqref{equivTP} for $\mathcal{M}$ can be further simplified under mild conditions.

\begin{assumption} \label{ass:FeedThrough} 
A network model set $\mathcal{M}$ satisfies at least one of the following two conditions:
\begin{enumerate}[label=(\alph*)]
\item $G(q,\theta)$ is parameterized to be strictly proper;

\item $\tilde{\Lambda}$ is diagonal and $G(q,\theta)$ is parameterized without algebraic loops, i.e. there exists a permutation matrix $P$ such that $P G^\infty(\theta) P^\top$ is lower triangular, where $G^\infty(\theta) \triangleq \lim_{z \to \infty} G(z,\theta)$.
\end{enumerate}
\end{assumption}

\begin{proposition}
\label{pro:reforIden}
For a network model set $\mathcal{M}$  that satisfies Assumptions~\ref{ass:fullrankPhi}, \ref{ass:FeedThrough} and defining
\begin{equation}
T_{\mathcal{W}\mathcal{X}}(q,\theta) \triangleq  T(q,\theta)X(q,\theta), \quad X(q) \triangleq [R \quad \begin{bmatrix}
\tilde{H}(q,\theta) \\
0
\end{bmatrix}], \label{eq:TwxDef}
\end{equation}
implication \eqref{equivTP} for $\mathcal{M}$ can be equivalently formulated as
\begin{equation}
		CT_{\mathcal{W} \mathcal{X}}(q, \theta_0) = CT_{\mathcal{W} \mathcal{X}}(q,\theta_1) \Rightarrow
 G_{ji}(q,\theta_0) = G_{ji}(q,\theta_1), \label{eq:EquiIDconcept}
\end{equation}
for all $\theta_1 \in \Theta$.
\end{proposition}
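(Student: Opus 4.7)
My plan is to use uniqueness of a canonical spectral factorization---supplied by Theorem~\ref{theorem:equiv} under Assumption~\ref{ass:fullrankPhi}---to show that matching the stochastic observables $(CTR,\,C\Phi C^{\top})$ in \eqref{equivTP} is, under the extra structural Assumption~\ref{ass:FeedThrough}, equivalent to matching the deterministic transfer $CT_{\mathcal{W}\mathcal{X}}$ in \eqref{eq:EquiIDconcept}. For brevity denote by $\bar{H}$ the $L\times c$ noise-input matrix in \eqref{eq:netNh}, i.e.\ $\tilde{H}$ stacked on top of a zero block, so that $C\Phi C^{\top}=(CT\bar{H})\,\tilde{\Lambda}\,(CT\bar{H})^{\star}$ and $CT_{\mathcal{W}\mathcal{X}}=[\,CTR\ \ CT\bar{H}\,]$. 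The role of Assumption~\ref{ass:FeedThrough} is to place $CT\bar{H}$ in a canonical normalization so that this factorization is the unique canonical spectral factorization of the full-rank spectrum.

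First I would check that $CT\bar{H}$ is a canonical factor. Minimum-phaseness follows from stability of $(I-G)^{\pm 1}$ via Assumption~\ref{ass1} and minimum-phaseness of $\tilde{H}$ from Theorem~\ref{theorem:equiv}, after noting that $CT\bar{H}=T_{\mathcal{C}\mathcal{C}}\tilde{H}$ and that $T_{\mathcal{C}\mathcal{C}}$ has a stable inverse by a standard immersion-type argument on the measured subnetwork. I would then normalize at $z=\infty$: in case~4(a), strict properness of $G$ forces $T(\infty)=I$, so $CT\bar{H}(\infty)=I_c$ is monic, paired with a general positive definite $\tilde{\Lambda}$; in case~4(b), absence of algebraic loops permits a permutation under which $T(\infty)$ is unit lower triangular, whence $CT\bar{H}(\infty)$ is unit lower triangular and, paired with the diagonal $\tilde{\Lambda}$, yields a Cholesky-type canonical factorization. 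In either case the pair $(CT\bar{H},\tilde{\Lambda})$ is uniquely determined by the full-rank spectrum $C\Phi C^{\top}$.

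Step two applies this uniqueness across any two models: for all $\theta_0,\theta_1\in\Theta$,
\[
C\Phi(\theta_0)C^{\top}=C\Phi(\theta_1)C^{\top}\ \Longleftrightarrow\ (CT\bar{H})(\theta_0)=(CT\bar{H})(\theta_1)\ \text{and}\ \tilde{\Lambda}(\theta_0)=\tilde{\Lambda}(\theta_1).
\]
The direction \eqref{eq:EquiIDconcept}$\Rightarrow$\eqref{equivTP} follows at once: the LHS of \eqref{equivTP} gives the LHS of \eqref{eq:EquiIDconcept} by this equivalence, after which the assumed reformulated implication delivers the shared conclusion on $G_{ji}$.

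For the converse \eqref{equivTP}$\Rightarrow$\eqref{eq:EquiIDconcept}, given $\theta_1$ satisfying the LHS of \eqref{eq:EquiIDconcept}, I would construct $\theta_1'\in\Theta$ agreeing with $\theta_1$ in $(G,R,\tilde{H})$ but with $\tilde{\Lambda}(\theta_1')=\tilde{\Lambda}(\theta_0)$, invoking the independent parameterization (Assumption~\ref{ass:InPara}) together with the standard convention that noise covariances enter as free parameters. Then $CTR(\theta_1')=CTR(\theta_0)$, and by direct substitution in the factorization $C\Phi(\theta_1')C^{\top}=C\Phi(\theta_0)C^{\top}$; the original implication \eqref{equivTP} applied at $\theta_1'$ yields $G_{ji}(\theta_1')=G_{ji}(\theta_0)$, whence $G_{ji}(\theta_1)=G_{ji}(\theta_0)$. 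The main obstacle I anticipate is the canonical-form verification in case~4(b): tracking the permutation carefully relative to the partition $(\mathcal{C},\mathcal{Z})$ so that the spectral factor inherits the correct triangular normalization, and justifying the $\tilde{\Lambda}$-surgery for the converse by reading Assumption~\ref{ass:InPara} as allowing independent variation of the noise covariance from the dynamic parameters.
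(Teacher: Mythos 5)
Your proposal is correct and follows essentially the same route as the paper's proof: under Assumption~\ref{ass:FeedThrough} the factor $C(I-G)^{-1}\bigl[\tilde{H}^\star\ \ 0\bigr]^\star$ is a canonical (monic, resp.\ permuted unit-lower-triangular) spectral factor of the full-rank spectrum $C\Phi C^\top$, so it is uniquely recoverable together with $\tilde{\Lambda}$ --- case (a) via the unique monic factorization and case (b) via the permuted $LDL^\star$ decomposition, exactly as in the paper --- which renders the two implications equivalent. Your explicit $\tilde{\Lambda}$-surgery for the direction \eqref{equivTP}$\Rightarrow$\eqref{eq:EquiIDconcept} is in fact more careful than the paper, which dismisses that step as ``trivially equivalent''.
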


According to \eqref{eq:origiNetMap}, the above result indicates that both the mappings from $r$ and $\tilde{e}$ to the measured internal signals can be used for analyzing identifiability in $\mathcal{M}$, and thus the unmeasured noise signal $\tilde{e}$ plays the same role as the measured $r(t)$ for the identifiability analysis. In this case, we say that $\tilde{e}$ signals act as excitation sources for the identifiability analysis. Proposition~\ref{pro:reforIden} is an extension of \cite[Propositions~2]{HarmAutomatica} to the partial measurement and partial excitation setting.

    Proposition~\ref{pro:reforIden} shows another advantage of $\tilde{M}$ over a general network model $M$ in network identification with partial measurement and partial excitation. These two models are equivalent to describe the same data and contain the same modules; however, the model set $\mathcal{M}$ of $\tilde{M}$ allows us to exploit the noise spectral density through Proposition~\ref{pro:reforIden}, such that the noise signals act as excitation signals for identifiability analysis. Therefore, in this work, we regard $\tilde{M}$ as the standard model for network identification in the partial measurement and partial excitation setting.

\section{Necessary graphical conditions} \label{sec:nece}
From now on, we focus on the development of graphical conditions for identifiability in $\mathcal{M}$ obtained from the parameterization of $\tilde{M}$. Particularly, we focus on identifiability of a single module.

Necessary and sufficient graphical conditions for generic identifiability of a single module are obtained in \cite{hendrickx2018identifiability} for the full excitation case and in \cite{shi2020generic} for the full measurement case. When the setting with partial measurement and partial excitation is considered, the existing necessary conditions from the above works for full measurement or full excitation setting naturally remain necessary conditions for the current setting. 

To introduce these necessary conditions, we first impose a regularity assumption on the fixed modules, since the fixed modules may cause graphical conditions to fail\cite{shi2020generic}. Firstly, we define that the structural rank of a matrix is the maximum rank of all matrices with the same nonzero pattern \cite{steffen2005control}.

\begin{assumption}[\cite{shi2020generic}]
\label{ass:fixEntry}
In model set $\mathcal{M}$, the rank of any submatrix of $[(G(q,\theta)-I) \text{ } X(q,\theta)]$ that does not depend on $\theta$ is equal to its structural rank.
\end{assumption}

Then define the following sets of signals related to the input $w_i$ and the output $w_j$ of $G_{ji}$:
\begin{itemize}
\item Let $\mathcal{X}_j$ contain all the signals in $r$ and $\tilde{e}$ that have \textit{no unknown} directed edge to $w_j$;

\item $\mathcal{W}^-_j$ contains all the internal signals that have an \textit{unknown} directed edge to $w_j$;

\item $\mathcal{W}^+_i$ contains all the internal signals to which $w_i$ has an \textit{unknown} directed edge.
\end{itemize}

Note that when all the non-zero modules are unknown, $\mathcal{W}^-_j$ and $\mathcal{W}^+_i$ contains all the in-neighbors of $w_j$ in $\mathcal{W}$ and all the out-neighbors of $w_i$, respectively.
Then the following necessary condition on excitation signals can be derived from \cite{hendrickx2018identifiability} and \cite{shi2020generic} directly.
\begin{lemma} \label{lemma:nece1}
In the model set $\mathcal{M}$ that satisfies Assumptions~\ref{ass:fullrankPhi}, \ref{ass:InPara}, \ref{ass:FeedThrough}, \ref{ass:fixEntry} and a topological property\footnote{For the necessity to hold, $\mathcal{M}$ should be an open set as in \cite[Assumption~4]{shi2020generic}.}, module $G_{ji}(q,\theta)$ is generically identifiable in $\mathcal{M}$ from $(w_\mathcal{C},r)$ only if the following conditions are satisfied:
\begin{enumerate}
\item \label{nece1path} $b_{\mathcal{X}_j \to \mathcal{W}^-_j} = 1+ b_{\mathcal{X}_j \to \mathcal{W}^-_j \setminus \{w_i\}} $;

\item \label{nece2path} $b_{\mathcal{W}^+_i \to \mathcal{C}} = 1+ b_{\mathcal{W}^+_i \setminus \{w_j\} \to \mathcal{C}}$;

\item Each signal in $\{w_i,w_j\}$ is measured or is excited by a vertex in $\mathcal{X}$.
\end{enumerate}
\end{lemma}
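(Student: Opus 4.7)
The plan is to use a monotonicity principle to reduce the three necessary conditions to results already proved in the literature. Via the reformulation in Proposition \ref{pro:reforIden}, generic identifiability of $G_{ji}$ is the statement that the implication $CT_{\mathcal{W}\mathcal{X}}(\theta_0)=CT_{\mathcal{W}\mathcal{X}}(\theta_1)\Rightarrow G_{ji}(\theta_0)=G_{ji}(\theta_1)$ holds for generic $\theta_0$. Enlarging the measurement set $\mathcal{C}$ prepends rows to $C$, and enlarging the external-signal set $\mathcal{X}$ appends columns to $X$; either operation strengthens the antecedent and therefore preserves the implication. Hence any necessary condition in a richer setting, obtained by such enlargements while keeping the graph intact and adjoining only parameterized edges, descends to a necessary condition in the present partial-measurement/partial-excitation setting.

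Conditions \ref{nece1path} and \ref{nece2path} then follow immediately. For condition \ref{nece1path}, I would enlarge $\mathcal{C}$ to all of $\mathcal{W}$, entering the full-measurement regime of \cite{shi2020generic}; since $\mathcal{X}_j$ and $\mathcal{W}^-_j$ are defined from the graph and the fixed/unknown pattern alone, they are unchanged, and the identity $b_{\mathcal{X}_j\to\mathcal{W}^-_j}=1+b_{\mathcal{X}_j\to\mathcal{W}^-_j\setminus\{w_i\}}$ is precisely the single-module necessary condition established there. Dually, for condition \ref{nece2path}, I would adjoin one fresh excitation to every currently unexcited internal signal to reach the full-excitation regime of \cite{hendrickx2018identifiability}; the sets $\mathcal{W}^+_i$ and $\mathcal{C}$ are unaffected, and $b_{\mathcal{W}^+_i\to\mathcal{C}}=1+b_{\mathcal{W}^+_i\setminus\{w_j\}\to\mathcal{C}}$ is exactly the dual necessary condition quoted from that reference.

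Condition 3 is the only part not yielded by either reduction, because in both richer settings $w_i$ and $w_j$ are automatically measured or excited and the condition is vacuous there. For this last piece I would give a direct reparameterization argument: if $w_i\notin\mathcal{C}$ and no edge from $\mathcal{X}$ enters $w_i$, the similarity $G\mapsto D G D^{-1}$ with $D=\operatorname{diag}(1,\dots,\alpha,\dots,1)$ and $\alpha\neq 1$ in the $i$-th slot leaves $CT_{\mathcal{W}\mathcal{X}}$ unchanged (because $CD=C$ and the $i$-th row of $X$ is zero) while mapping $G_{ji}$ to $G_{ji}/\alpha$, contradicting generic identifiability; the case where $w_j$ is neither measured nor excited is dual. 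The main obstacle will be ensuring that the rescaled model still lies in $\mathcal{M}$: fixed nonzero entries in the $i$-th row or column of $G$ would be altered by the scaling, so Assumption \ref{ass:fixEntry} must be invoked (together with the independent parameterization of Assumption \ref{ass:InPara}) to restrict the scaling to the parameterized entries and absorb the changes there. This is the one step of the proof that is not a direct corollary of \cite{shi2020generic} or \cite{hendrickx2018identifiability}.
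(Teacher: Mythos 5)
Your proposal matches the paper's approach: the paper offers no standalone proof of Lemma~\ref{lemma:nece1} and simply asserts that conditions~\eqref{nece1path} and \eqref{nece2path} carry over ``directly'' from the full-measurement and full-excitation results of \cite{shi2020generic} and \cite{hendrickx2018identifiability}, which is exactly the monotonicity reduction you make explicit (enlarging $\mathcal{C}$ or $\mathcal{X}$ only strengthens the antecedent of the identifiability implication while leaving $\mathcal{X}_j$, $\mathcal{W}^-_j$, $\mathcal{W}^+_i$ and $\mathcal{C}$ as they appear in the conditions unchanged). Your diagonal-similarity argument for condition~3 is the standard one in this literature, and the obstacle you flag---fixed nonzero modules in the $i$-th row or column of $G$ being altered by the scaling---is a real imprecision in the lemma as stated (a known nonzero in-coming module of an unmeasured, unexcited $w_i$ can in fact restore identifiability), but this is a looseness the paper itself leaves to the unspecified ``topological property'' rather than a defect of your reasoning.
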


The above conditions imply a necessary number of measured signals and external signals, including $r$ and $\tilde{e}$, for identifiability. This is because the scheme of measurement and excitation decides the sparsity pattern of matrices $C$, $R$, $H$, and thus further influences the mapping $C T X$ in \eqref{eq:EquiIDconcept}. The formulation of $\mathcal{X}_j$ indicates that the noises, which have unknown directed edges to $w_j$, are not helpful for identifiability of $G_{ji}$.

The graphical conditions can be easily tested using graphical algorithms to compute the maximum number of vertex disjoint paths. However, the conditions are not suitable for designing synthesis approaches for excitation and sensor allocation, since they do not specify explicitly which signals are necessary to be excited and measured. Thus, following \cite{shi2020generic}, the above path-based conditions can be equivalently formulated with the concept of disconnecting sets. 

\begin{lemma} \label{lemma:nece2}
Consider the setting of Lemma~\ref{lemma:nece1},
\begin{enumerate}
\item \label{nece1Set} condition~\eqref{nece1path} holds if and only if there exists a $\mathcal{X}_j - \mathcal{W}^-_j \setminus \{w_i\}$ disconnecting set $\mathcal{D}$ such that 
$
b_{\mathcal{X}_j \to \{w_i\} \cup \mathcal{D}}=|\mathcal{D}|+1;
$

\item \label{nece2Set} condition~\eqref{nece2path} holds if and only if there exists a $\mathcal{W}^+_i \setminus \{w_j\} - \mathcal{C}$ disconnecting set $\mathcal{D}_c$ such that 
$
b_{\{w_j\} \cup \mathcal{D}_c \to  \mathcal{C}}=|\mathcal{D}_c|+1.
$
\end{enumerate}
\end{lemma}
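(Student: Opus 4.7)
The plan is to derive the equivalence as a consequence of Menger's theorem (max-flow min-cut for vertex-disjoint paths), exploiting a simple augmenting-path argument. I will prove statement (\ref{nece1Set}) in detail and obtain statement (\ref{nece2Set}) by a symmetric dualization (reversing edge directions and swapping the roles of sources and sinks).

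For the forward direction of (\ref{nece1Set}), assume $b_{\mathcal{X}_j \to \mathcal{W}^-_j} = 1 + b_{\mathcal{X}_j \to \mathcal{W}^-_j \setminus \{w_i\}}$ and set $b_1 := b_{\mathcal{X}_j \to \mathcal{W}^-_j \setminus \{w_i\}}$. Take a family of $b_1+1$ vertex-disjoint paths from $\mathcal{X}_j$ to $\mathcal{W}^-_j$; a counting argument (with $w_i$ appearing in at most one such path, since paths share no vertex) forces exactly one of them, say $P_0$, to terminate at $w_i$, while the remaining $b_1$ paths $P_1,\dots,P_{b_1}$ terminate in $\mathcal{W}^-_j\setminus\{w_i\}$ and avoid $w_i$ entirely. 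Next I would choose the disconnecting set: by Menger's theorem applied in the subgraph obtained by deleting $w_i$, there is a minimum $\mathcal{X}_j-\mathcal{W}^-_j\setminus\{w_i\}$ cut $\mathcal{D}$ with $|\mathcal{D}|=b_1$ and $w_i\notin\mathcal{D}$ (the deletion does not reduce the flow value since $P_1,\dots,P_{b_1}$ already certify $b_1$ disjoint paths in the reduced graph, and any additional path to $\mathcal{W}^-_j\setminus\{w_i\}$ in the full graph through $w_i$ would contradict $b_1$). Each of $P_1,\dots,P_{b_1}$ meets $\mathcal{D}$ at a unique vertex, so truncating them at $\mathcal{D}$ produces $b_1$ disjoint paths from $\mathcal{X}_j$ to $\mathcal{D}$; together with $P_0$ (which is disjoint from these truncations since $w_i\notin\mathcal{D}$ and the family was disjoint to begin with), we obtain $b_1+1=|\mathcal{D}|+1$ vertex-disjoint paths from $\mathcal{X}_j$ to $\{w_i\}\cup\mathcal{D}$.

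For the reverse direction, suppose $\mathcal{D}$ is a $\mathcal{X}_j-\mathcal{W}^-_j\setminus\{w_i\}$ disconnecting set with $b_{\mathcal{X}_j \to \{w_i\}\cup\mathcal{D}}=|\mathcal{D}|+1$. Since the target has cardinality at most $|\mathcal{D}|+1$, the equality forces $w_i\notin\mathcal{D}$ and forces each of the $|\mathcal{D}|+1$ disjoint paths to terminate at a distinct vertex of $\{w_i\}\cup\mathcal{D}$; in particular one path ends at $w_i$, giving $b_{\mathcal{X}_j\to\mathcal{W}^-_j}\ge 1+b_{\mathcal{X}_j\to\mathcal{D}}\ge 1+b_1$ after the $|\mathcal{D}|$ paths hitting $\mathcal{D}$ are extended through $\mathcal{D}$ to $\mathcal{W}^-_j\setminus\{w_i\}$ by the standard max-flow augmenting-path exchange (using that $\mathcal{D}$ is a disconnecting set, so by Menger's theorem the flow from $\mathcal{X}_j$ to $\mathcal{W}^-_j\setminus\{w_i\}$ restricted to paths through $\mathcal{D}$ already realises the maximum). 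The converse inequality $b_{\mathcal{X}_j\to\mathcal{W}^-_j}\le 1+b_1$ is automatic since at most one path can terminate at $w_i$ in any disjoint family, completing the equality.

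The main obstacle I anticipate is the first step in the forward direction: justifying that a minimum $\mathcal{X}_j$--$\mathcal{W}^-_j\setminus\{w_i\}$ disconnecting set can be chosen to exclude $w_i$. The cleanest way is to argue on the vertex-splitting flow network (replace each internal vertex by two copies with a unit-capacity edge), where Menger's theorem yields max-flow equals min-cut; the $b_1$ paths $P_1,\dots,P_{b_1}$ avoiding $w_i$ constitute a certificate that the flow value is preserved when the $w_i$ capacity is set to zero, so a dual min-cut avoiding $w_i$ exists. Statement (\ref{nece2Set}) then follows by reversing all edge orientations and applying statement (\ref{nece1Set}) to the transposed graph, which interchanges in-neighborhoods with out-neighborhoods and swaps $\mathcal{X}_j$ with $\mathcal{C}$, giving the desired dual disconnecting-set characterization.
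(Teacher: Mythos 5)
Your overall strategy (Menger plus path surgery) is reasonable and is essentially what the cited \cite[Lemma~8]{shi2020generic} does in place of the paper's one-line proof, but both directions of your argument for statement~(\ref{nece1Set}) have concrete problems. In the forward direction, the set $\mathcal{D}$ you obtain by applying Menger's theorem in the subgraph with $w_i$ deleted need not be a $\mathcal{X}_j - \mathcal{W}^-_j\setminus\{w_i\}$ disconnecting set of the \emph{original} graph: a path to $\mathcal{W}^-_j\setminus\{w_i\}$ that passes through $w_i$ is absent from the reduced graph and so need not be blocked. For instance, with $\mathcal{X}_j=\{x_1,x_2\}$, edges $x_1\to w_i\to w_k$, $x_2\to a\to w_k$ and $\mathcal{W}^-_j=\{w_i,w_k\}$, the reduced graph admits the minimum cut $\{x_2\}$, which does not disconnect $x_1\to w_i\to w_k$ in the full graph. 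The fix is simpler than your vertex-splitting workaround: take any \emph{minimum} $\mathcal{X}_j-\mathcal{W}^-_j\setminus\{w_i\}$ disconnecting set of the full graph (size $b_1$ by Menger); since the $b_1$ disjoint paths $P_1,\dots,P_{b_1}$ each meet it exactly once and jointly cover it, while $P_0$ is disjoint from them, you get $w_i\notin\mathcal{D}$ and $P_0\cap\mathcal{D}=\emptyset$ for free, and truncation yields the $|\mathcal{D}|+1$ disjoint paths.

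The reverse direction is where the genuine gap lies. The inequality $b_{\mathcal{X}_j\to\mathcal{W}^-_j}\ge 1+b_{\mathcal{X}_j\to\mathcal{D}}$ that you assert is false in general, because the lemma allows $\mathcal{D}$ to be non-minimum: take $\mathcal{X}_j=\{x_1,x_2,x_3\}$, $\mathcal{W}^-_j=\{w_i,w_k\}$, edges $x_1\to w_i$, $x_2\to w_k$, $x_3\to d$ with $d$ a dead end, and $\mathcal{D}=\{w_k,d\}$; then $\mathcal{D}$ is a disconnecting set with $b_{\mathcal{X}_j\to\{w_i\}\cup\mathcal{D}}=3=|\mathcal{D}|+1$, yet $b_{\mathcal{X}_j\to\mathcal{W}^-_j}=2<1+b_{\mathcal{X}_j\to\mathcal{D}}=3$. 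The path ending at $d$ simply cannot be ``extended through $\mathcal{D}$,'' so the invocation of a ``standard max-flow augmenting-path exchange'' does not carry the argument. What is actually needed is: take a maximum family $Q_1,\dots,Q_{b_1}$ of disjoint $\mathcal{X}_j\to\mathcal{W}^-_j\setminus\{w_i\}$ paths, let $d'_m$ be the \emph{last} vertex of $Q_m$ in $\mathcal{D}$ and $\beta_m$ the suffix of $Q_m$ from $d'_m$; relabel the $|\mathcal{D}|+1$ witness paths so that $A_m$ ends at $d'_m$, and concatenate $A_m$ with $\beta_m$. Mutual disjointness of these concatenations, and their disjointness from $P_0$, then follows from the disconnecting-set property (any shared vertex off $\mathcal{D}$ would splice together an $\mathcal{X}_j\to\mathcal{W}^-_j\setminus\{w_i\}$ walk avoiding $\mathcal{D}$). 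Only $b_1$ of the $|\mathcal{D}|$ paths into $\mathcal{D}$ are extended, which is consistent with the counterexample. With these repairs your dualization argument for statement~(\ref{nece2Set}) goes through as you describe.
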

\begin{proof}
The first result follows from \cite[Lemma~8]{shi2020generic} and condition~\eqref{nece1path} of Lemma~\ref{lemma:nece1}. The last result is the dual situation
\end{proof}

In the above lemma, the first result shows that the signals in $\{w_i\} \cup \mathcal{D}$ are necessary to be excited or indirectly excited by $r$ and $\tilde{e}$. In addition, the second result specifies that the signals in $\{w_j\} \cup \mathcal{D}_c$ should be measured or indirectly measured, i.e. they are not measured but have vertex disjoint paths to measured internal signals. 

However, the necessary conditions in Lemma~\ref{lemma:nece1} are not sufficient to verify identifiability. This also means that the requirements on excitation signals and on measured internal signals are not separable for identifiability, i.e. first allocating excitation signals according to the results for the full measurement case and then selecting measured signals according to the results for the full excitation case are not sufficient for identifiability in the current setting.

\section{Sufficient conditions: Both input and output measured or excited} \label{sec:suffi}
In this section, sufficient graphical conditions are developed to verify global and generic identifiability of a single module in $\mathcal{M}$ for the situation, where the input and the output are both measured or both excited. As shown in Proposition~\ref{pro:reforIden}, identifiability concerns the uniqueness of network modules given $CT_{\mathcal{W}\mathcal{X}}$. In the special cases where $C=I$ or $R=I$, identifiability of modules relates to the rank of submatrices in $T_{\mathcal{W}\mathcal{X}}$ as follows \cite{hendrickx2018identifiability,
shi2020generic}. Taking the case where $C=I$ as an example and based on \eqref{eq:EquiIDconcept} and the definition of $T_{\mathcal{W}\mathcal{X}}$ in \eqref{eq:TwxDef}, the rank condition is analyzed on the basis of the relation
\begin{equation}
(I-G)T_{\mathcal{W}\mathcal{X}} = X, \label{eq:fullMeasureBasic}
\end{equation}
based on which, identifiability of modules in $G$ can be formulated as the uniqueness of solutions for entries in $G$ given matrix $T_{\mathcal{W}\mathcal{X}}$, which is thus connected to $T_{\mathcal{W}\mathcal{X}}$'s rank and can also be tested using the following graphical rank tests. 

\begin{lemma} \label{lemma:RankPath}
Consider a network model set $\mathcal{M}$ that satisfies Assumptions~\ref{ass:InPara}, \ref{ass:fixEntry}, and let $T_{\bar{\mathcal{W}}\bar{\mathcal{X}}}(q,\theta)$ denote a submatrix of $T_{\mathcal{W}\mathcal{X}}(q,\theta)$ with its rows and columns corresponding to subsets $\bar{\mathcal{W}} \subseteq \mathcal{W}$ and $\bar{\mathcal{X}} \subseteq \mathcal{X}$, respectively. It holds that 
\begin{enumerate}
\item $rank[T_{\bar{\mathcal{W}}\bar{\mathcal{X}}}(q,\theta)] = b_{\bar{\mathcal{X}} \to \bar{\mathcal{W}}}$ generically ;

\item $rank[T_{\bar{\mathcal{W}}\bar{\mathcal{X}}}(q,\theta)] = b_{\bar{\mathcal{X}} \to \bar{\mathcal{W}}}$ globally if the set of maximum number of vertex disjoint paths from $\bar{\mathcal{X}}$ to $\bar{\mathcal{W}}$ is unique and the transfer functions contained in these paths are non-zero for all models in $\mathcal{M}$.
\end{enumerate}
\end{lemma}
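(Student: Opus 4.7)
The plan is to obtain both rank statements by a path expansion of the minors of $T_{\bar{\mathcal{W}}\bar{\mathcal{X}}}$, reducing the claim to a Lindstr\"om--Gessel--Viennot-type identity adapted to the closed-loop transfer matrix $T = (I-G)^{-1}$ and the input matrix $X$. The $(w,x)$ entry of $TX$ is, after applying Cramer's rule to $(I-G)$, a rational function whose common denominator is $\det(I-G)$ (nonzero by Assumption~\ref{ass1}(b)) and whose numerator is a polynomial in the unknown entries of $G$ and $X$ indexed by walks from $x$ to $w$ in $\mathcal{G}$. Picking a $k\times k$ submatrix and applying the Leibniz formula, I would show that after factoring the shared denominator, the numerator of the minor is a signed polynomial whose monomials correspond to families of $k$ vertex-disjoint paths in $\mathcal{G}$ from the selected columns in $\bar{\mathcal{X}}$ to the selected rows in $\bar{\mathcal{W}}$. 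This is the combinatorial core; analogous path expansions have been established for the full-measurement case in \cite{shi2020generic} and the full-excitation case in \cite{hendrickx2018identifiability}, and the argument carries over to the mixed setting because row/column restriction by $C$ and $R$ does not interfere with the expansion.

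For statement (1), I would then argue that the generic rank equals the largest $k$ for which some $k\times k$ minor is not identically zero in $\theta$. Under Assumption~\ref{ass:InPara}, the parameterized entries of $G$ and $X$ are independent, so different path families in the Gessel--Viennot expansion contribute distinct monomials that cannot cancel against one another. Hence, if $k := b_{\bar{\mathcal{X}}\to\bar{\mathcal{W}}}$, a submatrix whose rows/columns are the endpoints of a maximum vertex-disjoint family yields a minor with at least one non-cancellable monomial, giving generic rank $\geq k$. Assumption~\ref{ass:fixEntry} is used to prevent pathological cancellations in submatrices whose entries are entirely $\theta$-independent by ensuring their rank equals the structural rank. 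Conversely, any $(k+1)\times(k+1)$ submatrix admits no family of $k+1$ vertex-disjoint paths, so every monomial in its path expansion vanishes, making the minor identically zero. This two-sided bound yields the generic statement.

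For statement (2), the uniqueness hypothesis collapses the path expansion of the $k\times k$ minor with $k = b_{\bar{\mathcal{X}}\to\bar{\mathcal{W}}}$ (on the endpoints of the unique maximum family) to a single signed product of transfer functions along the unique family, divided by $\det(I-G)$. By hypothesis all those transfer functions are nonzero for every $\theta\in\Theta$, and the denominator is nonzero by Assumption~\ref{ass1}(b), so this minor is nonzero everywhere on $\Theta$. Hence $\mathrm{rank}\,T_{\bar{\mathcal{W}}\bar{\mathcal{X}}}(q,\theta)\geq k$ globally, while the upper bound from (1) carries over pointwise because the vanishing of all $(k+1)\times(k+1)$ minors is a consequence of the absence of a $(k+1)$-path family, an identity that holds for every $\theta$.

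The main obstacle I anticipate is making the Lindstr\"om--Gessel--Viennot-style path expansion fully rigorous in the presence of cycles in $G$: the classical lemma is stated for directed acyclic graphs, whereas here $G$ may contain feedback loops. Two routes seem viable: (i) clear the denominator $\det(I-G)$ and verify the resulting polynomial identity by matching signed contributions of walks and cycle families, so that non-simple walk contributions cancel against the cycle terms coming from $\det(I-G)$; or (ii) lift the argument to \cite[Lemma~11]{shi2020generic} applied to an augmented graph that absorbs the excitation matrix $X$ into $I-G$ over an enlarged signal set, so the existing result can be quoted verbatim. I expect route (ii) to be cleaner, since it reduces the present lemma to a previously proved combinatorial rank identity and only requires verifying that the augmentation preserves vertex-disjoint-path counts between $\bar{\mathcal{X}}$ and $\bar{\mathcal{W}}$.
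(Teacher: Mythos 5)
Your proposal is correct in substance but takes a different (and much more self-contained) route than the paper: the paper proves this lemma purely by citation, delegating part (1) to the generic-rank result of \cite{shi2020generic} (itself an extension of \cite{hendrickx2018identifiability} to known modules) and part (2) to the global-rank analysis of \cite{henk2018necessary} via unique sets of vertex-disjoint paths, with a remark that the non-zero-transfer-function hypothesis is implicit there. What you are doing is essentially reconstructing the combinatorial content of those cited results — the van der Woude / Lindstr\"om--Gessel--Viennot expansion of minors of $(I-G)^{-1}X$ over families of vertex-disjoint paths together with disjoint cycle families. Your fallback ``route (ii)'' (absorb $X$ into an augmented graph and quote the existing rank lemma of \cite{shi2020generic} verbatim) is, for all practical purposes, what the paper does; your ``route (i)'' buys a self-contained proof at the cost of redoing the walk-versus-cycle cancellation bookkeeping that the cited papers already carry out.

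Two points in your sketch deserve tightening, though neither is fatal. First, in part (2) the $k\times k$ minor does \emph{not} collapse to a single signed product of path gains over $\det(I-G)$: the numerator is the product of the path transfer functions times a signed sum over cycle families vertex-disjoint from the unique path family, which equals $\det(I-G_{\bar{\mathcal{P}}\bar{\mathcal{P}}})$ for $\bar{\mathcal{P}}$ the vertices off the paths; you need Assumption~\ref{ass1}(c) (well-posedness of every principal submatrix) to conclude this extra factor is nonzero, and only then does non-vanishing of the path gains give a globally nonzero minor. Second, the claim that ``different path families contribute distinct monomials that cannot cancel'' is only safe when all edge weights are free parameters; two families differing only in \emph{fixed} edges yield proportional monomials in $\theta$, and this is precisely the cancellation mechanism that Assumption~\ref{ass:fixEntry} (and the more careful argument in \cite{shi2020generic}) is designed to exclude — your placement of that assumption is right, but the one-line dismissal of cancellation understates the work it is doing.
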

\begin{proof}
The first result is proven in \cite{shi2020generic} which is extended from \cite{hendrickx2018identifiability} to the setting with known modules. The global rank has been investigated in \cite{henk2018necessary} in terms of the unique (constrained) set of vertex disjoint paths. Note that the assumption for non-zero transfer functions is implicit in \cite{henk2018necessary} as parameterized model sets are not considered there. 
\end{proof}
The above result shows that the generic and the global rank of $T_{\mathcal{W}\mathcal{X}}$ can be found by counting the maximum number of disjoint paths. In addition, we define a new notation $\bar{b}_{\bar{\mathcal{X}} \to \bar{\mathcal{W}}}$ for the global rank test in Lemma~\ref{lemma:RankPath}, i.e. the equality $\bar{b}_{\bar{\mathcal{X}} \to \bar{\mathcal{W}}}=a$ means that $b_{\bar{\mathcal{X}} \to \bar{\mathcal{W}}}=a$, and the set of maximum number of vertex disjoint paths is unique, while the transfer functions contained in those paths are non-zero for all models in the set.

In contrast to \eqref{eq:fullMeasureBasic}, when $C \not= I$ and $R \not= I$, we have $T_{\mathcal{C} \mathcal{X}}=C(I-G)^{-1}X$ instead, where $(I-G)^{-1}$ cannot be moved to the LHS to obtain a system of linear equations as in \eqref{eq:fullMeasureBasic} in general. Thus in this work, we consider identifiability of a single module in several special situations, depending on whether its input and output are measured. For each situation, identifiability conditions can still be connected to the rank of $T_{\mathcal{W}\mathcal{X}}$ and further to the graphical rank tests in Lemma~\ref{lemma:RankPath}. 

Even if each of the considered cases is limited to a specific situation and these cases cannot be combined into a single result, they together cover all the situations for single module identification in the partial measurement and partial excitation setting.

\subsection{Measured input and output} \label{subsection:MeasureInOut}
A sufficient condition is first derived for the verification of single module identifiability in the situation where both the input $w_i$ and the output $w_j$ of $G_{ji}$ are measured. 

Consider the equation $(I-G)T_{\mathcal{W}\mathcal{X}} = X$, and its $j$th row can be permuted as 
\begin{equation}
\begin{bmatrix}
- G_{ji} & -G_{j\mathcal{N}^-_j \setminus \{w_i \}} & 1 & 0 
\end{bmatrix} \begin{bmatrix}
T_{i \mathcal{X}} \\
T_{\mathcal{N}^-_j \setminus \{w_i \} \mathcal{X}} \\
T_{j \mathcal{X}}\\
\star
\end{bmatrix} = X_{j\star}, \label{eq:Measure1}
\end{equation}
where $X_{j \star}$ denotes the $j$th row vector of $X$, and $\mathcal{N}_j^-$ denotes the set of in-neighbors of $w_j$ for now and will be formally defined later. The modules contained in the $j$th row of $G$ are shown as blue blocks in Fig.~\ref{fig:FirstTheorem}.

If all internal signals are measured, i.e. $C=I$, all the submatrices of $T_{\mathcal{W}\mathcal{X}}$ in \eqref{eq:Measure1} are given by $ CT_{\mathcal{W}\mathcal{X}}$, and thus we can analyze the uniqueness for $G_{ji}$ given $T_{\mathcal{W}\mathcal{X}}$, as investigated in \cite{shi2020generic}. However, when a subset of internal signals is measured with $C$ being a selection matrix, only a subset of rows in $T_{\mathcal{W}\mathcal{X}}$ is given by $CT_{\mathcal{W}\mathcal{X}}$. Although $T_{i \mathcal{X}}$ and $T_{j \mathcal{X}}$ are available from $CT_{\mathcal{W}\mathcal{X}}$ due to measured $w_i$ and $w_j$, $T_{\mathcal{N}^-_j \setminus \{w_i \} \mathcal{X}}$ may not be directly available as the signals in $\mathcal{N}^-_j \setminus \{w_i \}$ may not be measured.

To address the unavailability of $T_{\mathcal{N}^-_j \setminus \{w_i \} \mathcal{X}}$, the following result is instrumental and can be derived from \cite[Theorem~5]{shi2020generic}. 
\begin{lemma} \label{lemma:K}
For any network model set $\mathcal{M}$ that satisfies Assumptions~\ref{ass:InPara}, \ref{ass:fixEntry} with a disconnecting set $\mathcal{D}$ from any $\bar{\mathcal{X}} \subseteq \mathcal{X}$ to any $\bar{\mathcal{W}} \subseteq \mathcal{W}$, there exists a proper transfer matrix $K(q,\theta)$ such that
\begin{equation}
T_{\bar{\mathcal{W}}  \bar{\mathcal{X}} }(q,\theta) = K(q,\theta) T_{\mathcal{D} \bar{\mathcal{X}}}(q,\theta).  \label{eq:Kresult}
\end{equation}
In addition, it holds that
\begin{itemize}
\item $K(q,\theta)$ has full column rank generically if $b_{\mathcal{D} \to \bar{\mathcal{W}}} = |\mathcal{D}|$;

\item $K(q,\theta)$ has full column rank globally if $\bar{b}_{\mathcal{D} \to \bar{\mathcal{W}}_1} = |\mathcal{D}|$ for some $\bar{\mathcal{W}}_1 \subseteq \bar{\mathcal{W}}$.
\end{itemize} 
\end{lemma}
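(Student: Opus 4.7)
My plan is to construct $K(q,\theta)$ explicitly as the closed-loop transfer matrix from $w_{\mathcal{D}}$, regarded as exogenous inputs, to $w_{\bar{\mathcal{W}}}$ in a modified (``cut'') network obtained from the original one by deleting the in-coming \emph{internal} edges of every vertex in $\mathcal{D}$. Concretely, let $\bar G$ be obtained from $G$ by zeroing the rows indexed by $\mathcal{D}$, and introduce the auxiliary equation $(I-\bar G)w = Xr + E_{\mathcal{D}} w_{\mathcal{D}}$, where $E_{\mathcal{D}}$ is the selection matrix reinserting $w_{\mathcal{D}}$ in the $\mathcal{D}$-rows so that they reduce to identities. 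Solving the remaining rows yields a proper transfer matrix from $w_{\mathcal{D}}$ to $w$; properness follows from Assumption~\ref{ass1}(c), which guarantees that every principal submatrix of $I-G$ has a proper inverse. Selecting the $\bar{\mathcal{W}}$-rows of this mapping defines $K$, whose rows indexed by $\bar{\mathcal{W}}\cap\mathcal{D}$ are identity rows by construction.

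\textbf{Factorization.} With $K$ in hand, the identity $T_{\bar{\mathcal{W}}\bar{\mathcal{X}}} = K\,T_{\mathcal{D}\bar{\mathcal{X}}}$ follows from the disconnecting property: every directed path from $\bar{\mathcal{X}}$ to $\bar{\mathcal{W}}$ passes through $\mathcal{D}$, so under an excitation confined to $\bar{\mathcal{X}}$ the response $w_{\bar{\mathcal{W}}}$ depends on $w_{\bar{\mathcal{X}}}$ only through $w_{\mathcal{D}}$. I would make this rigorous by writing $(I-G)^{-1}$ in block form with respect to the partition $(\mathcal{D},\mathcal{W}\setminus\mathcal{D})$ and applying a Schur-complement manipulation to display the factorization of the $\bar{\mathcal{W}}\times\bar{\mathcal{X}}$ block of $T$ through the $\mathcal{D}\times\bar{\mathcal{X}}$ block; the disconnecting hypothesis is precisely what forces the ``bypass'' terms to vanish.

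\textbf{Rank analysis.} Cutting in-edges of $\mathcal{D}$ never removes an out-going path from $\mathcal{D}$, so the maximum number of vertex disjoint paths from $\mathcal{D}$ to $\bar{\mathcal{W}}$ in the cut network coincides with $b_{\mathcal{D}\to\bar{\mathcal{W}}}$ in the original network. Interpreting $K$ as the transfer matrix of the cut network from its inputs $w_{\mathcal{D}}$ to the outputs $w_{\bar{\mathcal{W}}}$, the first assertion of Lemma~\ref{lemma:RankPath} gives $\operatorname{rank} K = b_{\mathcal{D}\to\bar{\mathcal{W}}}$ generically, so the hypothesis $b_{\mathcal{D}\to\bar{\mathcal{W}}}=|\mathcal{D}|$ yields generic full column rank of $K$. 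For the global statement, the hypothesis $\bar b_{\mathcal{D}\to\bar{\mathcal{W}}_1}=|\mathcal{D}|$ combined with the second assertion of Lemma~\ref{lemma:RankPath} supplies a $|\mathcal{D}|\times|\mathcal{D}|$ submatrix of $K$, namely the rows indexed by $\bar{\mathcal{W}}_1$, with global full rank, whence $K$ itself has global full column rank.

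\textbf{Main obstacle.} I expect the factorization step to be the delicate one, particularly when $\mathcal{D}$ overlaps $\bar{\mathcal{X}}\cup\bar{\mathcal{W}}$ or when feedback loops touch (but do not cross) $\mathcal{D}$. Verifying that the algebraic factorization over rational matrices holds, that the cut network is well-posed, and that the rows of $K$ indexed by $\mathcal{D}\cap\bar{\mathcal{W}}$ come out as the intended identity rows all require careful bookkeeping of the block structure. Once this block-matrix identity is established, the rank claims reduce to routine applications of Lemma~\ref{lemma:RankPath} to the cut network.
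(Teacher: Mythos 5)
Your proposal is correct and follows essentially the same route as the paper: the paper also realizes $K$ as the external-to-internal transfer matrix of the modified network in which the vertices of $\mathcal{D}$ become source signals with all their in-coming edges removed, and then evaluates its column rank by applying Lemma~\ref{lemma:RankPath} to that subgraph. The only difference is that the paper delegates the existence and explicit block form of $K$ (including the bookkeeping for $\mathcal{D}\cap\bar{\mathcal{X}}$ and $\mathcal{D}\cap\bar{\mathcal{W}}$ that you flag as delicate) to Theorem~5 of \cite{shi2020generic}, whereas you sketch the derivation directly.
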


The above result shows that if an appropriate disconnecting set $\mathcal{D}$ is chosen, as in Fig.~\ref{fig:FirstTheorem}, $T_{\mathcal{N}^-_j \setminus \{w_i \} \mathcal{X}}$ can be factorized into $K T_{\mathcal{D}\mathcal{X}}$ for some $K$, where $T_{\mathcal{D}\mathcal{X}}$ can be obtained from $C T_{\mathcal{W}\mathcal{X}}$ in certain way, e.g., the signals in $\mathcal{D}$ are measured. This together with \eqref{eq:Measure1} leads to 
\begin{equation}
\begin{bmatrix}
G_{ji} & G_{j\mathcal{W}^-_j \setminus \{w_i \}} K
\end{bmatrix} \begin{bmatrix}
T_{i \mathcal{X}} \\
T_{\mathcal{D} \mathcal{X}}
\end{bmatrix} = T_{j \mathcal{X}}-X_{j\star}, \label{eq:sufficientReasoning}
\end{equation}
and thus the uniqueness of $G_{ji}$ is ensured if $\begin{bmatrix}
T_{i \mathcal{X}} \\
T_{\mathcal{D} \mathcal{X}}
\end{bmatrix}$ has full row rank and the signals in $\mathcal{D}\cup \{w_i,w_j\}$ are measured. 

The above requirement for full row rank can be reformulated into a path-based condition using Lemma~\ref{lemma:RankPath}, i.e. $\mathcal{D}$ is excited or indirectly excited by a set $\bar{\mathcal{X}}$ of external signals as in Fig.~\ref{fig:FirstTheorem}. In addition, the requirement for measuring $\mathcal{D}$ can be further relaxed by the indirect measurement of $\mathcal{D}$, i.e. the signals in $\mathcal{D}$ are not necessarily measured but have paths to a set $\bar{\mathcal{C}}$ of measured internal signals as in Fig.~\ref{fig:FirstTheorem}.

\begin{figure}[h]
\centering
\includegraphics[scale=0.45]{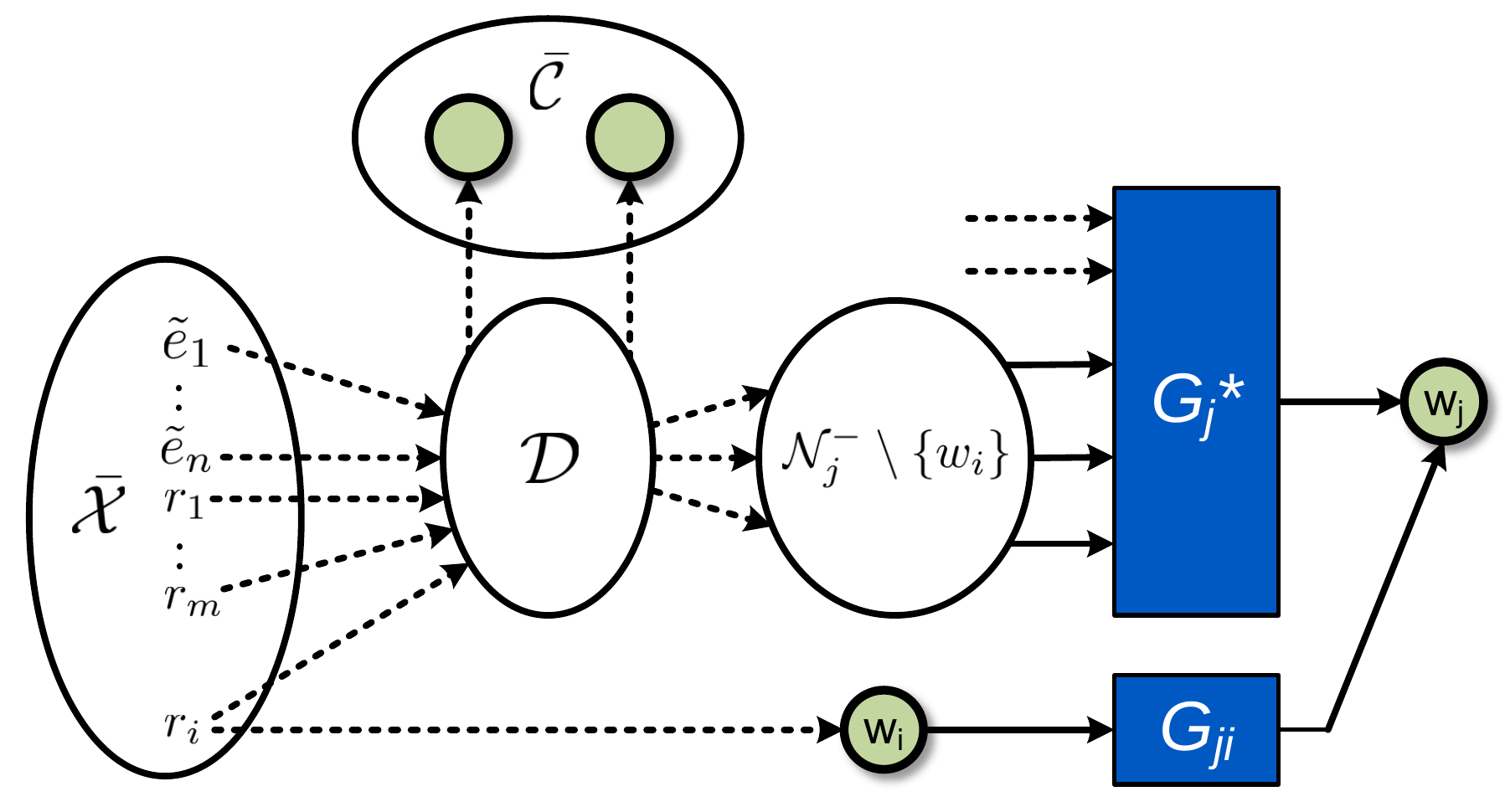}
\caption{Visualization of a situation where $G_{ji}$ is generically identifiable. $G_j^*$ represents the other in-coming modules of $w_j$, and the dashed edges represent directed paths.}
\label{fig:FirstTheorem}
\end{figure}

The above reasoning for identifiability analysis can be generalized. Before introducing this result, we first define an important set of signals:
\begin{itemize}
\item Let set $\mathcal{N}^-_j$ contain all internal signals that are in-neighbors of $w_j$ excluding the measured signals that have a known directed edge to $w_j$.
\end{itemize}
According to the above definition, it holds that $\mathcal{W}_j^- \subseteq \mathcal{N}^-_j$. When all the non-zero modules are unknown, we have $\mathcal{N}^-_j = \mathcal{W}_j^-$ which simply contains all the in-coming internal signals of $w_j$.

Then the following graphical result can be obtained from the generalization of the reasoning in \eqref{eq:sufficientReasoning}.

\begin{theorem} \label{theorem:partial}
For a model set $\mathcal{M}$ that satisfies Assumptions~\ref{ass:fullrankPhi}, \ref{ass:InPara}, \ref{ass:FeedThrough}, \ref{ass:fixEntry} with its graph, $G_{ji}(q,\theta)$ is generically identifiable in $\mathcal{M}$ from $(w_\mathcal{C},r)$ if for some $\bar{\mathcal{X}} \subseteq \mathcal{X}_j$ and $\bar{\mathcal{C}} \subseteq \mathcal{C} \setminus \{w_i\}$, there exists a $\bar{\mathcal{X}} - (\mathcal{N}^-_j\setminus \{w_i\} ) \cup \bar{\mathcal{C}}$ disconnecting set $\mathcal{D}$ such that
\begin{enumerate}
\item \label{partial1:con1} $b_{\bar{\mathcal{X}} \to \{w_i\} \cup \mathcal{D}} = |\mathcal{D}| + 1$;

\item  \label{partial1:con2} $b_{\mathcal{D} \to \bar{\mathcal{C}}} = |\mathcal{D}|$.

\item $w_i$ and $w_j$ are in $\mathcal{C}$.
\end{enumerate}
\end{theorem}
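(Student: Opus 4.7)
The plan is to apply Proposition~\ref{pro:reforIden} and verify the implication \eqref{eq:EquiIDconcept}: for a generic $\theta_0 \in \Theta$ and any $\theta_1 \in \Theta$ with $CT_{\mathcal{W}\mathcal{X}}(\theta_0) = CT_{\mathcal{W}\mathcal{X}}(\theta_1)$, show $G_{ji}(\theta_0) = G_{ji}(\theta_1)$. Starting from the identity $(I-G)T_{\mathcal{W}\mathcal{X}} = X$, I take its $j$-th row restricted to columns $\bar{\mathcal{X}}$. Using $w_j \in \mathcal{C}$ (so $T_{j\bar{\mathcal{X}}}$ is determined by $CT_{\mathcal{W}\mathcal{X}}$), $\bar{\mathcal{X}} \subseteq \mathcal{X}_j$ (so $X_{j\bar{\mathcal{X}}}$ is fixed), and the definition of $\mathcal{N}^-_j$, which excludes exactly those in-neighbors that are both measured and have a known module, all known contributions can be moved to the right, yielding
\[
G_{ji}(\theta)\, T_{i\bar{\mathcal{X}}}(\theta) + G_{j,\mathcal{N}^-_j \setminus \{w_i\}}(\theta)\, T_{\mathcal{N}^-_j \setminus \{w_i\},\bar{\mathcal{X}}}(\theta) = B,
\]
where $B$ depends only on $CT_{\mathcal{W}\mathcal{X}}$ and fixed modules, and $T_{i\bar{\mathcal{X}}}$ is known since $w_i \in \mathcal{C}$; in particular both $B$ and $T_{i\bar{\mathcal{X}}}$ take the same value for $\theta_0$ and $\theta_1$.

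Next I factor the unmeasured block through $\mathcal{D}$ by applying Lemma~\ref{lemma:K} twice: with target set $\mathcal{N}^-_j \setminus \{w_i\}$, producing $T_{\mathcal{N}^-_j \setminus \{w_i\},\bar{\mathcal{X}}}(\theta) = K_1(\theta)\, T_{\mathcal{D}\bar{\mathcal{X}}}(\theta)$, and with target set $\bar{\mathcal{C}}$, producing $T_{\bar{\mathcal{C}}\bar{\mathcal{X}}}(\theta) = K_2(\theta)\, T_{\mathcal{D}\bar{\mathcal{X}}}(\theta)$. Both applications are legal because $\mathcal{D}$ disconnects $\bar{\mathcal{X}}$ from the union $(\mathcal{N}^-_j \setminus \{w_i\}) \cup \bar{\mathcal{C}}$, hence from either summand separately. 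Condition~\eqref{partial1:con2} combined with Lemma~\ref{lemma:K} then yields that $K_2(\theta)$ has full column rank $|\mathcal{D}|$ for almost all $\theta$, so at a generic $\theta_0$ a left inverse of $K_2(\theta_0)$ exists, and the row span of $T_{\mathcal{D}\bar{\mathcal{X}}}(\theta_0)$ coincides with the row span of $T_{\bar{\mathcal{C}}\bar{\mathcal{X}}}$, which is itself invariant under the equivalence since $\bar{\mathcal{C}} \subseteq \mathcal{C}$. Substituting both factorizations into the main equation for $\theta_0$ and $\theta_1$ and subtracting gives
\[
\bigl(G_{ji}(\theta_0) - G_{ji}(\theta_1)\bigr)\, T_{i\bar{\mathcal{X}}} \;+\; \Delta v \cdot T_{\bar{\mathcal{C}}\bar{\mathcal{X}}} \;=\; 0
\]
for some row vector $\Delta v$ built from the parameter-dependent products $G_{j,\cdot}\, K_1\, K_2^L$.

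The final step is a row-span argument. If $G_{ji}(\theta_0) \neq G_{ji}(\theta_1)$, the displayed equation forces $T_{i\bar{\mathcal{X}}}$ into the row span of $T_{\bar{\mathcal{C}}\bar{\mathcal{X}}}$, hence into the row span of $T_{\mathcal{D}\bar{\mathcal{X}}}(\theta_0)$ by the previous step. But Condition~\eqref{partial1:con1} combined with Lemma~\ref{lemma:RankPath} states that the stacked matrix $\bigl[\,T_{i\bar{\mathcal{X}}}(\theta_0)^\top \;\; T_{\mathcal{D}\bar{\mathcal{X}}}(\theta_0)^\top\,\bigr]^\top$ has full row rank $|\mathcal{D}|+1$ generically, so $T_{i\bar{\mathcal{X}}}(\theta_0)$ does not lie in the row span of $T_{\mathcal{D}\bar{\mathcal{X}}}(\theta_0)$. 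This contradiction yields $G_{ji}(\theta_0) = G_{ji}(\theta_1)$ and completes the argument.

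The main obstacle I anticipate is that $T_{\mathcal{D}\bar{\mathcal{X}}}(\theta)$ is not directly available from $CT_{\mathcal{W}\mathcal{X}}$ and generically differs between $\theta_0$ and $\theta_1$, so one cannot simply invert the stacked matrix $\bigl[\,T_{i\bar{\mathcal{X}}}^\top \;\; T_{\mathcal{D}\bar{\mathcal{X}}}^\top\,\bigr]^\top$ to read off $G_{ji}$ the way one would under full measurement. The resolution encoded in Conditions~\eqref{partial1:con1}-\eqref{partial1:con2} is that the indirect measurement through $\bar{\mathcal{C}}$ need not recover $T_{\mathcal{D}\bar{\mathcal{X}}}$ itself, only its row span, and the row-span argument above shows this suffices.
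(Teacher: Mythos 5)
Your proof follows the paper's argument essentially verbatim: the same $j$-th row of $(I-G)T_{\mathcal{W}\mathcal{X}}=X$ with known terms moved to the right, the same two applications of Lemma~\ref{lemma:K} to factor $T_{\mathcal{N}^-_j\setminus\{w_i\},\bar{\mathcal{X}}}$ and $T_{\bar{\mathcal{C}}\bar{\mathcal{X}}}$ through $\mathcal{D}$ via $K_1$ and $K_2$, and the same rank argument from conditions~(1) and~(2) (your explicit row-span contradiction is exactly what the paper delegates to a cited lemma). The only presentational difference is that by placing $T_{j\bar{\mathcal{X}}}$ on the known right-hand side from the outset you absorb the paper's separate case $w_j\in\mathcal{D}$ into a single uniform argument.
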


The above result is visualized in Fig.~\ref{fig:FirstTheorem} for the special case where $w_j \notin \bar{\mathcal{C}}$, for simplicity. It shows that to identify $G_{ji}$, instead of measuring and exciting all the inputs of the MISO subsystem that contains $G_{ji}$, we only need to excite and measure the signals in $\mathcal{D} \cup \{w_i\}$ indirectly. The difficulty in applying the above result may arise from the need to search for the subsets $\bar{\mathcal{X}}$ and $\bar{\mathcal{C}}$, which, however, cannot be avoided due to the coupling between the excitation signals and the measured signals that are relevant to identifiability of $G_{ji}$.

Compared to \cite[Theorem~4]{shi2020generic} which states that the signals in $\{w_i\} \cup \mathcal{D}$ need to be excited when all the internal signals are measured, the above result is a generalization which only requires the signals in $\{w_i, w_j\} \cup \mathcal{D}$ to be measured. In addition, $\mathcal{D}$ can also be measured indirectly as in condition~\eqref{partial1:con2}, i.e. the signals in $\mathcal{D}$ are not measured but have vertex disjoint paths to measured signals in $\bar{\mathcal{C}}$. 

This indirect measurement of $\mathcal{D}$ can be shown to appear also in the network identification method of \cite{ramaswamyCDC2019}. For the consistent estimate of $G_{ji}$, the method requires the indirect measurement of signals that block the so-called parallel paths from $w_i$ to $w_j$ and the loops around $w_j$, while these signals actually coincide with $\mathcal{D}$ as shown in \cite{shi2020generic}. Thus, the considered experimental setup in \cite{ramaswamyCDC2019} matches the one in Theorem~\ref{theorem:partial}. 

\begin{remark} \label{remark:globalPath}
Based on the connection between the unique set of vertex disjoint paths and the global rank of transfer matrices as in Lemma~\ref{lemma:RankPath}, Theorem~\ref{theorem:partial} can be modified to address global identifiability by considering $\bar{b}_{\bar{\mathcal{X}} \to \{w_i\} \cup \mathcal{D}}$ and $\bar{b}_{\mathcal{D} \to \bar{\mathcal{C}}}$ instead.
\end{remark}

Theorem~\ref{theorem:partial} has a potential application for signal and sensor allocation, as it explicitly states that signals in $\{w_i\} \cup \mathcal{D}$ should be excited or indirectly excited as in condition~\eqref{partial1:con1}, and the signals in $\mathcal{D}$ should be measured or indirectly measured as in condition~\eqref{partial1:con2}. However, it can be difficult to perform an identifiability test for a given model set as one needs to search for a disconnecting set. In order to better facilitate such an analysis step, an equivalent path-based condition of Theorem~\ref{theorem:partial} is developed.

\begin{proposition} \label{proposition:partialPath}
For a model set $\mathcal{M}$ that satisfies Assumptions~\ref{ass:fullrankPhi}, \ref{ass:InPara}, \ref{ass:FeedThrough}, \ref{ass:fixEntry} with its graph, $G_{ji}(q,\theta)$ is generically identifiable in $\mathcal{M}$ from $(w_\mathcal{C},r)$ if for some $\bar{\mathcal{X}} \subseteq \mathcal{X}_j$ and $\bar{\mathcal{C}} \subseteq \mathcal{C} \setminus \{w_i\}$, it holds that
\begin{enumerate}
\item \label{partial1Path:con1} $b_{\bar{\mathcal{X}} \to  \mathcal{N}^-_j \cup \bar{\mathcal{C}}} = b_{\bar{\mathcal{X}} \to (\mathcal{N}^-_j\setminus \{w_i\} ) \cup \bar{\mathcal{C}}} + 1$;

\item \label{partial1Path:con2} $b_{\bar{\mathcal{X}} \to (\mathcal{N}^-_j\setminus \{w_i\} ) \cup \bar{\mathcal{C}}} =b_{\bar{\mathcal{X}} \to \bar{\mathcal{C}}}$.

\item $w_i$ and $w_j$ are in $\mathcal{C}$.
\end{enumerate}
\end{proposition}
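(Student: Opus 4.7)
The plan is to reduce Proposition~\ref{proposition:partialPath} to Theorem~\ref{theorem:partial} by exhibiting, under the path-based hypotheses, a disconnecting set $\mathcal{D}^*$ that satisfies the set-based hypotheses of Theorem~\ref{theorem:partial}. The only real tool is Menger's theorem (maximum vertex-disjoint paths equals minimum disconnecting-set cardinality in a directed graph) together with elementary truncation of paths at their first intersection with a given vertex set.

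Let $m := b_{\bar{\mathcal{X}} \to (\mathcal{N}^-_j \setminus \{w_i\}) \cup \bar{\mathcal{C}}}$ and, by Menger's theorem, pick a minimum $\bar{\mathcal{X}} - (\mathcal{N}^-_j \setminus \{w_i\}) \cup \bar{\mathcal{C}}$ disconnecting set $\mathcal{D}^*$ with $|\mathcal{D}^*| = m$. Because $\bar{\mathcal{C}} \subseteq \mathcal{C} \setminus \{w_i\}$ and hence $w_i \notin (\mathcal{N}^-_j \setminus \{w_i\}) \cup \bar{\mathcal{C}}$, the set $\{w_i\} \cup \mathcal{D}^*$ disconnects $\bar{\mathcal{X}}$ from $\mathcal{N}^-_j \cup \bar{\mathcal{C}}$: any path ending at $w_i$ already contains $w_i$, while any path ending elsewhere in $\mathcal{N}^-_j \cup \bar{\mathcal{C}}$ must cross $\mathcal{D}^*$. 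Condition~(1) of the Proposition and Menger together give $m+1 = b_{\bar{\mathcal{X}} \to \mathcal{N}^-_j \cup \bar{\mathcal{C}}} \leq |\{w_i\} \cup \mathcal{D}^*|$, forcing $w_i \notin \mathcal{D}^*$ and $|\{w_i\} \cup \mathcal{D}^*| = m+1$. Truncating the $m+1$ disjoint paths realizing $b_{\bar{\mathcal{X}} \to \mathcal{N}^-_j \cup \bar{\mathcal{C}}}$ at their first hit with $\{w_i\} \cup \mathcal{D}^*$ then yields $m+1$ disjoint paths from $\bar{\mathcal{X}}$ to $\{w_i\} \cup \mathcal{D}^*$, which establishes condition~(1) of Theorem~\ref{theorem:partial}.

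For condition~(2) of the Theorem, I would use condition~(2) of the Proposition to obtain $m$ disjoint paths from $\bar{\mathcal{X}}$ to $\bar{\mathcal{C}}$. Since $\bar{\mathcal{C}} \subseteq (\mathcal{N}^-_j \setminus \{w_i\}) \cup \bar{\mathcal{C}}$, each such path meets $\mathcal{D}^*$, and vertex-disjointness forces their first meetings with $\mathcal{D}^*$ to occur at $m$ distinct vertices, which must exhaust $\mathcal{D}^*$. The suffixes of these paths, from their first meetings with $\mathcal{D}^*$ to their endpoints in $\bar{\mathcal{C}}$, give $m$ disjoint paths from $\mathcal{D}^*$ to $\bar{\mathcal{C}}$; together with the trivial upper bound $b_{\mathcal{D}^* \to \bar{\mathcal{C}}} \leq |\mathcal{D}^*| = m$, this yields $b_{\mathcal{D}^* \to \bar{\mathcal{C}}} = m$. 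Condition~(3) is identical in the two statements, so Theorem~\ref{theorem:partial} applies with $\mathcal{D} = \mathcal{D}^*$ and produces generic identifiability of $G_{ji}$.

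The only genuinely delicate point is verifying that the two truncation operations preserve vertex-disjointness and produce paths with precisely the endpoint sets required by Theorem~\ref{theorem:partial}; this amounts to standard max-flow / min-cut bookkeeping once one observes that $w_i$ cannot lie in $\bar{\mathcal{C}}$. No new parametric or structural ingredient beyond Theorem~\ref{theorem:partial} and Menger's theorem is needed.
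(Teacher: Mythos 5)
Your proposal is correct and follows essentially the same route as the paper's own proof: choose $\mathcal{D}$ to be a minimum $\bar{\mathcal{X}} - (\mathcal{N}^-_j\setminus \{w_i\} ) \cup \bar{\mathcal{C}}$ disconnecting set, invoke Menger's theorem to convert the path-count hypotheses into the two conditions of Theorem~\ref{theorem:partial}, and note that condition~(2) of the Proposition makes $\mathcal{D}$ a minimum $\bar{\mathcal{X}}-\bar{\mathcal{C}}$ disconnecting set so that $b_{\mathcal{D} \to \bar{\mathcal{C}}} = |\mathcal{D}|$. The only difference is that you explicitly carry out the path-truncation bookkeeping that the paper delegates to a cited lemma, and you prove only the implication actually needed rather than the full equivalence the paper states.
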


Compared to Theorem~\ref{theorem:partial}, the above result avoids the search for a disconnecting set and thus is easier for analyzing identifiability. However, it is less informative than Theorem~\ref{theorem:partial} since it does not specify explicitly where to allocate excitation signals and sensors for single module identifiability.  

The conditions in this subsection are illustrated in the following example.

\begin{example}
Consider a model set $\mathcal{M}$ in Fig.~\ref{fig:exam1} with $G_{21}$ of interest and the measured internal signals $\mathcal{C}= \{w_1,w_2,w_3, w_6\}$. It can be found that $\mathcal{N}_2^-=\{w_1,w_4\}$, which does not include the in-neighbor $w_3$ of $w_2$ since $w_3$ is measured and has a known edge to $w_2$. In addition, we have $\mathcal{X}_2=\{\tilde{e_1},r_4,r_5\}$ because these external signals do not have an unknown edge to $w_2$. The goal is then to verify generic and global identifiability of $G_{21}$ using the graphical conditions.

\begin{figure}[h]
\centering
\includegraphics[scale=0.35]{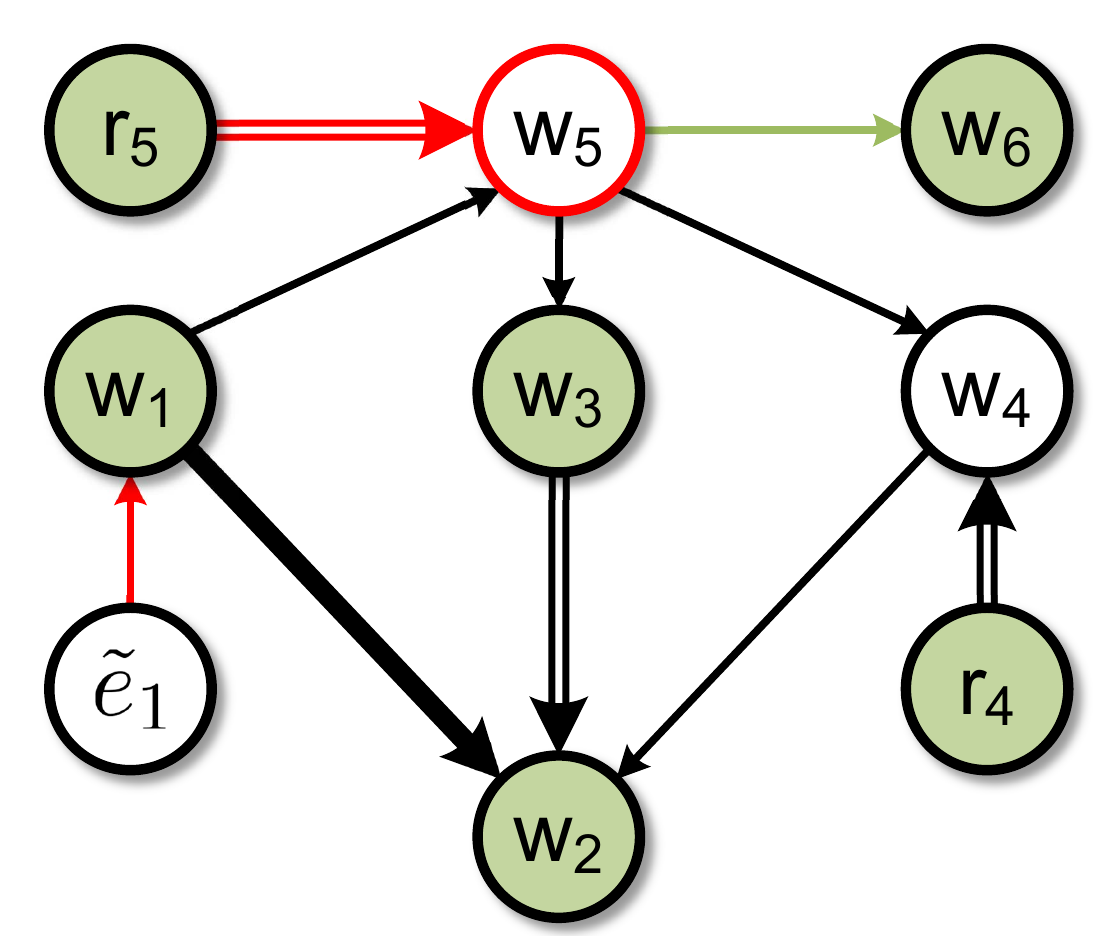}
\caption{Identifiable $G_{21}$ (thick edge) with only green internal signals measured and one known module $G_{23}$ (double-line edge).}
\label{fig:exam1}
\end{figure}

By taking $\bar{\mathcal{X}} = \{\tilde{e}_1,r_5\}$ and $\bar{\mathcal{C}} = \{w_6\}$, it holds that $\{w_5\} = \mathcal{D}$ is a set that disconnects $\bar{\mathcal{X}}$ from $(\mathcal{N}_2^- \setminus\{w_1\})\cup \bar{\mathcal{C}} = \{w_4,w_6\} $, as indicated by a red vertex in Fig.~\ref{fig:exam1}. Thus, condition~\eqref{partial1:con1} of Theorem~\ref{theorem:partial} is satisfied as there are two vertex disjoint paths $\tilde{e}_1 \to w_1$ and $r_5 \to w_5$ indicated by the red arrows in Fig.\ref{fig:exam1}. In addition, condition~\eqref{partial1:con2} also holds because of the green path $w_5 \to w_6$. Then based on Theorem~\ref{theorem:partial}, $G_{21}$ is generically identifiable. The conditions in Proposition~\ref{proposition:partialPath} can be verified similarly with the chosen $\bar{\mathcal{X}}$ and $\bar{\mathcal{C}}$.

If the transfer functions are non-zero in all models of the model set, $G_{21}$ is also globally identifiable as the set of maximum number of vertex disjoint paths from $\{\tilde{e},r_5\}$ to $\{w_1,w_5\}$ and the one from $w_5$ to $w_6$ are unique.
\end{example}

Next, we extend the result in Proposition~\ref{proposition:partialPath} from a single module to a subnetwork, i.e. a subset of in-coming modules of $w_j$.

\begin{corollary} \label{corollary:partialPathSubnetwork} 
For a model set $\mathcal{M}$ that satisfies Assumptions~\ref{ass:fullrankPhi}, \ref{ass:InPara}, \ref{ass:FeedThrough}, \ref{ass:fixEntry} with its graph, let $G_{j \mathcal{N}^\star}(q,\theta)$, with any $\mathcal{N}^\star \subseteq \mathcal{N}_j^- $, denote a vector of unknown modules from the $j$th row of $G(q,\theta)$. $G_{j\mathcal{N}^\star}(q,\theta)$ is generically identifiable in $\mathcal{M}$ from $(w_\mathcal{C},r)$ if for some $\bar{\mathcal{X}} \subseteq \mathcal{X}_j$ and $\bar{\mathcal{C}} \subseteq \mathcal{C} \setminus \mathcal{N}^\star $, it holds that
\begin{enumerate}
\item $b_{\bar{\mathcal{X}} \to  \mathcal{N}^-_j \cup \bar{\mathcal{C}}} = b_{\bar{\mathcal{X}} \to (\mathcal{N}^-_j\setminus \mathcal{N}^\star)  \cup \bar{\mathcal{C}}} + b_{\bar{\mathcal{X}} \to \mathcal{N}^\star}$ and \\ $b_{\bar{\mathcal{X}} \to \mathcal{N}^\star} = |\mathcal{N}^\star |$;

\item  $b_{\bar{\mathcal{X}} \to (\mathcal{N}^-_j\setminus \mathcal{N}^\star) \cup \bar{\mathcal{C}}} =b_{\bar{\mathcal{X}} \to \bar{\mathcal{C}}}$.

\item $\mathcal{N}^\star$ and $w_j$ are in $\mathcal{C}$.
\end{enumerate}
\end{corollary}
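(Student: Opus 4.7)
The plan is to extend the argument behind Theorem~\ref{theorem:partial} and Proposition~\ref{proposition:partialPath} from a single scalar module to the vector of $|\mathcal{N}^\star|$ unknown modules $G_{j\mathcal{N}^\star}$ in the $j$th row of $G$. First I would translate the path-based conditions into a disconnecting-set form: by Menger's theorem applied to condition~(2), there is a set $\mathcal{D}$ disconnecting $\bar{\mathcal{X}}$ from $(\mathcal{N}^-_j\setminus\mathcal{N}^\star)\cup\bar{\mathcal{C}}$ with $|\mathcal{D}|=b_{\bar{\mathcal{X}}\to\bar{\mathcal{C}}}$; minimality forces $b_{\bar{\mathcal{X}}\to\mathcal{D}}=b_{\mathcal{D}\to\bar{\mathcal{C}}}=|\mathcal{D}|$. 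Combining this with condition~(1) and the fact that $\bar{\mathcal{C}}\cap\mathcal{N}^\star=\emptyset$, a vertex-disjoint-path counting argument shows $\mathcal{D}\cap\mathcal{N}^\star=\emptyset$ and yields the key rank identity $b_{\bar{\mathcal{X}}\to\mathcal{N}^\star\cup\bar{\mathcal{C}}}=|\mathcal{N}^\star|+b_{\bar{\mathcal{X}}\to\bar{\mathcal{C}}}$.

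Next, for any $\theta_0,\theta_1\in\Theta$ yielding the same $CT_{\mathcal{W}\mathcal{X}}$, I would take the $j$th row of $(I-G)T_{\mathcal{W}\mathcal{X}}=X$ restricted to columns in $\bar{\mathcal{X}}\subseteq\mathcal{X}_j$. Because $X_{j,\bar{\mathcal{X}}}$ is $\theta$-independent and both $w_j$, $\mathcal{N}^\star$, and the measured known-module in-neighbors of $w_j$ lie in $\mathcal{C}$, all $\theta$-independent contributions collect into a constant right-hand side $B$, leaving
\[
G_{j,\mathcal{N}^\star}(\theta)\,T_{\mathcal{N}^\star\bar{\mathcal{X}}}+G_{j,\mathcal{N}^-_j\setminus\mathcal{N}^\star}(\theta)\,T_{(\mathcal{N}^-_j\setminus\mathcal{N}^\star)\bar{\mathcal{X}}}(\theta)=B.
\]
Applying Lemma~\ref{lemma:K} twice with the disconnector $\mathcal{D}$ produces factorizations $T_{(\mathcal{N}^-_j\setminus\mathcal{N}^\star)\bar{\mathcal{X}}}(\theta)=K_1(\theta)T_{\mathcal{D}\bar{\mathcal{X}}}(\theta)$ and $T_{\bar{\mathcal{C}}\bar{\mathcal{X}}}(\theta)=K_2(\theta)T_{\mathcal{D}\bar{\mathcal{X}}}(\theta)$, where $K_2(\theta)$ is generically left-invertible thanks to $b_{\mathcal{D}\to\bar{\mathcal{C}}}=|\mathcal{D}|$. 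Substituting $T_{\mathcal{D}\bar{\mathcal{X}}}(\theta)=K_2^{\dagger}(\theta)T_{\bar{\mathcal{C}}\bar{\mathcal{X}}}$ recasts the identity as
\[
\begin{bmatrix}G_{j,\mathcal{N}^\star}(\theta) & G_{j,\mathcal{N}^-_j\setminus\mathcal{N}^\star}(\theta)K_1(\theta)K_2^{\dagger}(\theta)\end{bmatrix}\begin{bmatrix}T_{\mathcal{N}^\star\bar{\mathcal{X}}}\\ T_{\bar{\mathcal{C}}\bar{\mathcal{X}}}\end{bmatrix}=B,
\]
in which the stacked matrix $M$ is $\theta$-independent since $\mathcal{N}^\star\cup\bar{\mathcal{C}}\subseteq\mathcal{C}$.

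Subtracting the equations at $\theta_0$ and $\theta_1$ yields $[\Xi_1~\Xi_2]\,M=0$ with $\Xi_1=G_{j,\mathcal{N}^\star}(\theta_0)-G_{j,\mathcal{N}^\star}(\theta_1)$. By Step~1 and Lemma~\ref{lemma:RankPath}, $M$ has generic rank exactly $|\mathcal{N}^\star|+b_{\bar{\mathcal{X}}\to\bar{\mathcal{C}}}=|\mathcal{N}^\star|+\mathrm{rank}(T_{\bar{\mathcal{C}}\bar{\mathcal{X}}})$; together with $\mathrm{rank}(T_{\mathcal{N}^\star\bar{\mathcal{X}}})=|\mathcal{N}^\star|$ (from $b_{\bar{\mathcal{X}}\to\mathcal{N}^\star}=|\mathcal{N}^\star|$), this forces the row spaces of $T_{\mathcal{N}^\star\bar{\mathcal{X}}}$ and $T_{\bar{\mathcal{C}}\bar{\mathcal{X}}}$ to intersect only trivially. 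Consequently, any left-annihilator $[a~b]$ of $M$ must have $a=0$, so $\Xi_1=0$ and $G_{j\mathcal{N}^\star}$ is generically identifiable. The main obstacle is Step~1: because the given hypotheses do \emph{not} imply that $M$ has full row rank, one cannot invoke generic rank directly; instead, the sharp rank-increase identity $\mathrm{rank}(M)=|\mathcal{N}^\star|+\mathrm{rank}(T_{\bar{\mathcal{C}}\bar{\mathcal{X}}})$ must be extracted from the joint path-count conditions by careful vertex-disjoint-path bookkeeping, and it is precisely this identity that enables the simultaneous identifiability of all $|\mathcal{N}^\star|$ modules rather than a module-by-module argument.
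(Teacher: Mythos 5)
Your proposal is correct and follows essentially the same route as the paper: the paper's own proof is a two-line reference to the proof of Theorem~\ref{theorem:partial}, replacing $G_{ji}$ by $G_{j\mathcal{N}^\star}$ and $T_{i\bar{\mathcal{X}}}$ by $T_{\mathcal{N}^\star\bar{\mathcal{X}}}$ in \eqref{eq:proofPar1} and invoking the rank condition implied by the first path-based condition. You simply carry out that substitution explicitly, and your added derivation of the rank-increase identity $b_{\bar{\mathcal{X}}\to\mathcal{N}^\star\cup\bar{\mathcal{C}}}=|\mathcal{N}^\star|+b_{\bar{\mathcal{X}}\to\bar{\mathcal{C}}}$ from conditions (1)--(2) is a valid filling-in of a step the paper leaves implicit.
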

\begin{proof}
The proof is analogous to the proof of Proposition~\ref{proposition:partialPath} and can be based on the proof of Theorem~\ref{theorem:partial}. In this case, we only need to replace $G_{ji}$ with $G_{j \mathcal{N}^\star}$ and $T_{i \bar{\mathcal{X}}}$ with $T_{\mathcal{N}^\star \bar{\mathcal{X}}}$ in \eqref{eq:proofPar1}. Then the rank condition implied by the first path-based condition leads to a unique solution for $G_{j \mathcal{N}^\star}$.
\end{proof}

Corollary~\ref{corollary:partialPathSubnetwork} is related to \cite[Theorem~IV.4]{bazanella2019network} which also specifies sufficient conditions for generic identifiability of $G_{ji}$, in a setting without known non-zero modules and with only $r$ signals as excitation sources for identifiability analysis. In addition, the theorem not only contains a graphical condition, but also requires the prior knowledge for certain submatrices of $T_{\mathcal{W}\mathcal{X}}$. When all non-zero modules are unknown and only $r$ signals are considered as excitation sources for identifiability analysis, the graphical condition there is equivalent to the first condition in Corollary~\ref{corollary:partialPathSubnetwork}, while the corollary further specifies the graphical conditions, under which the required submatrices of \cite[Theorem~IV.4]{bazanella2019network} can be obtained. 

\subsection{Excited input and output} \label{sec:ExciInAndOut}
In the previous section, it is assumed that both the input and the output of a module are measured, which may not be feasible in some practical situations. This motivates us to consider the situation where the input or the output can be unmeasured, with the cost that they are excited.

In this case, instead of starting with the equation $(I-G)T_{\mathcal{W}\mathcal{X}}=X$ as in \eqref{eq:Measure1}, we analyze identifiability using the $i$-th column of $C=T_{\mathcal{C}\mathcal{W}}(I-G)$, obtained from the relation $T_{\mathcal{C}\mathcal{W}} = CT$. Unlike \eqref{eq:Measure1} which treats $G_{ji}$ in the corresponding MISO subsystem, we take a dual perspective now and analyze $G_{ji}$ in the corresponding SIMO subsystem, as shown in Fig.~\ref{fig:2ndTheorem} where the two blue blocks denote the module $G_{ji}$ and the other out-going modules of $w_i$. Thus, a dual result of Lemma~\ref{lemma:K} can be obtained analogously.

\begin{lemma} \label{lemma:Kdual} 
For any network model set $\mathcal{M}$ that satisfies Assumptions~\ref{ass:InPara}, \ref{ass:fixEntry} with a disconnecting set $\mathcal{D} $ from any $\bar{\mathcal{X}} \subseteq \mathcal{X}$ to any $\bar{\mathcal{W}} \subseteq \mathcal{W}$, there exists a proper transfer matrix $K(q,\theta)$ such that
\begin{equation}
T_{\bar{\mathcal{W}}  \bar{\mathcal{X}} }(q,\theta) = T_{\bar{\mathcal{W}}  \mathcal{D}}(q,\theta)K(q,\theta). \label{eq:Kdual}
\end{equation}
Additionally, it holds that
\begin{itemize}
\item $K(q,\theta)$ has full row rank generically if $b_{\bar{\mathcal{X}} \to \mathcal{D} } = |\mathcal{D}|$;

\item $K(q,\theta)$ has full row rank globally if $\bar{b}_{\bar{\mathcal{X}}_1 \to \mathcal{D} } = |\mathcal{D}|$ for some $\bar{\mathcal{X}}_1 \subseteq \bar{\mathcal{X}}$.
\end{itemize} 
\end{lemma}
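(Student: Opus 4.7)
The plan is to derive Lemma~\ref{lemma:Kdual} as the dual of Lemma~\ref{lemma:K}, exploiting the symmetry between the factorizations $T_{\bar{\mathcal{W}}\bar{\mathcal{X}}} = K\, T_{\mathcal{D}\bar{\mathcal{X}}}$ (routing $K$ on the output-side of the disconnecting set) and $T_{\bar{\mathcal{W}}\bar{\mathcal{X}}} = T_{\bar{\mathcal{W}}\mathcal{D}}\, K$ (routing $K$ on the input-side). The single combinatorial fact that every directed path from $\bar{\mathcal{X}}$ to $\bar{\mathcal{W}}$ must pass through a vertex of $\mathcal{D}$ supports both decompositions, so I expect essentially no new combinatorics is needed beyond what Lemma~\ref{lemma:K} already encodes.

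First I would consider the reversed graph obtained by transposing $G$ (and, correspondingly, reversing the direction of each edge in $\mathcal{E}$). Under this reversal any $\bar{\mathcal{X}}$-$\bar{\mathcal{W}}$ disconnecting set $\mathcal{D}$ in the original graph becomes a $\bar{\mathcal{W}}$-$\bar{\mathcal{X}}$ disconnecting set in the reversed graph, while vertex-disjoint path counts are invariant. Since $(I-G^\top)^{-1} = T^\top$, Lemma~\ref{lemma:K} applied to the reversed graph gives a proper $\tilde{K}(q,\theta)$ with $T^\top_{\bar{\mathcal{X}}\bar{\mathcal{W}}} = \tilde{K}\, T^\top_{\mathcal{D}\bar{\mathcal{W}}}$; transposing both sides yields $T_{\bar{\mathcal{W}}\bar{\mathcal{X}}} = T_{\bar{\mathcal{W}}\mathcal{D}}\, \tilde{K}^\top$, and setting $K := \tilde{K}^\top$ gives the required factorization. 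Propriety is preserved by transposition.

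For the rank claims, the translation is direct. The condition $b_{\bar{\mathcal{X}} \to \mathcal{D}} = |\mathcal{D}|$ in the original graph is the same as $b_{\mathcal{D} \to \bar{\mathcal{X}}} = |\mathcal{D}|$ in the reversed graph, which by the rank clause of Lemma~\ref{lemma:K} gives full column rank of $\tilde{K}$ generically, i.e.\ full row rank of $K = \tilde{K}^\top$. The global statement uses $\bar{b}$ in place of $b$ in exactly the same way: the uniqueness of the maximum set of vertex-disjoint paths and the nonvanishing of the transfers along them are properties of edges and are carried unchanged by the reversal.

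The main obstacle, if one prefers to bypass duality and give a self-contained proof, will be producing $K$ explicitly and verifying its propriety. I would partition the internal signals according to their relation with $\mathcal{D}$, use the disconnecting-set property to argue that in the network equation $w = Gw + Xu$ the only way mass from inputs in $\bar{\mathcal{X}}$ reaches $\bar{\mathcal{W}}$ is via the $\mathcal{D}$-coordinates, and invoke Assumption~\ref{ass1}(c) to ensure the necessary principal submatrices of $I-G$ are properly invertible. The rank assertions would then follow by applying Lemma~\ref{lemma:RankPath} to $T_{\mathcal{D}\bar{\mathcal{X}}}$ (generic case) or to $T_{\mathcal{D}\bar{\mathcal{X}}_1}$ (global case), since the row rank of $K$ equals $|\mathcal{D}|$ precisely under the stated path-count conditions.
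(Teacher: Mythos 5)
Your proposal is correct and matches the paper's approach: the paper gives no separate proof of this lemma, stating only that it is "a dual result of Lemma~\ref{lemma:K} obtained analogously," which is precisely the graph-reversal/transposition argument you spell out. The only formality worth noting is that in the reversed graph the roles of external and internal vertices swap, so strictly one reruns the construction in the proof of Lemma~\ref{lemma:K} on the transposed system rather than invoking that lemma verbatim --- which your self-contained fallback already covers.
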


In view of analyzing generic identifiability of $G_{ji}$ when its input and output are excited, the above result allows us to find an appropriate disconnecting set $\mathcal{D}$, as shown in Fig.~\ref{fig:2ndTheorem}. It can then be found that generic identifiability of $G_{ji}$ is satisfied, if the signals in $\mathcal{D}$ are either excited or indirectly excited by a set $\bar{\mathcal{X}}$ of external signals and either measured or indirectly measured. This is illustrated in Fig.~\ref{fig:2ndTheorem} and can be formalized as follows.

\begin{figure}[h]
\centering
\includegraphics[scale=0.45]{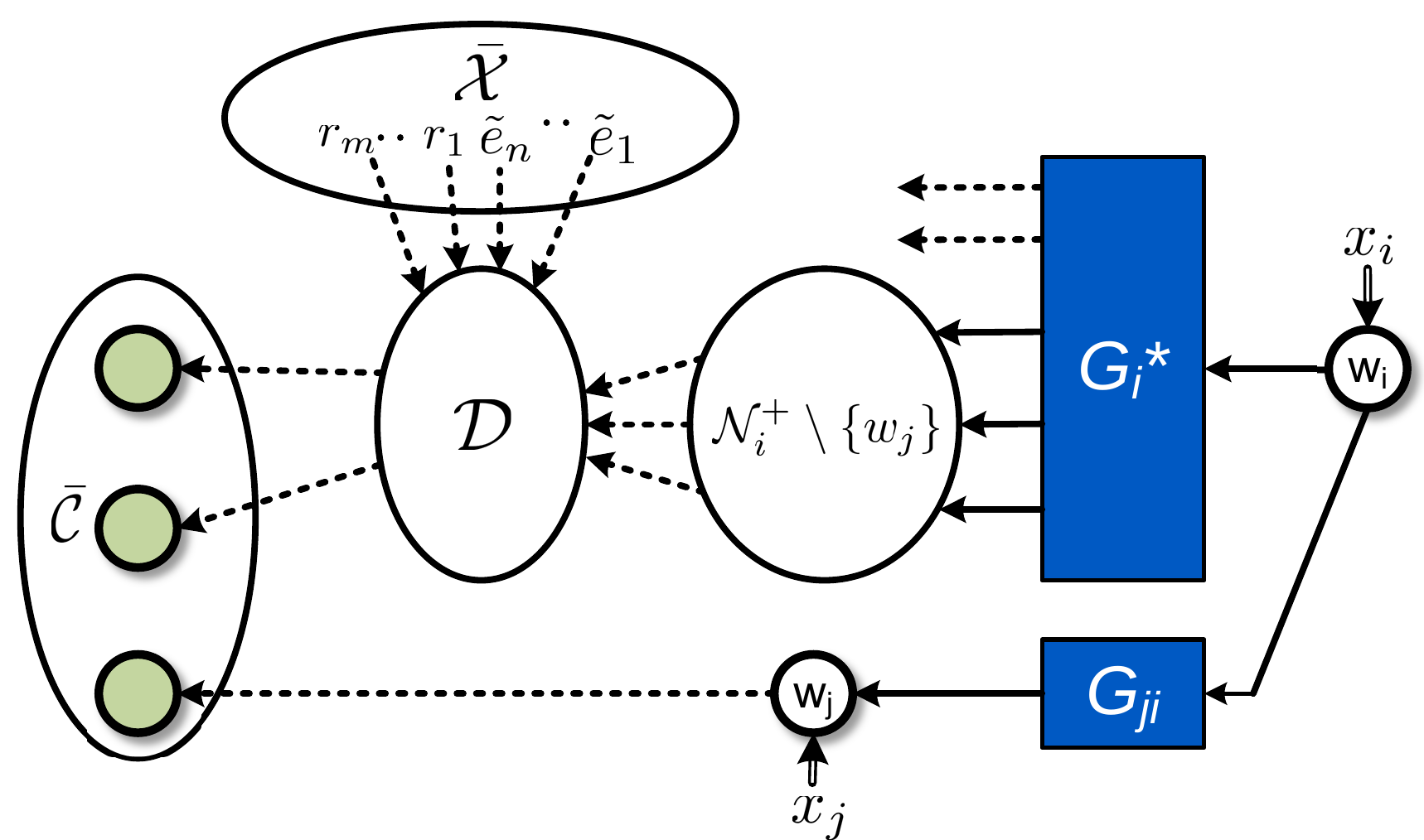}
\caption{Visualization of a situation where $G_{ji}$ is generically identifiable, and its input and output are excited. $G_i^*$ represents the other out-going modules of $w_i$, and the dashed edges represent directed paths.}
\label{fig:2ndTheorem}
\end{figure}

To introduce the identifiability results, we first define an important set related to the input $w_i$ of $G_{ji}$:
\begin{itemize}
\item Let $\mathcal{N}_i^+$ contain the out-neighbors of $w_i$ excluding the ones that satisfy  (i) $w_i$ has a known directed edge to it, \textit{and}  (ii) it is excited by an external signal that has out-degree one with a known out-going edge.
\end{itemize}
Therefore, it holds that $\mathcal{W}_i^+ \subseteq \mathcal{N}_i^+$, and when all modules are unknown, we have $\mathcal{N}_i^+ = \mathcal{W}_i^+$ which simply contains all the out-neighbors of $w_i$. 

Then the following result can be obtained.

\begin{theorem} \label{theorem:partial2}
For a model set $\mathcal{M}$ that satisfies Assumptions~\ref{ass:fullrankPhi}, \ref{ass:InPara}, \ref{ass:FeedThrough}, \ref{ass:fixEntry}, let there be two vertices $x_i$ and $x_j$ in $\mathcal{X}$, which have out-degree $1$ and known directed edges to $w_i$ and $w_j$ respectively. Then  
$G_{ji}(q,\theta)$ is generically identifiable in $\mathcal{M}$ from $(w_\mathcal{C},r)$ if 
for some $\bar{\mathcal{X}} \subseteq \mathcal{X}_j \setminus \{x_j\}$ and $\bar{\mathcal{C}}\subseteq \mathcal{C}$, there exists a $\mathcal{N}_i^+ \setminus \{w_j\} \cup \bar{\mathcal{X}} - \bar{\mathcal{C}}$ disconnecting set $\mathcal{D}$ such that
\begin{enumerate}
\item \label{partial2:con1} $b_{ \{w_j\} \cup \mathcal{D}  \to \bar{\mathcal{C}}} = |\mathcal{D}| + 1;$

\item \label{partial2:con2} $b_{\bar{\mathcal{X}} \to \mathcal{D}}=|\mathcal{D}|$.
\end{enumerate}
\end{theorem}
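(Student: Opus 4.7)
The plan is to establish this as the dual of Theorem~\ref{theorem:partial}, replacing the row-based identity $(I-G)T_{\mathcal{W}\mathcal{X}}=X$ by the column-based identity $T(I-G)=I$. By Proposition~\ref{pro:reforIden}, it suffices to prove that $G_{ji}(q,\theta)$ is uniquely determined by $CT_{\mathcal{W}\mathcal{X}}(q,\theta)$ for almost all $\theta$. Restricting the $i$-th column of $T(I-G)=I$ to the measured rows yields
\begin{equation*}
T_{\mathcal{C}i} - T_{\mathcal{C}\mathcal{W}}\,G_{*i} \;=\; C_{*i}.
\end{equation*}
Because $x_i$ has out-degree one and a known edge into $w_i$, the column $T_{\mathcal{C}i}$ is recoverable (up to a known scalar) from the observed $T_{\mathcal{C}x_i}$; similarly, $T_{\mathcal{C}j}$ is recoverable from $T_{\mathcal{C}x_j}$, which is available because $x_j\notin\bar{\mathcal{X}}$.

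Next I would argue that the unknown contributions on the left collapse to the indices in $\mathcal{N}_i^+$. For any out-neighbor $w_k$ excluded from $\mathcal{N}_i^+$, the two exclusion conditions give (i) a known module $G_{ki}$ and (ii) a dedicated out-degree-one external signal with a known edge into $w_k$, which by the same argument as for $x_i$ makes $T_{\mathcal{C}k}$ known; hence the term $T_{\mathcal{C}k}G_{ki}$ can be absorbed into the left-hand side as a known quantity. After this rearrangement, restricting to the rows in $\bar{\mathcal{C}}$ gives
\begin{equation*}
\tilde T_{\bar{\mathcal{C}} i} - C_{\bar{\mathcal{C}} i}
\;=\; T_{\bar{\mathcal{C}} j}\,G_{ji}
 \;+\; T_{\bar{\mathcal{C}},\,\mathcal{N}_i^+\setminus\{w_j\}}\,G_{\mathcal{N}_i^+\setminus\{w_j\},\,i},
\end{equation*}
where the left-hand side and $T_{\bar{\mathcal{C}} j}$ are known functions of $\theta$.

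I would then apply (a mild variant of) Lemma~\ref{lemma:Kdual} with the disconnecting set $\mathcal{D}$: since $\mathcal{D}$ separates both $\bar{\mathcal{X}}$ and $\mathcal{N}_i^+\setminus\{w_j\}$ from $\bar{\mathcal{C}}$, there exist proper matrices $K_1,K_2$ with $T_{\bar{\mathcal{C}},\,\mathcal{N}_i^+\setminus\{w_j\}}=T_{\bar{\mathcal{C}}\mathcal{D}}K_1$ and $T_{\bar{\mathcal{C}}\bar{\mathcal{X}}}=T_{\bar{\mathcal{C}}\mathcal{D}}K_2$. Condition~(\ref{partial2:con2}) combined with Lemma~\ref{lemma:RankPath} makes $K_2$ generically of full row rank, so $T_{\bar{\mathcal{C}}\mathcal{D}}$ is generically recoverable from the observed $T_{\bar{\mathcal{C}}\bar{\mathcal{X}}}$. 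Substituting this into the previous equation produces
\begin{equation*}
\tilde T_{\bar{\mathcal{C}} i} - C_{\bar{\mathcal{C}} i}
\;=\; \bigl[\,T_{\bar{\mathcal{C}} j}\;\; T_{\bar{\mathcal{C}}\mathcal{D}}\,\bigr]
\begin{bmatrix} G_{ji} \\[2pt] K_1\,G_{\mathcal{N}_i^+\setminus\{w_j\},\,i}\end{bmatrix}.
\end{equation*}
Condition~(\ref{partial2:con1}) together with Lemma~\ref{lemma:RankPath} then ensures that $[\,T_{\bar{\mathcal{C}} j}\;\; T_{\bar{\mathcal{C}}\mathcal{D}}\,]$ has generically full column rank $|\mathcal{D}|+1$, so $G_{ji}$ is uniquely determined; equivalently, implication~\eqref{eq:EquiIDconcept} holds generically.

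The main obstacle I anticipate is the careful bookkeeping around $\mathcal{N}_i^+$: one must verify that every out-neighbor excluded from $\mathcal{N}_i^+$ genuinely contributes a term that can be transferred to the left-hand side, which requires \emph{both} exclusion conditions simultaneously and a precise accounting of which observed columns determine the corresponding $T_{\mathcal{C}k}$. A secondary subtlety is that $\mathcal{D}$ disconnects a mixed set of internal and external vertices from $\bar{\mathcal{C}}$, so the K-factorisation is applied twice, with different left-starting sets but the same $\mathcal{D}$; this is routine but must be stated cleanly so that the same $T_{\bar{\mathcal{C}}\mathcal{D}}$ appears in both factorisations.
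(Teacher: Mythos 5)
Your proof is correct and takes essentially the same route as the paper's, which likewise treats this result as the dual of Theorem~\ref{theorem:partial} by working with the $i$-th column of $C=T_{\mathcal{C}\mathcal{W}}(I-G)$, recovering $T_{\mathcal{C}i}$ and $T_{\mathcal{C}j}$ from the out-degree-one external vertices $x_i$ and $x_j$, and then applying the dual disconnecting-set factorization and rank arguments exactly as you describe. One small correction: $T_{\mathcal{C}x_j}$ is available simply as a column of $CTX$ irrespective of whether $x_j\in\bar{\mathcal{X}}$; the exclusion $x_j\notin\bar{\mathcal{X}}$ is imposed so that the disconnecting set $\mathcal{D}$ is not forced to block the excitation path into $w_j$, which would be incompatible with the first path condition.
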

\begin{proof}
The proof is analogous to Theorem~\ref{theorem:partial} by considering the $i$-th column of $C=T_{\mathcal{C}\mathcal{W}}(I-G)$. Note that the analysis requires the availability of $T_{\mathcal{C}j}$ and $T_{\mathcal{C}i}$, where $T_{\mathcal{C}i}$ can be obtained from $CTX$ when $w_i$ is excited by $x_i$. $x_i$ can be either a measured signal $r_i$ or a noise signal $\tilde{e}_i$: Firstly, $r_i$ always has out-degree one, and it is clear that $T_{\mathcal{C}i}$ is a submatrix of $CTR$. On the other hand, when $\tilde{e}_i$ has out-degree one and a known edge to $w_i$, its corresponding column $H_{\star i}$ in $H$ has only one non-zero entry that is also known. Therefore, $T_{\mathcal{C}i}$ can be obtained from $C T H_{\star i}$ as well. The mapping $T_{\mathcal{C}j}$ can be obtained from $CTX$ similarly due to the existence of $x_j$.
\end{proof}
Theorem~\ref{theorem:partial2} is visualized in Fig~\ref{fig:2ndTheorem}. It shows that to identify $G_{ji}$, instead of measuring and exciting all the outputs of the SIMO subsystem that contains $G_{ji}$, it is sufficient to measure and excite the signals in $\mathcal{D} \cup \{w_j\}$ indirectly.

Theorem~\ref{theorem:partial2} is also a dual result of Theorem~\ref{theorem:partial}: while Theorem~\ref{theorem:partial2} considers the input and the output to be excited, Theorem~\ref{theorem:partial} assumes them to be measured; in addition, the graphical conditions of the two results have a similar dual structure. The result can also be extended to address global identifiability by requiring that the sets of maximum number of vertex disjoint paths are unique. In addition, a path-based formulation of Theorem~\ref{theorem:partial2} can also be obtained, which is analogous to Proposition~\ref{proposition:partialPath}.

\begin{proposition} \label{proposition:partialPath2}
For a model set $\mathcal{M}$ that satisfies Assumptions~\ref{ass:fullrankPhi}, \ref{ass:InPara}, \ref{ass:FeedThrough}, \ref{ass:fixEntry}, let there be two vertices $x_i$ and $x_j$ in $\mathcal{X}$, which have out-degree $1$ and known directed edges to $w_i$ and $w_j$ respectively. Then $G_{ji}(q,\theta)$ is generically identifiable in $\mathcal{M}$ from $(w_\mathcal{C},r)$ if for some $\bar{\mathcal{X}} \subseteq \mathcal{X}_j \setminus \{x_j\}$ and $\bar{\mathcal{C}}\subseteq \mathcal{C}$, it holds that
\begin{enumerate}
\item  $b_{\mathcal{N}_i^+  \cup \bar{\mathcal{X}} \to \bar{\mathcal{C}}} = b_{\mathcal{N}_i^+  \setminus \{w_j\} \cup \bar{\mathcal{X}} \to \bar{\mathcal{C}}} + 1$;

\item  $b_{\mathcal{N}_i^+  \setminus \{w_j\} \cup \bar{\mathcal{X}} \to \bar{\mathcal{C}}} = b_{\bar{\mathcal{X}} \to \bar{\mathcal{C}}}$.
\end{enumerate}
\end{proposition}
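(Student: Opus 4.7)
The plan is to derive Proposition~\ref{proposition:partialPath2} as a corollary of Theorem~\ref{theorem:partial2}, in complete analogy with how Proposition~\ref{proposition:partialPath} follows from Theorem~\ref{theorem:partial}. The goal is to extract from the two path-count equalities a single set $\mathcal{D}$ that simultaneously plays the role of the disconnecting set in Theorem~\ref{theorem:partial2} and satisfies both of its quantitative requirements.

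First I would set $d_1 := b_{\bar{\mathcal{X}} \to \bar{\mathcal{C}}}$; condition~2 of the proposition gives $b_{\mathcal{N}_i^+\setminus\{w_j\} \cup \bar{\mathcal{X}} \to \bar{\mathcal{C}}} = d_1$ as well. Menger's theorem then supplies a $\mathcal{N}_i^+\setminus\{w_j\} \cup \bar{\mathcal{X}} - \bar{\mathcal{C}}$ disconnecting set $\mathcal{D}$ of size $d_1$, and this is the candidate $\mathcal{D}$ for Theorem~\ref{theorem:partial2}. Because $\mathcal{D}$ also disconnects the subset $\bar{\mathcal{X}}$ from $\bar{\mathcal{C}}$ and has the minimum possible cardinality $d_1$ for doing so, Menger's theorem yields $b_{\bar{\mathcal{X}} \to \mathcal{D}} = |\mathcal{D}|$, which is exactly condition~\eqref{partial2:con2} of Theorem~\ref{theorem:partial2}.

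The crux of the argument is establishing condition~\eqref{partial2:con1}, i.e.\ $b_{\{w_j\} \cup \mathcal{D} \to \bar{\mathcal{C}}} = |\mathcal{D}| + 1$. Since $G_{ji}$ is unknown we have $w_j \in \mathcal{W}_i^+ \subseteq \mathcal{N}_i^+$, so that $\mathcal{N}_i^+ \cup \bar{\mathcal{X}} = (\mathcal{N}_i^+ \setminus \{w_j\} \cup \bar{\mathcal{X}}) \cup \{w_j\}$. I would then observe that $\mathcal{D} \cup \{w_j\}$ is a disconnecting set for $\mathcal{N}_i^+ \cup \bar{\mathcal{X}} - \bar{\mathcal{C}}$: any such path avoiding $w_j$ must start in $\mathcal{N}_i^+\setminus\{w_j\} \cup \bar{\mathcal{X}}$ and is therefore cut by $\mathcal{D}$. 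Condition~1 of the proposition combined with Menger then forces $|\mathcal{D} \cup \{w_j\}| \geq d_1 + 1$, which gives both $w_j \notin \mathcal{D}$ and the fact that $\mathcal{D} \cup \{w_j\}$ is a minimum disconnecting set. Invoking Menger once more produces $d_1 + 1$ vertex-disjoint paths from $\mathcal{N}_i^+ \cup \bar{\mathcal{X}}$ to $\bar{\mathcal{C}}$ whose hitting points in $\mathcal{D} \cup \{w_j\}$ are all distinct, and the tail segments of those paths deliver $|\mathcal{D}| + 1$ vertex-disjoint paths from $\{w_j\} \cup \mathcal{D}$ to $\bar{\mathcal{C}}$. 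The trivial upper bound given by the number of source vertices yields equality.

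The main obstacle is this last step: it relies on the standard but slightly non-obvious consequence of Menger's theorem that any minimum source--sink disconnecting set is met exactly once by some maximum family of vertex-disjoint paths, and on cleanly splitting those paths at the cut to produce the tail family. A secondary subtlety is making sure that the set $\mathcal{D}$ furnished by condition~2 does not already contain $w_j$, which is precisely why the proposition is formulated with $\mathcal{N}_i^+ \setminus \{w_j\}$ on the source side of condition~2.
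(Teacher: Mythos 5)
Your proposal is correct and follows exactly the route the paper intends: the paper omits an explicit proof and states only that it is ``analogous to Proposition~\ref{proposition:partialPath}'', whose proof likewise takes a minimum disconnecting set for the source--sink pair in the second path condition and uses Menger's theorem to verify the two conditions of the corresponding disconnecting-set theorem (here Theorem~\ref{theorem:partial2}). Your dualized version of that argument, including the observation that $w_j\in\mathcal{N}_i^+$ because $G_{ji}$ is unknown and the splitting of a maximum path family at the minimum cut, is sound and fills in precisely the details the paper leaves implicit.
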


While Proposition~\ref{proposition:partialPath2} is a dual result of Proposition~\ref{proposition:partialPath}, it allows us to analyze identifiability in a completely different setting, as illustrated in the following example.

\begin{example}
Consider identifiability of $G_{21}$ in $\mathcal{M}$ of Fig.~\ref{fig:Example2} where $\{w_4,w_5,w_7\}$ contains the measured signals, and $\{w_1,w_2\}$ is excited by $\{r_1,\tilde{e}_2\}$. It holds that $\mathcal{N}_1^+ = \{w_2,w_3,w_5\}$, and furthermore, we can choose $\bar{\mathcal{X}} = \{\tilde{e}_5, r_6\}$ and $\bar{\mathcal{C}} = \{w_4,w_5,w_7\}$.

\begin{figure}[h]
\centering
\includegraphics[scale=0.3]{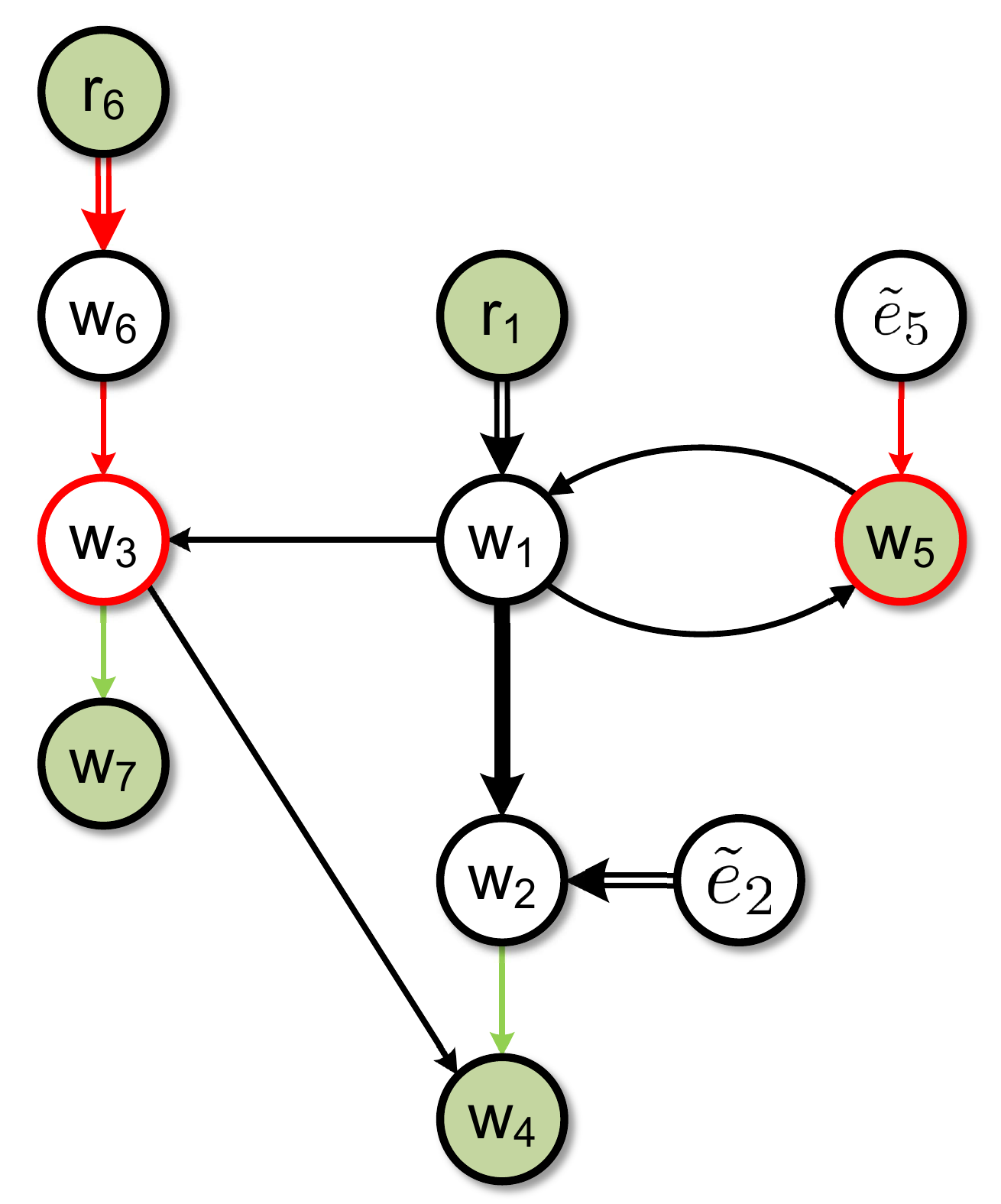}
\caption{Identifiability of $G_{21}$ is considered (thick line) in $\mathcal{M}$ with both input and output unmeasured, while $w_1$ and $w_2$ are excited by $r_1$ and $\tilde{e}_2$ via known edges (double-line edges), respectively.}
\label{fig:Example2}
\end{figure}

It can be found that $\mathcal{D} = \{w_3,w_5\}$ is disconnecting $(\mathcal{N}_1^+  \setminus \{w_2\}) \cup\bar{\mathcal{X}} = \{w_3,w_5,\tilde{e}_5, r_6\}$ from $\bar{\mathcal{C}} $, as indicated by the red vertices in Fig.~\ref{fig:Example2}. In addition, condition~\eqref{partial2:con1} of Theorem~\ref{theorem:partial2} is satisfied since there are three vertex disjoint paths including $w_3 \to w_7$, $w_2 \to w_4$, and $w_5$ itself, indicated by green edges in Fig.~\ref{fig:Example2}. Condition~\eqref{partial2:con2} is then satisfied because there are two vertex disjoint paths indicated by the red edges, including $r_6 \to w_3$ and $\tilde{e}_5 \to w_5$, which concludes generic identifiability of $G_{21}$. Note that $r_6$ excites $w_3$ indirectly through a path.

Since the above sets of maximum number of vertex disjoint paths are unique, $G_{21}$ is also globally identifiable.
\end{example}

\begin{remark} \label{remark:SIMO}
When identifiability of multiple modules $G_{\mathcal{N}^\star i}$ from one SIMO subsystem is considered, where $\mathcal{N}^\star \subseteq \mathcal{N}_i^+$, the results in this subsection can be extended in a straightforward way by considering $\mathcal{N}^\star$ instead of the single output $w_j$, which is similar to the extension in Corollary~\ref{corollary:partialPathSubnetwork}.
\end{remark}

The above remark leads to a graphical result that is related to \cite[Theorem~IV.2]{bazanella2019network}, which also considers identifiability of multiple modules from one SIMO model. However, \cite[Theorem~IV.2]{bazanella2019network} is not fully graphical as it requires the availability of certain mappings, and it does not consider known modules nor the excitation contributed by unmeasured noises.

\section{Sufficient conditions: measured input or output with indirect excitation}\label{sec:Un-inputorUn-output}
Even if Proposition~\ref{proposition:partialPath2} considers the most general measurement scheme, it requires $w_i$ and $w_j$ to be excited. When $w_i$ (or $w_j$) is measured, it is not necessary to excite $w_i$ (or $w_j$), as shown in the last condition of Lemma~\ref{lemma:nece1}. In this section, we develop graphical identifiability conditions for the situation where either $w_i$ or $w_j$ is measured and not excited.

We first consider the case where the input $w_i$ is unmeasured and the output $w_j$ is measured, and the measured $w_j$ may not be excited by $r$ or $\tilde{e}$. As in \eqref{eq:Measure1}, the mapping $T_{i \mathcal{X}}$ is not available from $CTX$ as $w_i$ is unmeasured and thus needs to be represented by available mappings via an appropriately chosen disconnecting set, which can be motivated by the following example:
\begin{example} \label{example:input}
Consider the network model set in Fig.~\ref{fig:examInput}(a) where identifiability of $G_{21}$ is of interest while $w_1$ is unmeasured. Since the mapping from $r_1$ to $w_1$ is known to be $1$, $G_{21}$ can be uniquely recovered from the available external-to-internal mapping from $r_1$ to $w_2$.
\begin{figure}[h]
\begin{minipage}{0.23\textwidth}
\centering
\includegraphics[scale=0.34]{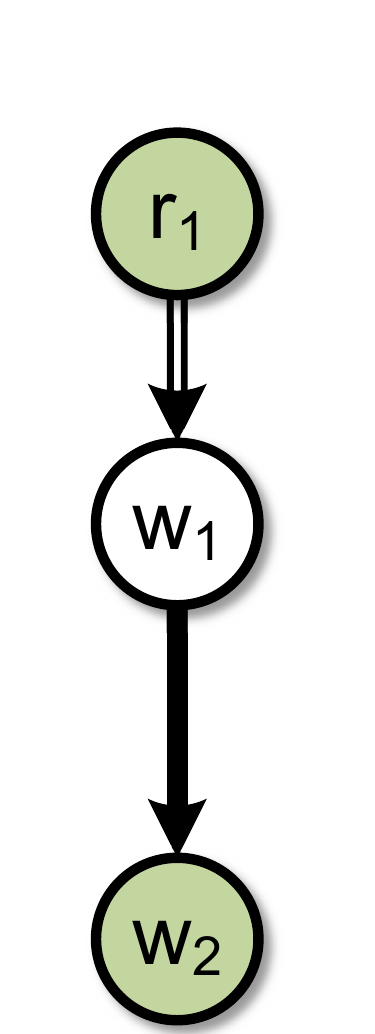}
\\(a)
\end{minipage}
\begin{minipage}{0.23\textwidth}
\centering
\includegraphics[scale=0.34]{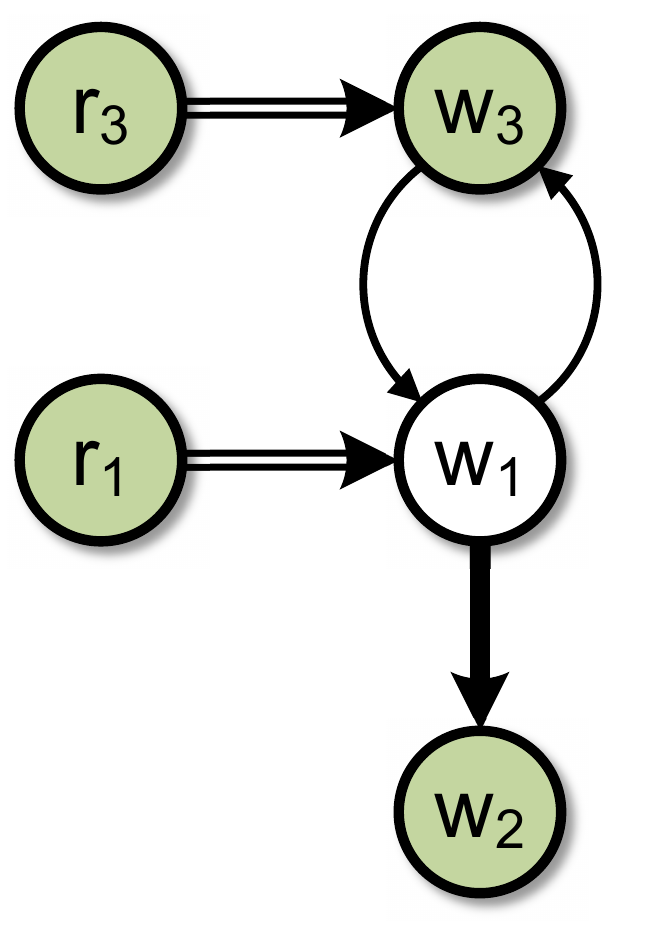}
\\(b)
\end{minipage}
\caption{Two network model sets with $G_{21}$ as the target module and $w_1$ unmeasured. $G_{21}$ is generically and globally identifiable in both cases under measured signals (green).}
\label{fig:examInput}
\end{figure}
In Fig.~\ref{fig:examInput}(b) there is a loop around $w_1$, and the mapping from $r_1$ to $w_2$ is 
$$
T_{w_2r_1} = \frac{G_{21} }{1-G_{13}G_{31}}, 
$$
and thus $G_{21}$ cannot be recovered from $T_{w_2r_1}$ alone as in Fig.~\ref{fig:examInput}(a). However, the loop transfer can be found as 
$$
T_{w_3r_3} =  \frac{1 }{1-G_{13}G_{31}},
$$
and thus 
$
G_{21}=T_{w_2r_1}(T_{w_3r_2})^{-1},
$
where both mappings are available because $w_3$ and $w_2$ are measured. $\hfill{} \blacksquare$
\end{example}

As shown in the above example, it is important to measure and excite the vertices in the loops around the unmeasured input, in order to achieve identifiability of the module under consideration. This observation can be generalized as follows.

\begin{theorem} \label{theorem:UnmeasureInput}
For a model set $\mathcal{M}$ that satisfies Assumptions~\ref{ass:fullrankPhi}, \ref{ass:InPara}, \ref{ass:FeedThrough}, \ref{ass:fixEntry}, let set $\mathcal{N}_i^ *$ contain all the in-neighbors of $w_i$ in $\mathcal{W}$. Suppose that $w_j$ is measured but $w_i$ cannot be measured, and let there be a vertex $x_i$ in $\mathcal{X}$, which has out-degree $1$ and a known edge to $w_i$. Then module $G_{ji}$ is generically identifiable in $\mathcal{M}$ from $(w_\mathcal{C},r)$ if for some $\bar{\mathcal{X}} \subseteq \mathcal{X}_j \setminus \{x_i\}$  and $\bar{\mathcal{C}} \subseteq \mathcal{C}$, there exists a $\bar{\mathcal{X}} \cup \{x_i\}  - \mathcal{N}_i^* \cup (\mathcal{N}_j^- \setminus \{w_i\}  ) \cup \bar{\mathcal{C}}$ disconnecting set $\mathcal{D}$ such that
\begin{enumerate}
\item $b_{\bar{\mathcal{X}} \cup \{x_i\} \to \mathcal{D} \cup \{w_i\} } = 1+ |\mathcal{D}|$;

\item $b_{\mathcal{D} \to \bar{\mathcal{C}}} = |\mathcal{D}|$.
\end{enumerate}
\end{theorem}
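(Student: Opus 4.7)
The plan is to verify the implication in \eqref{eq:EquiIDconcept} by mimicking the strategy of Theorem~\ref{theorem:partial}, but closing the ``loop'' around the unmeasured input $w_i$ with the help of the $i$-th row of $(I-G)T_{\mathcal{W}\mathcal{X}}=X$ and of the known out-degree-one edge $x_i\to w_i$. First I would take the $j$-th row of $(I-G)T_{\mathcal{W}\mathcal{X}}=X$ restricted to the columns $\bar{\mathcal{X}}\cup\{x_i\}$, and move $T_{j,\bar{\mathcal{X}}\cup\{x_i\}}$ (available since $w_j\in\mathcal{C}$), $X_{j,\bar{\mathcal{X}}\cup\{x_i\}}$ (known, because $\bar{\mathcal{X}}\subseteq\mathcal{X}_j$ and $X_{j,x_i}=0$ by the out-degree-one property of $x_i$), and all contributions from known modules $G_{jk}$ with $w_k\notin\mathcal{N}_j^-$ (so $w_k$ is measured) to the right-hand side. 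What remains is
\[
G_{ji}\,T_{i,\bar{\mathcal{X}}\cup\{x_i\}}+G_{j,\mathcal{N}_j^-\setminus\{w_i\}}\,T_{\mathcal{N}_j^-\setminus\{w_i\},\bar{\mathcal{X}}\cup\{x_i\}}=u,
\]
with $u$ derivable from $CT_{\mathcal{W}\mathcal{X}}$.

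Next, to cope with the unavailable $T_{i,\bar{\mathcal{X}}\cup\{x_i\}}$, I would substitute the $i$-th row identity $T_{i,\cdot}=G_{i,\mathcal{N}_i^*}T_{\mathcal{N}_i^*,\cdot}+X_{i,\cdot}$, where $\cdot$ abbreviates $\bar{\mathcal{X}}\cup\{x_i\}$. Crucially, $X_{i,\cdot}$ is a known, parameter-independent binary row vector: $X_{i,x_i}=1$ by hypothesis; each $r_k\in\bar{\mathcal{X}}$ contributes the binary $R_{ik}$; and each $\tilde{e}_k\in\bar{\mathcal{X}}$ contributes zero because the noise model in $\tilde M$ has zero rows on unmeasured signals. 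Applying Lemma~\ref{lemma:K} with $\mathcal{D}$ three times yields proper matrices $K_i,K_j,K_C$ with $T_{\mathcal{N}_i^*,\cdot}=K_iT_{\mathcal{D},\cdot}$, $T_{\mathcal{N}_j^-\setminus\{w_i\},\cdot}=K_jT_{\mathcal{D},\cdot}$, and $T_{\bar{\mathcal{C}},\cdot}=K_CT_{\mathcal{D},\cdot}$; condition~2 and the second bullet of Lemma~\ref{lemma:K} make $K_C$ of generic full column rank. Collecting terms gives
\[
G_{ji}\,X_{i,\cdot}+\Gamma\,T_{\mathcal{D},\cdot}=u,\qquad \Gamma=G_{ji}\,G_{i,\mathcal{N}_i^*}K_i+G_{j,\mathcal{N}_j^-\setminus\{w_i\}}K_j,
\]
and the left-invertibility of $K_C$ lets me rewrite this as $G_{ji}\,X_{i,\cdot}+\tilde\Gamma\,T_{\bar{\mathcal{C}},\cdot}=u$ for a suitable $\tilde\Gamma$, where now $X_{i,\cdot}$, $T_{\bar{\mathcal{C}},\cdot}$, and $u$ are all invariant across any two parameters $\theta_0,\theta_1$ that produce the same $CT_{\mathcal{W}\mathcal{X}}$.

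Finally I would argue uniqueness of $G_{ji}$ from this last equation by a null-space argument. Condition~1 and Lemma~\ref{lemma:RankPath} give $\operatorname{rank}T_{\{w_i\}\cup\mathcal{D},\cdot}=|\mathcal{D}|+1$ generically; the invertible upper-triangular block transformation connecting $[T_{i,\cdot};\,T_{\mathcal{D},\cdot}]$ and $[X_{i,\cdot};\,T_{\mathcal{D},\cdot}]$ transfers this rank, so $X_{i,\cdot}\notin\mathrm{rowspace}(T_{\mathcal{D},\cdot})$ generically. Since $K_C$ has full column rank, $\mathrm{rowspace}(T_{\bar{\mathcal{C}},\cdot})\subseteq\mathrm{rowspace}(T_{\mathcal{D},\cdot})$ with matching dimension $|\mathcal{D}|$ generically, so the two row spaces coincide and $X_{i,\cdot}\notin\mathrm{rowspace}(T_{\bar{\mathcal{C}},\cdot})$. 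Subtracting the derived equation at $\theta_0$ and $\theta_1$ then forces the coefficient on $X_{i,\cdot}$ to vanish, proving $G_{ji}(\theta_0)=G_{ji}(\theta_1)$.

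The main obstacle I anticipate is the indirect-measurement bookkeeping: both $w_i$ and potentially $\mathcal{D}$ are unmeasured, so the argument must chain the $i$-th-row substitution (which introduces the known, structural vector $X_{i,\cdot}$ in place of $T_{i,\cdot}$) with the $K_C$-step (which passes from $T_{\mathcal{D},\cdot}$ to the observable $T_{\bar{\mathcal{C}},\cdot}$) in a way that preserves the rank gap separating $X_{i,\cdot}$ from the nuisance rows and therefore isolates $G_{ji}$ from the nuisance coefficient $\tilde\Gamma$.
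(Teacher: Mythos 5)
Your proof is correct and reaches the same final linear-algebraic reduction as the paper, but it handles the key difficulty --- the unavailability of $T_{i,\bar{\mathcal{X}}\cup\{x_i\}}$ for the unmeasured input --- by a genuinely different decomposition. The paper applies Lemma~\ref{lemma:K} to $T_{i,\bar{\mathcal{X}}\cup\{x_i\}}$ with the augmented disconnecting set $\mathcal{D}\cup\{x_i\}$ (cf.~\eqref{eq:UnInputProof1}) and then must argue, from the subnetwork characterization of $K$ inside the proof of Lemma~\ref{lemma:K}, that because $\mathcal{D}\cup\{x_i\}$ blocks all loops around $w_i$ the $x_i$-entry of that factor is the known gain $\bar H$. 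You instead substitute the $i$-th row of $(I-G)T_{\mathcal{W}\mathcal{X}}=X$, which isolates the known structural row $X_{i,\cdot}$ outright and pushes the unknown part onto $G_{i,\mathcal{N}_i^*}T_{\mathcal{N}_i^*,\cdot}$, which the hypothesis that $\mathcal{D}$ disconnects $\bar{\mathcal{X}}\cup\{x_i\}$ from $\mathcal{N}_i^*$ then lets you fold into the $T_{\mathcal{D},\cdot}$ (and hence $T_{\bar{\mathcal{C}},\cdot}$) nuisance block. The two decompositions are algebraically the same object ($\bar H=X_{i,x_i}$ and $\bar K_i=G_{i,\mathcal{N}_i^*}K_i$), but yours is more self-contained: it does not rely on the internal structure of the $K$ matrix from Lemma~\ref{lemma:K}'s proof, it makes transparent exactly where the ``block all loops around $w_i$'' requirement enters, and --- since $w_j$ is measured so $T_{j,\cdot}$ always moves to the right-hand side --- it dispenses with the paper's separate treatment of the case $w_j\in\mathcal{D}$. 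Your rank-transfer step via the unitriangular map between $[T_{i,\cdot};T_{\mathcal{D},\cdot}]$ and $[X_{i,\cdot};T_{\mathcal{D},\cdot}]$, followed by the row-space argument, correctly replaces the paper's appeal to \cite[Lemma~2]{shi2020generic}. Two cosmetic points only: the generic full column rank of $K_C$ comes from the \emph{first} bullet of Lemma~\ref{lemma:K} (the second bullet is the global version), and what matters is that $X_{i,x_i}$ is \emph{known} rather than equal to $1$ --- though, as you note, since $w_i$ is unmeasured $x_i$ is necessarily an $r$-signal in $\tilde M$, so the entry is indeed $1$.
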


In the above result, the disconnecting set intersects with the paths from $x_i$ to $\mathcal{N}_i^*$, which implies that all loops around $w_i$ are blocked by the disconnecting set $\mathcal{D}$, matching the observation from Example~\ref{example:input}. Therefore, this result is an extension of Theorem~\ref{theorem:partial} to the setting with unmeasured input, by additionally blocking all the loops around the unmeasured input.

In addition, both Theorem~\ref{theorem:partial2} and Theorem~\ref{theorem:UnmeasureInput} can be used to analyze identifiability of $G_{ji}$ with unmeasured input and measured output, while Theorem~\ref{theorem:UnmeasureInput} provides the extra freedom that $w_j$ does not need to be excited.

Then the corresponding path-based formulation of Theorem~\ref{theorem:UnmeasureInput} can also be derived.
\begin{corollary} \label{corollary:UnmeasureInputPath}
For a model set $\mathcal{M}$ that satisfies Assumptions~\ref{ass:fullrankPhi}, \ref{ass:InPara}, \ref{ass:FeedThrough}, \ref{ass:fixEntry} with its graph, suppose that $w_j$ is measured but $w_i$ cannot be measured, and let there be a vertex $x_i$ in $\mathcal{X}$, which has out-degree $1$ and a known edge to $w_i$. Then module $G_{ji}$ is generically identifiable if for some $\bar{\mathcal{X}} \subseteq \mathcal{X}_j \setminus \{x_i\}$ and $\bar{\mathcal{C}} \subseteq \mathcal{C}$, the following conditions hold:
\begin{enumerate}
\item $b_{\bar{\mathcal{X}}\cup \{x_i\} \to \mathcal{N}_i^* \cup \mathcal{N}_j \cup \bar{\mathcal{C}}} = 1+ b_{\bar{\mathcal{X}}\cup \{x_i\} \to \mathcal{N}_i^* \cup(\mathcal{N}_j^- \setminus \{w_i\}  ) \cup \bar{\mathcal{C}}}$;

\item $b_{\bar{\mathcal{X}}\cup \{x_i\} \to \mathcal{N}_i^* \cup(\mathcal{N}_j^- \setminus \{w_i\}  ) \cup \bar{\mathcal{C}}} = b_{\bar{\mathcal{X}} \cup \{x_i\}  \to  \bar{\mathcal{C}}}$.
\end{enumerate}
\end{corollary}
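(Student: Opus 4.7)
The plan is to deduce the disconnecting-set hypotheses of Theorem~\ref{theorem:UnmeasureInput} from the path-based conditions of the corollary, using Menger's theorem as the bridge. To lighten notation, set $\mathcal{A} \triangleq \bar{\mathcal{X}} \cup \{x_i\}$ and $\mathcal{T}_1 \triangleq \mathcal{N}_i^* \cup (\mathcal{N}_j^- \setminus \{w_i\}) \cup \bar{\mathcal{C}}$. Since $G_{ji}$ is parameterized, $w_i \in \mathcal{W}_j^- \subseteq \mathcal{N}_j^-$, and consequently $\mathcal{T}_1 \cup \{w_i\} = \mathcal{N}_i^* \cup \mathcal{N}_j^- \cup \bar{\mathcal{C}}$. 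In this notation, conditions~(1) and~(2) of the corollary read $b_{\mathcal{A} \to \mathcal{T}_1 \cup \{w_i\}} = b_{\mathcal{A} \to \mathcal{T}_1} + 1$ and $b_{\mathcal{A} \to \mathcal{T}_1} = b_{\mathcal{A} \to \bar{\mathcal{C}}}$, respectively.

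I would then choose $\mathcal{D}$ to be any minimum $\mathcal{A}$-to-$\mathcal{T}_1$ disconnecting set; by Menger's theorem $|\mathcal{D}| = b_{\mathcal{A} \to \mathcal{T}_1}$. Since $\bar{\mathcal{C}} \subseteq \mathcal{T}_1$, $\mathcal{D}$ also disconnects $\mathcal{A}$ from $\bar{\mathcal{C}}$, and condition~(2) forces $|\mathcal{D}| = b_{\mathcal{A} \to \bar{\mathcal{C}}}$, so $\mathcal{D}$ is simultaneously a minimum $\mathcal{A}$-to-$\bar{\mathcal{C}}$ cut. Applying Menger to the $\mathcal{A}$-to-$\bar{\mathcal{C}}$ problem, the $|\mathcal{D}|$ vertex-disjoint paths can be routed so that each vertex of $\mathcal{D}$ is traversed exactly once; extracting their $\mathcal{D}$-to-$\bar{\mathcal{C}}$ sub-paths gives $b_{\mathcal{D} \to \bar{\mathcal{C}}} = |\mathcal{D}|$, which is exactly condition~(2) of Theorem~\ref{theorem:UnmeasureInput}.

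For condition~(1) of that theorem, observe that $\mathcal{D} \cup \{w_i\}$ disconnects $\mathcal{A}$ from $\mathcal{T}_1 \cup \{w_i\}$: any path from $\mathcal{A}$ to that enlarged target either reaches $\mathcal{T}_1$, in which case $\mathcal{D}$ cuts it, or terminates at $w_i$, in which case $w_i$ itself does. If $w_i$ were in $\mathcal{D}$, then $|\mathcal{D} \cup \{w_i\}| = |\mathcal{D}|$ would be a cut smaller than the min-cut value $|\mathcal{D}|+1$ dictated by condition~(1), a contradiction. Hence $w_i \notin \mathcal{D}$, the set $\mathcal{D} \cup \{w_i\}$ is a minimum cut of size $|\mathcal{D}|+1$ for the pair $\mathcal{A}$ and $\mathcal{T}_1 \cup \{w_i\}$, and Menger yields $b_{\mathcal{A} \to \mathcal{D} \cup \{w_i\}} = |\mathcal{D}|+1$, which is condition~(1) of Theorem~\ref{theorem:UnmeasureInput}. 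Invoking that theorem then concludes generic identifiability of $G_{ji}$.

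The main obstacle, as I see it, is conceptual rather than computational: one must recognize that a \emph{single} disconnecting set $\mathcal{D}$, chosen with respect to the larger target $\mathcal{T}_1$, automatically plays two roles---a minimum cut to $\bar{\mathcal{C}}$ via condition~(2), and, after adjoining $w_i$, a minimum cut to $\mathcal{T}_1 \cup \{w_i\}$ via condition~(1). Once this double role is spotted, the exclusion $w_i \notin \mathcal{D}$ follows from a size count, and the remaining flow equalities are routine applications of Menger's theorem.
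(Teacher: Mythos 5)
Your argument is correct and follows essentially the same route the paper intends: just as in the proof of Proposition~\ref{proposition:partialPath}, you take a minimum disconnecting set for the larger target, use Menger's theorem to show it is simultaneously a minimum cut to $\bar{\mathcal{C}}$ (giving $b_{\mathcal{D}\to\bar{\mathcal{C}}}=|\mathcal{D}|$) and, after adjoining $w_i$, a minimum cut realizing $b_{\bar{\mathcal{X}}\cup\{x_i\}\to\mathcal{D}\cup\{w_i\}}=|\mathcal{D}|+1$, and then invoke Theorem~\ref{theorem:UnmeasureInput}. Your explicit verification that $w_i\notin\mathcal{D}$ and your reading of the target set in condition~(1) as $\mathcal{N}_i^*\cup\mathcal{N}_j^-\cup\bar{\mathcal{C}}$ are both consistent with the paper.
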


The above result can also be extended trivially to a subset of in-coming modules of $w_j$, which is similar to the extension in Corollary~\ref{corollary:partialPathSubnetwork}.

For the situation where the input $w_i$ is measured while the output $w_j$ is unmeasured, the requirement for direct excitation for $w_i$ in Theorem~\ref{theorem:partial2} can also be relaxed. Here we start with the analysis in a SIMO problem as in Section~\ref{sec:ExciInAndOut}. When the direct excitation for $w_i$ is not present, the mapping $T_{\mathcal{C}i}$ is thus not given by $CT_{\mathcal{W}\mathcal{X}}$. However, based on Lemma~\ref{lemma:Kdual}, we can simply require a disconnecting set $\mathcal{D}$ from $\{w_i\}$ to $\mathcal{C}$, and then $T_{\mathcal{C}i}$ is represented by $T_{\mathcal{C}\mathcal{D}}K$ for some transfer matrix $K$ where $T_{\mathcal{C}\mathcal{D}}$ can be obtained from other mappings when the signals in $\mathcal{D}$ are excited or indirectly excited.

\begin{table*}
\caption{Different synthesis approaches for identifiability of $G_{ji}$.}
\label{table:1}
  \centering
  \begin{tabular}{ |p{1.5cm}|p{3.4cm}|p{3.4cm}|p{3.4cm}|p{3.4cm}|  }
 \hline
  & Case 1: Measured $w_i$ and $w_j$ & Case 2: Excited $w_i$ and $w_j$  &Case 3: unmeasured $w_i$ and measured $w_j$   &  Case 4: measured $w_i$ and unmeasured $w_j$    \\ 
 \hline
 Disconnecting set $\mathcal{D}$ & $\{w_i\} - \mathcal{N}^-_j \setminus \{w_i\}$ with $w_i \notin \mathcal{D}$ & $\mathcal{N}_i^+ \setminus \{w_j\} - \{w_j\}$ with $w_j \notin \mathcal{D}$ & $\{w_i\} - \mathcal{N}_i^* \cup (\mathcal{N}_j^- \setminus \{w_i\}  )$ with $w_i \notin \mathcal{D}$  &$(\mathcal{N}_i^+ \setminus \{w_j\}) \cup \{w_i\} - \{w_i,w_j\}$ with $w_j \notin \mathcal{D}$  \\ 
\hline 
 Excitation allocation & Excite $\mathcal{D} \cup \{w_i\}$ with $r$ & Excite $\mathcal{D} \cup \{w_i,w_j\}$ with $r$ & Excite $\mathcal{D} \cup \{w_i\}$ with $r$ &  Excite $\mathcal{D} \cup \{w_j\}$ with $r$ \\ 
 \hline
  Sensor allocation & Measure $\mathcal{D} \cup \{w_i,w_j\}$ & Measure $\mathcal{D}$, measure or indirectly measure $w_j$&  Measure $\mathcal{D} \cup \{w_j\}$  & Measure $\mathcal{D}$ and indirectly measure $w_j$\\ 
 \hline
\end{tabular}
\end{table*}

\begin{theorem} \label{theorem:UnmeasureOutput}
For a model set $\mathcal{M}$ that satisfies Assumptions~\ref{ass:fullrankPhi}, \ref{ass:InPara}, \ref{ass:FeedThrough}, \ref{ass:fixEntry}, suppose that for module $G_{ji}$, its input $w_i$ is measured but output $w_j$ is unmeasured, and let there be a vertex $x_j$ in $\mathcal{X}$, which has out-degree $1$ and a known edge to $w_j$. Then $G_{ji}$ is generically identifiable in $\mathcal{M}$ from $(w_\mathcal{C},r)$ if for some $\bar{\mathcal{X}} \subseteq \mathcal{X}_j \setminus \{x_j\}$ and $\bar{\mathcal{C}}\subseteq \mathcal{C}$ with $w_i \in \bar{\mathcal{C}}$, there exists a $( \mathcal{N}_i^+ \setminus \{w_j\}) \cup \bar{\mathcal{X}} \cup \{w_i\} - \bar{\mathcal{C}}$ disconnecting set $\mathcal{D}$ such that
\begin{enumerate}
\item  $b_{ \{w_j\} \cup \mathcal{D}  \to \bar{\mathcal{C}}} = |\mathcal{D}| + 1;$

\item  $b_{\bar{\mathcal{X}} \to \mathcal{D}}=|\mathcal{D}|$.
\end{enumerate}
\end{theorem}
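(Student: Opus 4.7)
The plan is to adapt the SIMO-style analysis that underlies Theorem~\ref{theorem:partial2} to the setting where the input $w_i$ is measured but the output $w_j$ is only accessed through the dedicated external signal $x_j$. I start from the identity $C = T_{\mathcal{C}\mathcal{W}}(I-G)$ and take its $i$th column, then restrict to the rows indexed by $\bar{\mathcal{C}}$ to obtain
\[
T_{\bar{\mathcal{C}}j}\, G_{ji} + T_{\bar{\mathcal{C}},\,\mathcal{N}_i^+\setminus\{w_j\}}\, G_{\mathcal{N}_i^+\setminus\{w_j\},\,i} = T_{\bar{\mathcal{C}}i} - \bar{C}_{\star i},
\]
where $\bar{C}$ is the selection matrix onto $\bar{\mathcal{C}}$. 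In this equation the two "data-visible" objects are $T_{\bar{\mathcal{C}}j}$, which is obtainable from the column of $CT_{\mathcal{W}\mathcal{X}}$ indexed by $x_j$ because $x_j$ has out-degree one and a known edge to $w_j$, and the constant $\bar{C}_{\star i}$; the two internal-to-internal blocks $T_{\bar{\mathcal{C}},\,\mathcal{N}_i^+\setminus\{w_j\}}$ and $T_{\bar{\mathcal{C}}i}$ are, a priori, model-dependent.

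The next step is to eliminate these two unknown blocks via the disconnecting set $\mathcal{D}$. I would use an internal-to-internal extension of Lemma~\ref{lemma:Kdual}: whenever $\mathcal{D}$ disconnects $\bar{\mathcal{W}}_2\subseteq\mathcal{W}$ from $\bar{\mathcal{W}}_1\subseteq\mathcal{W}$, there is a proper $K$ with $T_{\bar{\mathcal{W}}_1\bar{\mathcal{W}}_2} = T_{\bar{\mathcal{W}}_1\mathcal{D}}K$, proved by exactly the same path-rewriting argument. Applying this with $\bar{\mathcal{W}}_1=\bar{\mathcal{C}}$ and $\bar{\mathcal{W}}_2=(\mathcal{N}_i^+\setminus\{w_j\})\cup\{w_i\}$, which is blocked from $\bar{\mathcal{C}}$ by $\mathcal{D}$ by hypothesis, I factor $T_{\bar{\mathcal{C}},\,\mathcal{N}_i^+\setminus\{w_j\}}=T_{\bar{\mathcal{C}}\mathcal{D}}K_1$ and $T_{\bar{\mathcal{C}}i}=T_{\bar{\mathcal{C}}\mathcal{D}}k_2$. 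Substituting into the main equation rewrites it as
\[
\bigl[\,T_{\bar{\mathcal{C}}j}\ \ T_{\bar{\mathcal{C}}\mathcal{D}}\,\bigr]\begin{bmatrix} G_{ji}\\ \beta \end{bmatrix} = -\,\bar{C}_{\star i},\qquad \beta \triangleq K_1 G_{\mathcal{N}_i^+\setminus\{w_j\},\,i} - k_2.
\]

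To conclude generic identifiability I compare two parameters $\theta_0,\theta_1\in\Theta$ for which $CT_{\mathcal{W}\mathcal{X}}(\theta_0) = CT_{\mathcal{W}\mathcal{X}}(\theta_1)$. Since $\bar{\mathcal{X}}\subseteq\mathcal{X}_j\setminus\{x_j\}$ is also disconnected from $\bar{\mathcal{C}}$ by $\mathcal{D}$, Lemma~\ref{lemma:Kdual} yields $T_{\bar{\mathcal{C}}\bar{\mathcal{X}}} = T_{\bar{\mathcal{C}}\mathcal{D}}K_x$ with $K_x$ generically of full row rank $|\mathcal{D}|$ by condition~(2). Because $T_{\bar{\mathcal{C}}\bar{\mathcal{X}}}$ is identical in the two models, the column spans of $T_{\bar{\mathcal{C}}\mathcal{D}}^{(0)}$ and $T_{\bar{\mathcal{C}}\mathcal{D}}^{(1)}$ coincide, so $T_{\bar{\mathcal{C}}\mathcal{D}}^{(0)} = T_{\bar{\mathcal{C}}\mathcal{D}}^{(1)}Q$ for some invertible $|\mathcal{D}|\times|\mathcal{D}|$ matrix $Q$; invertibility is ensured because condition~(1) together with Lemma~\ref{lemma:RankPath} makes $T_{\bar{\mathcal{C}}\mathcal{D}}$ generically of full column rank. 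Subtracting the two instances of the displayed equation and using $T_{\bar{\mathcal{C}}j}^{(0)}=T_{\bar{\mathcal{C}}j}^{(1)}$ gives
\[
\bigl[\,T_{\bar{\mathcal{C}}j}^{(1)}\ \ T_{\bar{\mathcal{C}}\mathcal{D}}^{(1)}\,\bigr]\begin{bmatrix} G_{ji}^{(0)} - G_{ji}^{(1)} \\ Q\beta^{(0)}-\beta^{(1)} \end{bmatrix} = 0,
\]
and the generic rank $|\mathcal{D}|+1$ of the coefficient matrix---delivered by condition~(1) via Lemma~\ref{lemma:RankPath}---forces $G_{ji}^{(0)}=G_{ji}^{(1)}$, establishing \eqref{eq:EquiIDconcept} for $G_{ji}$ by Proposition~\ref{pro:reforIden}.

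The main obstacle will be the internal-to-internal factorization step: Lemma~\ref{lemma:Kdual} is stated for $\bar{\mathcal{X}}\subseteq\mathcal{X}$ and must be lifted, with genericity preserved under Assumption~\ref{ass:fixEntry}, to internal source sets so that both $T_{\bar{\mathcal{C}},\,\mathcal{N}_i^+\setminus\{w_j\}}$ and $T_{\bar{\mathcal{C}}i}$ can be routed through the same $T_{\bar{\mathcal{C}}\mathcal{D}}$. A second delicate point is ensuring that the column-span coincidence lifts to an \emph{invertible} linking matrix $Q$ simultaneously with the full column rank of $[T_{\bar{\mathcal{C}}j}\ T_{\bar{\mathcal{C}}\mathcal{D}}]$ on a common full-measure subset of $\Theta$, so that the two generic rank conclusions hold for the same parameters.
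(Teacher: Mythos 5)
Your proposal is correct and follows essentially the same route as the paper's proof: both start from the $i$th column of $C = T_{\mathcal{C}\mathcal{W}}(I-G)$ restricted to the rows $\bar{\mathcal{C}}$, factor the unavailable blocks $T_{\bar{\mathcal{C}}\,(\mathcal{N}_i^+\setminus\{w_j\})\cup\{w_i\}}$ through $T_{\bar{\mathcal{C}}\mathcal{D}}$ via the (internal-source version of) Lemma~\ref{lemma:Kdual}, tie $T_{\bar{\mathcal{C}}\mathcal{D}}$ to the data matrix $T_{\bar{\mathcal{C}}\bar{\mathcal{X}}}$ using condition~(2), and conclude uniqueness of $G_{ji}$ from the generic rank $|\mathcal{D}|+1$ delivered by condition~(1) --- your explicit two-model comparison with the invertible linking matrix $Q$ is just the unpacked form of the uniqueness argument the paper defers to the analysis of Theorem~\ref{theorem:partial}. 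The only bookkeeping omission is the known-module term $T_{\bar{\mathcal{C}}\mathcal{N}}G_{\mathcal{N}i}$ (with $\mathcal{N}$ the out-neighbors of $w_i$ excluded from $\mathcal{N}_i^+$) on the right-hand side, which is data-determined and cancels in your subtraction, so it does not affect the validity of the argument.
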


Note that since $w_i \in \bar{\mathcal{C}}$, the disconnecting set in the above result must contain $w_i$. This result generalizes Theorem~\ref{theorem:partial2} to address the situation where the input is measured but has an indirect excitation source, while the input needs to be excited in Theorem~\ref{theorem:partial2}. The above generalization is achieved by additionally blocking the paths from $w_i$ to $\bar{\mathcal{C}}$ using the disconnecting set, compared to Theorem~\ref{theorem:partial2}. Furthermore, the result in Theorem~\ref{theorem:UnmeasureOutput} can be extended to analyzing global identifiability as in Remark~\ref{remark:globalPath}  and identifiability of a subnetwork, i.e. a subset of out-going modules of $w_i$ in this case, as in Remark~\ref{remark:SIMO}. A path-based formulation of Theorem~\ref{theorem:UnmeasureOutput} can be also obtained analogously as in Corollary~\ref{corollary:UnmeasureInputPath},

\section{Actuator and sensor allocation for identifiability}\label{sec:synthesis}

The results in the previous sections provide analysis results for verifying identifiability for a given configuration of measured and excited signals. In order to extend these results for solving a synthesis problem, i.e. allocating sensors and actuators to achieve identifiability, we extend a reasoning that is originating from \cite{shi2020IFAC} as follows. 

Depending on whether the input or the output of a module is measured, Theorems~\ref{theorem:partial}, \ref{theorem:partial2}, \ref{theorem:UnmeasureInput} and \ref{theorem:UnmeasureOutput} explicitly require the signals in the disconnecting sets to be excited and measured (indirectly) to guarantee single module identifiability. Therefore, the idea is to first compute a disconnecting set and then allocate signals and sensors accordingly. However, the disconnecting sets in the theorems cannot be computed before the  excitation signals $\bar{\mathcal{X}}$ and the measured signals $\bar{\mathcal{C}}$ are specified. Thus, for the disconnecting sets we first provide the necessary conditions that do not rely on the external and measured signals.

\begin{corollary} 
For a model set $\mathcal{M}$ that satisfies Assumptions~\ref{ass:fullrankPhi}, \ref{ass:InPara}, \ref{ass:FeedThrough}, \ref{ass:fixEntry},
\begin{itemize}
\item if it satisfies the conditions in Theorem~\ref{theorem:partial} with disconnecting set $\mathcal{D}_1$, $\mathcal{D}_1$ is also a $\{w_i\} - \mathcal{N}^-_j \setminus \{w_i\}$ disconnecting set;

\item if it satisfies the conditions in Theorem~\ref{theorem:partial2} with disconnecting set $\mathcal{D}_2$, $\mathcal{D}_2$ is also a $\mathcal{N}_i^+ \setminus \{w_j\} - \{w_j\}$ disconnecting set;

\item if it satisfies the conditions in Theorem~\ref{theorem:UnmeasureInput} with disconnecting set $\mathcal{D}_3$, $\mathcal{D}_3$ is also a $\{w_i\} - \mathcal{N}_i^* \cup (\mathcal{N}_j^- \setminus \{w_i\}  )$ disconnecting set;

\item if it satisfies the conditions in Theorem~\ref{theorem:UnmeasureOutput} with disconnecting set $\mathcal{D}_4$, $\mathcal{D}_4$ is also a $( \mathcal{N}_i^+ \setminus \{w_j\}) \cup \{w_i\} - \{w_i,w_j\}$ disconnecting set;

\end{itemize}

\end{corollary}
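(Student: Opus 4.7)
The plan is to prove all four bullets by a single path-concatenation argument, with only the labeling of the ``distinguished vertex'' changing from case to case. The key observation is that each of Theorems~\ref{theorem:partial}, \ref{theorem:partial2}, \ref{theorem:UnmeasureInput} and \ref{theorem:UnmeasureOutput} asserts, in addition to the disconnecting property of $\mathcal{D}_k$, a vertex-disjoint-paths condition of the form ``$|\mathcal{D}_k|+1$ vertex disjoint paths between $\{\text{distinguished vertex}\} \cup \mathcal{D}_k$ and some external/measured set.'' Since the $|\mathcal{D}_k|+1$ endpoints must be distinct, the distinguished vertex (that is, $w_i$ for cases~1, 3 on the source side of excitation paths, and $w_j$ for cases~2, 4 on the sink side of measurement paths) cannot lie in $\mathcal{D}_k$, and one of those paths is a path that touches $\mathcal{D}_k$ only at its distinguished endpoint. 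I will call this the \emph{free path} below.

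For each bullet I then argue by contradiction. Suppose $\mathcal{D}_k$ is not a disconnecting set for the smaller pair asserted by the corollary; that is, there is a simple path $P$ between those two vertex sets that avoids $\mathcal{D}_k$. By construction, one endpoint of $P$ is the distinguished vertex (for example, in bullet~1, $P$ runs from $w_i$ to some vertex in $\mathcal{N}_j^-\setminus\{w_i\}$, and in bullet~2, $P$ runs from some $w_k\in\mathcal{N}_i^+\setminus\{w_j\}$ to $w_j$). Concatenating the free path with $P$, with the distinguished vertex as the hinge, yields a walk that begins in the external/source set of the corresponding theorem (e.g., $\bar{\mathcal{X}}$, or $\bar{\mathcal{X}}\cup\{x_i\}$, or $(\mathcal{N}_i^+\setminus\{w_j\})\cup\bar{\mathcal{X}}\cup\{w_i\}$) and ends in its measured/target set (e.g., $\bar{\mathcal{C}}$, or $(\mathcal{N}_j^-\setminus\{w_i\})\cup\bar{\mathcal{C}}$, or $\mathcal{N}_i^*\cup(\mathcal{N}_j^-\setminus\{w_i\})\cup\bar{\mathcal{C}}$). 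This walk is entirely disjoint from $\mathcal{D}_k$. Extracting a simple path from the walk contradicts the fact that $\mathcal{D}_k$ was assumed, in the corresponding theorem, to disconnect precisely those larger sets.

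The only case that needs a small separate remark is bullet~4, because there $w_i$ appears on both sides of the pair $(\mathcal{N}_i^+\setminus\{w_j\})\cup\{w_i\}\ -\ \{w_i,w_j\}$. The degenerate singleton path from $w_i$ to $w_i$ is handled by noting that the hypothesis of Theorem~\ref{theorem:UnmeasureOutput} already assumes $w_i\in\bar{\mathcal{C}}$, so the original $(\mathcal{N}_i^+\setminus\{w_j\})\cup\bar{\mathcal{X}}\cup\{w_i\}\ -\ \bar{\mathcal{C}}$ disconnecting property forces $w_i\in\mathcal{D}_4$; remaining paths terminating at $w_j$ are handled by the standard concatenation, using as free path the one from $w_j$ to some vertex of $\bar{\mathcal{C}}$ supplied by $b_{\{w_j\}\cup\mathcal{D}_4\to\bar{\mathcal{C}}}=|\mathcal{D}_4|+1$. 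I do not expect a real obstacle here; the work is essentially bookkeeping to match source and target sets for each of the four theorems, and to verify that the hinge vertex is not swallowed by $\mathcal{D}_k$ — which is exactly what the $|\mathcal{D}_k|+1$ disjoint-paths conditions guarantee.
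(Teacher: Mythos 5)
Your proof is correct, and it is the natural concatenation argument: the paper itself only defers to an external reference (Corollary~2 of the cited full-measurement paper), whose proof proceeds in essentially the same way — use the $|\mathcal{D}_k|+1$ vertex-disjoint-paths condition to extract a ``free'' excitation (or measurement) path reaching the distinguished vertex while avoiding $\mathcal{D}_k$, then splice it with a hypothetical $\mathcal{D}_k$-avoiding path violating the smaller disconnecting property to contradict the larger one. Your handling of the fourth bullet ($w_i\in\bar{\mathcal{C}}$ forcing $w_i\in\mathcal{D}_4$, which also blocks every path starting at $w_i$) correctly covers the one asymmetric case.
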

\begin{proof}
The proof is analogous to the proof of \cite[Corollary~2]{shi2020generic}.
\end{proof}

After computing the above specified disconnecting sets, excitation signals and sensors can be allocated to achieve identifiability of $G_{ji}$. Following the theorems, we illustrate the synthesis approaches in Table~\ref{table:1} for different cases depending on whether the input or the output can be measured. For each situation, we specify how the disconnecting set is constructed and which signals are to be excited or measured. 

Taking Case~2 in Table~\ref{table:1} as an example, a $\mathcal{N}_i^+ \setminus \{w_j\} - \{w_j\}$ disconnecting set $\mathcal{D}$ that satisfies $w_j \notin \mathcal{D}$ is first computed. Then for actuator and sensor allocation, each signal in $\mathcal{D} \cup \{w_i,w_j\}$ is excited by a distinct $r$ signal, and all signals in $\mathcal{D}$ are measured. In addition, the output $w_j$ is indirectly measured, which is also required in Case~4. 

The indirect measurement for $w_j$ in Cases~2 and 4 means that $w_j$ has a path to a measured vertex $w_k$, and more importantly, based on Theorems~\ref{theorem:partial2} and \ref{theorem:UnmeasureOutput}, we choose $w_k$ in the following way: $w_k \in \mathcal{C} \setminus \{w_j\}$ is an internal signal such that the computed $\mathcal{D}$ is also a $\mathcal{N}_i^+ \setminus \{w_i\} - \{w_k\}$ disconnecting set in Case 2 or a $( \mathcal{N}_i^+ \setminus \{w_j\})  \cup \{w_i\} - \{w_k\}$ disconnecting set in Case 4; in addition, there exists a path from $w_j$ to $w_k$ that is vertex disjoint with $\mathcal{D}$.

The synthesis approaches can be justified in the following result.
\begin{theorem} \label{propo:synthesis1}
For a model set $\mathcal{M}$ that satisfies Assumptions~\ref{ass:fullrankPhi}, \ref{ass:InPara}, \ref{ass:FeedThrough}, \ref{ass:fixEntry}, consider each case in Table~\ref{table:1}. If the disconnecting set $\mathcal{D}$ is formulated and the excitation signals and sensors are allocated according to Table~\ref{table:1}, module $G_{ji}$ is generically identifiable in the obtained model set from the measured signals.
\end{theorem}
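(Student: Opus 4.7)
The plan is to verify each of the four cases separately by showing that the allocation prescribed in Table~\ref{table:1} produces a model set whose hypotheses match those of the corresponding single-module identifiability theorem: Theorem~\ref{theorem:partial} for Case~1, Theorem~\ref{theorem:partial2} for Case~2, Theorem~\ref{theorem:UnmeasureInput} for Case~3, and Theorem~\ref{theorem:UnmeasureOutput} for Case~4. Two structural facts will be used throughout: (i)~each allocated $r$-signal has out-degree one and a known edge into its designated internal signal, so every directed path originating from it must first pass through that single internal signal; and (ii)~a vertex in the measurement set is trivially a path of length zero to itself, which trivially realises disjoint paths into any superset of $\mathcal{D}$.

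For Case~1, I would take $\bar{\mathcal{X}}$ to consist of the newly allocated $r$-signals exciting $\mathcal{D}\cup\{w_i\}$ and set $\bar{\mathcal{C}}=\mathcal{D}$. Fact~(i) yields $|\mathcal{D}|+1$ vertex-disjoint paths from $\bar{\mathcal{X}}$ into $\{w_i\}\cup\mathcal{D}$, verifying condition~\eqref{partial1:con1} of Theorem~\ref{theorem:partial}; fact~(ii) immediately gives $b_{\mathcal{D}\to\bar{\mathcal{C}}}=|\mathcal{D}|$, verifying condition~\eqref{partial1:con2}. For the disconnecting-set property, every path from $\bar{\mathcal{X}}$ must enter $\mathcal{D}\cup\{w_i\}$ by~(i); paths entering $\mathcal{D}$ are caught immediately, while paths entering $w_i$ are caught by the hypothesis that $\mathcal{D}$ is a $\{w_i\}$-$\mathcal{N}_j^-\setminus\{w_i\}$ disconnecting set. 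The assumption $w_i,w_j\in\mathcal{C}$ is in the allocation, so the remaining hypothesis of Theorem~\ref{theorem:partial} holds.

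For Case~2 the same template is used with Theorem~\ref{theorem:partial2}: let $w_k$ be the measured vertex through which $w_j$ is indirectly measured, set $\bar{\mathcal{C}}=\mathcal{D}\cup\{w_k\}$, and let $\bar{\mathcal{X}}$ be the $r$-signals exciting $\mathcal{D}$. The existence of a $w_j$–$w_k$ path vertex-disjoint from $\mathcal{D}$ combines with the self-paths at each vertex of $\mathcal{D}$ to yield $|\mathcal{D}|+1$ disjoint paths from $\{w_j\}\cup\mathcal{D}$ to $\bar{\mathcal{C}}$, while $b_{\bar{\mathcal{X}}\to\mathcal{D}}=|\mathcal{D}|$ follows from~(i). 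The disconnecting condition $\mathcal{D}\supseteq$ separator for $(\mathcal{N}_i^+\setminus\{w_j\})\cup\bar{\mathcal{X}}-\bar{\mathcal{C}}$ is then split into two parts: paths from $\bar{\mathcal{X}}$ are handled by~(i), and paths from $\mathcal{N}_i^+\setminus\{w_j\}$ to $\bar{\mathcal{C}}$ are intercepted by $\mathcal{D}$ precisely because $w_k$ is chosen so that $\mathcal{D}$ is also a $(\mathcal{N}_i^+\setminus\{w_j\})$–$\{w_k\}$ disconnecting set (the specification given below Table~\ref{table:1}). The existence of $x_i,x_j$ required by Theorem~\ref{theorem:partial2} is guaranteed by the allocation of dedicated $r$-signals to $w_i$ and $w_j$.

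Cases~3 and~4 are analogous: in each case I pick $\bar{\mathcal{X}}$ as the allocated $r$-signals, $\bar{\mathcal{C}}$ as the measured signals (augmented by the indirect-measurement vertex $w_k$ in Case~4), and verify the two path-counting conditions together with the disconnecting-set condition of Theorem~\ref{theorem:UnmeasureInput} or Theorem~\ref{theorem:UnmeasureOutput} using the same two structural facts. The main obstacle will be the two cases with indirect measurement (2 and 4): one must carefully exploit the definition of $w_k$ given below Table~\ref{table:1} to show that the composite disconnecting set in the theorem is still realised by the precomputed $\mathcal{D}$. Once this bookkeeping is done, all four cases reduce to direct applications of the previously established sufficient conditions, and generic identifiability of $G_{ji}$ follows.
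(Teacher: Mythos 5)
Your proposal is correct and follows essentially the same route as the paper's proof: in each case you instantiate $\bar{\mathcal{X}}$ as the allocated $r$-signals and $\bar{\mathcal{C}}$ as $\mathcal{D}$ (augmented by the indirect-measurement vertex $w_k$ where needed), then check the conditions of Theorems~\ref{theorem:partial}, \ref{theorem:partial2}, \ref{theorem:UnmeasureInput} and \ref{theorem:UnmeasureOutput}; the paper treats this verification as ``straightforward'' while you spell out the out-degree-one and self-path arguments explicitly. No gaps.
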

\begin{proof}
In Case~1, the allocated $r$ signals form a set $\bar{\mathcal{X}}$, and $\mathcal{D}$ forms a set $\bar{\mathcal{C}}$. It is straightforward that the obtained $\bar{\mathcal{X}}$, $\bar{\mathcal{C}}$ and $\mathcal{D}$ together satisfy the conditions in Theorem~\ref{theorem:partial}. In Case~2, the $r$ signals that excite $\mathcal{D}$ form the set $\bar{\mathcal{X}}$, and $\mathcal{D} \cup \{w_k\}$, or $\mathcal{D} \cup \{w_j\}$ if $w_j$ is measured, forms the set $\bar{\mathcal{C}}$ where $w_k$ is the indirect measurement of $w_j$. Then due to the chosen $w_k$ and the direct excitation and measurement of $\mathcal{D}$, the conditions of Theorem~\ref{theorem:partial2} are satisfied. The proofs for the other two cases follows analogously and thus are omitted.
\end{proof}

When computing the disconnecting set in the above approaches, a minimum disconnecting set can be found by the Ford–Fulkerson algorithm in time $\mathcal{O}(|\mathcal{E}||\mathcal{V}|)$ \cite{schrijver2003combinatorial}, where $\mathcal{V}$ and $\mathcal{E}$ are the vertex set and edge set of the dynamic network $\mathcal{G}$, respectively. Also note that in the obtained model set in Cases~1 and 3, module $G_{ji}$ is globally identifiable as all the relevant signals are exited and measured, and thus Theorems~\ref{theorem:partial2} and \ref{theorem:UnmeasureInput} are satisfied with the unique sets of maximum number of vertex disjoint paths. Global identifiability of $G_{ji}$ in the other two cases depends on the chosen indirect measurement $w_k$ for $w_j$.

Furthermore, like the extension made in Corollary~\ref{corollary:partialPathSubnetwork}, the synthesis approaches for Cases~1 and 3 can be extended trivially to deal with a subset of in-coming modules of $w_j$; and the ones for Cases~2 and 4 can also be extended to consider a subset of out-going modules of $w_i$, as noted in Remark~\ref{remark:SIMO}.

\section{Indirect identification methods} \label{sec:Indirec}
The disconnecting-set-based results in Theorems~\ref{theorem:partial}, \ref{theorem:partial2}, \ref{theorem:UnmeasureInput} and \ref{theorem:UnmeasureOutput} also suggest several indirect identification methods through which the module of interest $G_{ji}$ can be estimated in the situation that the identifiability conditions are satisfied through external excitation signals $r$ only, and so no use is made of noise excitation. The identification algorithms can be retrieved from the following result, where all non-zero modules are assumed to be unknown for simplicity.
\begin{proposition}
For $\mathcal{M}$ with all fixed modules being zero,
\begin{itemize}
\item if it satisfies the conditions in Theorem~\ref{theorem:partial} with $w_j \notin \mathcal{D}$ and $\bar{\mathcal{X}}$ having no directed edge to $w_j$, then for $\mathcal{M} $ it holds generically that
\begin{equation}
G_{ji}(q,\theta) = T_{j  \bar{\mathcal{X}}}(q,\theta) \begin{bmatrix}
T_{i \bar{\mathcal{X}}}(q,\theta) \\
T_{\bar{\mathcal{C}} \bar{\mathcal{X}}}(q,\theta)
\end{bmatrix}^\dagger \begin{bmatrix}
1 \\ \mathbf{0}
\end{bmatrix}; \label{eq:indirect1}
\end{equation}
\item if it satisfies the conditions in Theorem~\ref{theorem:partial2} with $w_i \notin \mathcal{D} \cup \bar{\mathcal{C}}$, it holds generically that
\begin{equation}
G_{ji}(q,\theta)= \begin{bmatrix}
1 & \mathbf{0}
\end{bmatrix}\begin{bmatrix}
T_{\bar{\mathcal{C}} j}(q,\theta) &
T_{\bar{\mathcal{C}}\bar{\mathcal{X}}}(q,\theta)
\end{bmatrix}^\dagger T_{\bar{\mathcal{C}} i} ; \label{eq:indirect2}
\end{equation}

\item if it satisfies the conditions in Theorem~\ref{theorem:UnmeasureInput} with $w_j \notin \mathcal{D}$,  $\bar{\mathcal{X}}$ having no directed edge to $w_j$, and $x_i$ being a measured excitation signal, it holds generically that
\begin{equation}
G_{ji}(q,\theta) = T_{j  \bar{\mathcal{X}}}(q,\theta) \begin{bmatrix}
\mathbf{e}_{x_i}\\
T_{\bar{\mathcal{C}} (\bar{\mathcal{X}}\cup \{x_i\})}(q,\theta) 
\end{bmatrix}^\dagger \begin{bmatrix}
1 \\ \mathbf{0}
\end{bmatrix}, \label{eq:indirect3}
\end{equation} 
where $\mathbf{e}_{x_i}$ is a standard basis vector which denotes the mapping from $(\bar{\mathcal{X}}\cup \{x_i\})$ to $x_i$;

\item if it satisfies the conditions in Theorem~\ref{theorem:UnmeasureOutput}, it holds generically that
\begin{equation}
G_{ji}(q,\theta)= \begin{bmatrix}
1 & \mathbf{0}
\end{bmatrix}\begin{bmatrix}
T_{\bar{\mathcal{C}} j}(q,\theta) &
T_{\bar{\mathcal{C}}\bar{\mathcal{X}}}(q,\theta)
\end{bmatrix}^\dagger C_{\bar{\mathcal{C}} i},  \label{eq:indirect4}
\end{equation}
where $C_{\bar{\mathcal{C}} i}$ is the submatrix of $C$ whose rows and columns correspond to $\bar{\mathcal{C}}$ and $w_i$, respectively.
\end{itemize}
\end{proposition}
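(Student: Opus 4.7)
My plan is to derive each identity~\eqref{eq:indirect1}--\eqref{eq:indirect4} by revisiting the linear relation already used to prove the corresponding identifiability theorem (Theorem~\ref{theorem:partial}, \ref{theorem:partial2}, \ref{theorem:UnmeasureInput}, or \ref{theorem:UnmeasureOutput}), and then arguing that the Moore--Penrose pseudoinverse selects $G_{ji}$ as the first entry of the unique coefficient vector. The common recipe has three steps: (i) write one row of $(I-G)T_{\mathcal{W}\bar{\mathcal{X}}}=X_{\mathcal{W}\bar{\mathcal{X}}}$ for the MISO-style identities \eqref{eq:indirect1} and \eqref{eq:indirect3}, or one column of $T_{\mathcal{C}\mathcal{W}}(I-G)=C$ for the SIMO-style identities \eqref{eq:indirect2} and \eqref{eq:indirect4}; (ii) replace the unavailable block of $T$ that multiplies the unknown modules by an accessible block, via Lemma~\ref{lemma:K} or Lemma~\ref{lemma:Kdual} applied to the disconnecting set $\mathcal{D}$ supplied by the theorem; (iii) stack the resulting equations so that $G_{ji}$ appears as the leading coefficient, and solve through $(\cdot)^{\dagger}$.

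Working out \eqref{eq:indirect1} as the representative MISO case: since $\bar{\mathcal{X}}$ has no edge into $w_j$ and all fixed modules are zero, the $j$th row of $(I-G)T_{\mathcal{W}\bar{\mathcal{X}}}=X_{\mathcal{W}\bar{\mathcal{X}}}$ reduces to $T_{j\bar{\mathcal{X}}}=G_{ji}T_{i\bar{\mathcal{X}}}+G_{j,\mathcal{W}^-_j\setminus\{w_i\}}T_{(\mathcal{W}^-_j\setminus\{w_i\})\bar{\mathcal{X}}}$. Applying Lemma~\ref{lemma:K} to $\mathcal{D}$ factors the last block as $K_1 T_{\mathcal{D}\bar{\mathcal{X}}}$ and also yields $T_{\bar{\mathcal{C}}\bar{\mathcal{X}}}=K_2 T_{\mathcal{D}\bar{\mathcal{X}}}$ with $K_2$ of generically full column rank by condition~\eqref{partial1:con2}; eliminating $T_{\mathcal{D}\bar{\mathcal{X}}}$ via a left inverse of $K_2$ and stacking gives $T_{j\bar{\mathcal{X}}}=[G_{ji},\tilde{A}]\,\bigl[T_{i\bar{\mathcal{X}}}^{\top},T_{\bar{\mathcal{C}}\bar{\mathcal{X}}}^{\top}\bigr]^{\top}$, from which \eqref{eq:indirect1} follows. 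Identities~\eqref{eq:indirect2} and~\eqref{eq:indirect4} are produced by the dual argument from the $i$th column of $T_{\mathcal{C}\mathcal{W}}(I-G)=C$ using Lemma~\ref{lemma:Kdual}: the right-hand side reads $T_{\bar{\mathcal{C}}i}$ in~\eqref{eq:indirect2} because $w_i\notin\bar{\mathcal{C}}$ makes the $i$th column of $C$ vanish on $\bar{\mathcal{C}}$, but becomes $C_{\bar{\mathcal{C}}i}$ in~\eqref{eq:indirect4} since $w_i\in\bar{\mathcal{C}}$ contributes a nonzero selector. Formula~\eqref{eq:indirect3} mirrors~\eqref{eq:indirect1}; with $w_i$ unmeasured the row $T_{i\bar{\mathcal{X}}}$ is inaccessible, but since $x_i$ has out-degree one and a known unit edge into $w_i$, the mapping from $\bar{\mathcal{X}}\cup\{x_i\}$ to $x_i$ is the trivial standard-basis row $\mathbf{e}_{x_i}$, which plays exactly the structural role of $T_{i\bar{\mathcal{X}}}$ and makes the rest of the derivation go through unchanged.

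The main obstacle is step~(iii): even when the stacked coefficient matrix is rank-deficient, the pseudoinverse must still return $G_{ji}$ as its first entry. A short linear-algebra computation reduces this to the first standard basis vector $e_1$ lying in the column space (for~\eqref{eq:indirect1},~\eqref{eq:indirect3}) or the row space (for~\eqref{eq:indirect2},~\eqref{eq:indirect4}) of the stacked matrix, equivalently that adjoining the row or column carrying $G_{ji}$ strictly increases the generic rank of the remaining blocks. By Lemma~\ref{lemma:RankPath} this becomes a vertex-disjoint-path inequality such as $b_{\bar{\mathcal{X}}\to\{w_i\}\cup\bar{\mathcal{C}}}>b_{\bar{\mathcal{X}}\to\bar{\mathcal{C}}}$, which I expect to establish from the two disconnecting-set conditions of each theorem via a Menger / max-flow concatenation: the $|\mathcal{D}|$ disjoint paths from $\bar{\mathcal{X}}$ to $\mathcal{D}$ and the $|\mathcal{D}|$ paths from $\mathcal{D}$ to $\bar{\mathcal{C}}$ splice at $\mathcal{D}$ into $|\mathcal{D}|$ disjoint $\bar{\mathcal{X}}$-to-$\bar{\mathcal{C}}$ paths, while the extra path to $\{w_i\}$ (respectively $\{w_j\}$) stays disjoint because the theorem's first condition forces $w_i\notin\mathcal{D}\cup\bar{\mathcal{C}}$ (respectively $w_j\notin\mathcal{D}\cup\bar{\mathcal{C}}$).
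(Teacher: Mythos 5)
Your proposal is correct and follows essentially the same route as the paper: the paper's proof simply points back to the intermediate identities \eqref{eq:proofPar1}, \eqref{eq:UnInputProof2}, \eqref{eq:partialvvproof}--\eqref{eq:partialvvproof2} derived in the appendix proofs of Theorems~\ref{theorem:partial}--\ref{theorem:UnmeasureOutput}, which are exactly the row/column equations plus Lemma~\ref{lemma:K}/\ref{lemma:Kdual} factorizations you reconstruct. Your step~(iii), reducing the validity of the pseudoinverse extraction to $e_1$ lying in the appropriate column/row space and then to the rank-increment condition via Lemma~\ref{lemma:RankPath} and path splicing, is the same mechanism the paper invokes implicitly through \eqref{eq:proofPar3} and \cite[Lemma~2]{shi2020generic}, just spelled out more explicitly.
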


\begin{proof}
\eqref{eq:indirect1} is obtained from \eqref{eq:proofPar1} where $P = T_{j\bar{\mathcal{X}}}$ under the assumptions that $\bar{\mathcal{X}}$ has no directed edge to $w_j$ and there is no known module, and \eqref{eq:indirect2} is a dual result of \eqref{eq:indirect1}. Similarly, \eqref{eq:indirect3} is derived from  \eqref{eq:UnInputProof2}, and combining \eqref{eq:partialvvproof} and \eqref{eq:partialvvproof2} leads to \eqref{eq:indirect4}.
\end{proof}

The four expressions in the above proposition show opportunities to estimate $G_{ji}$ when the set $\mathcal{X}_j$ consists of only measured $r$ signals. In this case, all the involved mappings, including $T_{\bar{\mathcal{C}} \bar{\mathcal{X}}}$, $T_{\bar{\mathcal{C}} i}$ and $T_{\bar{\mathcal{C}} j}$, are mappings from measured $r$ signals to measured internal signals, and thus, they can be estimated consistently under mild conditions. Consequently, $G_{ji}$ can also be estimated consistently on the basis of the above expressions. Taking \eqref{eq:indirect2} as an example, it shows that an estimate of $G_{ji}$ can be obtained from dividing the mapping from $x_i$ to $\bar{\mathcal{C}}$ by the mapping from $\{x_j \} \cup \bar{\mathcal{X}}$ to $\bar{\mathcal{C}}$, as visualized in Figure~\ref{fig:2ndTheorem}. Note that these methods can also be generalized to identify multiple modules, which is similar to the extensions in Corollary~\ref{corollary:partialPathSubnetwork} and Remark~\ref{remark:SIMO}. 

The methods cover all possible situations for single module identification, depending on if the input or the output is measured. The particularly interesting case is \eqref{eq:indirect2} which leads to a method that can estimate $G_{ji}$ even when both its input and output are unmeasured. The methods also do not require measuring the inputs of the MISO subsystem that contains $G_{ji}$, which is a typical choice in the literature such that the MISO subsystem can be estimated to obtain the module estimate \cite{gevers2018practical}. 

The obtained indirect methods can address significantly more general settings than the existing network identification methods \cite{dankers2015,linder2017identification,gevers2018practical, ramaswamyCDC2019,shi2020generic} which are typically limited to a specific measurement scheme, e.g., both the input and output are measured \cite{dankers2015,gevers2018practical,ramaswamyCDC2019,shi2020generic}; or the input is unmeasured but the output is measured \cite{linder2017identification, gevers2018practical}. The disadvantage of these indirect methods is that they can not exploit excitation through unmeasured disturbance signals, and therefore generally will require a more  expensive experimental setup, requiring a relatively large number of external excitation signals $r$. The presented identifiability conditions point to the use of identification methods that can exploit both types of excitation, as, e.g., \cite{ramaswamyCDC2019}.

\section{Conclusion} \label{sec:conclude}
For single module identifiability analysis, this work introduces the concept of network equivalence and develops a novel network model structure, which has a simple noise model and more importantly, allows us to explore the noise excitation for the identifiability tests.

More importantly, graphical conditions for verifying both global and generic identifiability of a single module are developed in the case of partial measurement and partial excitation. Given the developed model structure, the graphical conditions regard both measured excitation signals and unmeasured noises as excitation sources for identifiability analysis. It is also shown that disconnecting sets provide important information regarding which signals should be excited or measured to achieve identifiability. The above information further leads to synthesis approaches, for excitation allocation and sensor allocation to achieve identifiability, and indirect methods for estimating network modules. The Matlab software for conducting the graphical analysis in Proposition~\ref{proposition:partialPath} and the synthesis in the Case 1 of Table~\ref{table:1} can be found in \cite{ShiCodePartial}.


%

\appendix
\section*{Proof of Theorem~\ref{theorem:equiv}}
We first exploit the spectrum $C \Phi C^\top$ of $M$. Based on the measured signals $w_{\mathcal{C}}$, an immersed network model, which only represents the behavior of the measured signals, can be obtained by eliminating the unmeasured signals (called immersion or Kron reduction) \cite{dankers2015}. To introduce the immersed network, we first define
\begin{align}
\bar{G} &= G_{\mathcal{C}\mathcal{C}}+G_{\mathcal{C}\mathcal{Z}} (I-G_{\mathcal{Z} \mathcal{Z}})^{-1}G_{\mathcal{Z}\mathcal{C}} \label{eq:barG} \\
\bar{H} & =  H_{\mathcal{C}}  +G_{\mathcal{C}\mathcal{Z}} (I-G_{\mathcal{Z} \mathcal{Z}})^{-1}H_{\mathcal{Z}}, \label{eq:barH}
\end{align}
and $\bar{R}$ similarly, where, for example, $G_{\mathcal{C}\mathcal{Z}}$ represents the submatrix of $G$ that has its rows and columns corresponding to the signals in $\mathcal{C}$ and $\mathcal{Z}$, respectively. Then the immersed network model after the elimination of unmeasured internal signals has the following form:
\begin{equation}
w_{\mathcal{C}} = \bar{G} w_{\mathcal{C}} + \bar{R} r(t) + \bar{H} e(t). \label{eq:obsermodel}
\end{equation}
Note that the above model may have non-zero diagonal entries in $\bar{G}$, and $(I-\bar{G})$ has a proper inverse because of Assumption~\ref{ass1}(c) and consequently $(I-\bar{G}^\infty)$ having full rank. This model further leads to an external-to-internal mapping:
\begin{equation}
w_{\mathcal{C}} = (I-\bar{G})^{-1} \bar{R}r(t) + (I-\bar{G})^{-1} \bar{H}e(t). \label{eq:oberMap}
\end{equation}
Based on \eqref{eq:origiNetMap} and \eqref{eq:oberMap}, it can be found that 
\begin{equation}
 C\Phi C^T = (I-\bar{G})^{-1} \bar{H} \Lambda \bar{H}^* (I-\bar{G})^{-*}, \label{eq:proofStart}
\end{equation}
where it holds that
\begin{equation}
C(I-G)^{-1}C^T = (I-\bar{G})^{-1}. \label{eq:proof1a}
\end{equation}
In addition, $\bar{H}\Lambda \bar{H}^\star$ can be re-factorized into $\tilde{H} \tilde{\Lambda} \tilde{H}^\star$ \cite{Gevers2019Singular}, where $(\tilde{H}, \tilde{\Lambda})$ satisfies the properties of this theorem. Note that $(\tilde{H}, \tilde{\Lambda})$ is unique and $\tilde{\Lambda}$ has full rank if Assumption~\ref{ass:fullrankPhi} is satisfied. This together with \eqref{eq:proofStart} and \eqref{eq:proof1a} leads to
\begin{equation*}
C\Phi C^T = C(I-G)^{-1} \begin{bmatrix}
\tilde{H} \\
0
\end{bmatrix} \tilde{\Lambda} \begin{bmatrix}
\tilde{H}^\star &0
\end{bmatrix} (I-G)^{-*} C^T.
\end{equation*}
The above equation implies that the external-to-output mapping of \eqref{eq:netNh}, i.e.
$$
w_{\mathcal{C}} = C(I-G)^{-1}R r +C(I-G)^{-1}\begin{bmatrix}
\tilde{H} \\
0
\end{bmatrix} \tilde{e},
$$
leads to the same object $(CTR, C \Phi C^\top)$ as $M$, which concludes that $\tilde{M} \sim M$. 

\section*{Proof of Proposition~\ref{pro:reforIden}}
For any $\theta$, it holds that
$$ C\Phi_{v}(\theta)C^T =  C(I-G)^{-1} H_p \tilde{\Lambda} H_p^\star (I-G)^{-*} C^T, 
$$
where $H_p \triangleq \begin{bmatrix}
\tilde{H} \\
0
\end{bmatrix} $, $\tilde{H}$ is monic and $C=[I \quad 0]$. Then the proposition is proved by showing that unique $C(I-G)^{-1} H_p$ and $\tilde{\Lambda}$ can be found given $CTR$ and $C \Phi C^\top$ under Assumption~\ref{ass:FeedThrough}, and then the two implications are trivially equivalent.

If $G$ is strictly proper by Assumption~\ref{ass:FeedThrough}(a), $C(I-G)^{-1} H_p$ is also monic. As $C\Phi_{v}(\theta)C^T$ admits a unique factorization as 
$$C\Phi_{v}(\theta)C^T = L \Lambda_p L^\star,$$ 
where $L$ is monic \cite{Gevers1981SpectrumFactor}, it holds that $C(I-G)^{-1} H_p = L$ and $\tilde{\Lambda} = \Lambda_p$. Thus, the uniqueness of $\Lambda$ and $C(I-G)^{-1} H_p$ is guaranteed given $C\Phi_{v}(\tilde{\theta})C^T$, which concludes the proof under Assumption~\ref{ass:FeedThrough}(a).

If Assumption~\ref{ass:FeedThrough}(b) holds, there exists a permutation matrix $P$ such that 
$$C\Phi_{v}(\theta)C^T = (CP^\top) F  (P H_p) \tilde{\Lambda} (H_p^\star P^\top) F^* (P C^\top), $$ 
where $F\triangleq [I-P G^\infty (\theta) P^\top]^{-1}$ and $F$ is lower unitriangular. As $CP^\top$ contains the first $|\mathcal{C}|$ rows of $P^\top$, there exists another permutation matrix $\bar{P}$ such that $\bar{P}CP^\top$ is in row echelon form. Note that pre-multiplying a square matrix by $\bar{P}CP^\top$ extracts a subset of rows in the matrix without re-ordering them. Based on the above facts, consider $\bar{P} C\Phi_{v}(\theta)C^T \bar{P}^\top$ that equals
$$ ( \bar{P} C P^\top) F  (P H_p  \bar{P}^\top) \bar{P} \tilde{\Lambda} \bar{P}^\top ( \bar{P} H_p^\star P^\top) F^* (P C^\top \bar{P}^\top),$$
where $( \bar{P} C P^\top) F  (P H_p  \bar{P}^\top)$ is lower unitriangular because the pre- and post-multiplication of $F$ leads to a submatrix of $F$ with rows and columns corresponding to the same indexes. As $\bar{P} C\Phi_{v}(\theta)C^T \bar{P}^\top$ admits a unique $LDL^\star$ decomposition as 
$$
\bar{P} C\Phi_{v}(\theta)C^T \bar{P}^\top = \bar{L} \Lambda_p \bar{L}^\star,
$$
where $\bar{L}$ is lower unitriangular, $( \bar{P} C P^\top) F  (P H_p  \bar{P}^\top)$ and $\bar{P} \tilde{\Lambda} \bar{P}^\top$ can be uniquely determined as 
$$
( \bar{P} C P^\top) F  (P H_p  \bar{P}^\top) = \bar{L}, \quad  \bar{P} \tilde{\Lambda} \bar{P}^\top= \Lambda_p.
$$
Therefore, $C(I-G)^{-1}H_p$ and $\tilde{\Lambda}$ can also be uniquely found given the spectrum matrix, which proves the proposition under Assumption~\ref{ass:FeedThrough}(b).

\section*{Proof of Lemma~\ref{lemma:K} }
The existence of $K$ is proven in Theorem~5 of \cite{shi2020generic}, and based on this theorem, $K$ can take the following form:
$$
K = \begin{bmatrix}
[(I- G_{\mathcal{P}\mathcal{P}})^{-1}]_{\mathcal{W}_1 \star}G_{\mathcal{P}\mathcal{D}_1} & [(I- G_{\mathcal{P}\mathcal{P}})^{-1}]_{\mathcal{W}_1 \star}X_{\mathcal{P} \mathcal{D}_2} \\
C & 0
\end{bmatrix},
$$
for some $\mathcal{P} \subseteq \mathcal{W}$, $\mathcal{D}_1 = \mathcal{W} \cap \mathcal{D}$ and $\mathcal{D}_2 = \mathcal{X} \cap \mathcal{D}$; $C$ is a selection matrix that extracts the rows corresponding to $\mathcal{W}_2 $ from a matrix whose rows correspond to $\mathcal{D}_1$, where $\mathcal{W}_2 = \bar{\mathcal{W}} \cap \mathcal{D}$, and $\mathcal{W}_1 = \bar{\mathcal{W}} \setminus \mathcal{W}_2$.

Note that the $K$ matrix equals the external-to-internal mapping from $\mathcal{D}$ to $\bar{\mathcal{W}}$ in a subgraph of $\mathcal{G}$, where the vertices in $\mathcal{D}$ are the external signals with all in-coming edges of $\mathcal{D}$ removed, and the signals in $\mathcal{P}$ are internal signals. This characterization of $K$ is clearly seen from its formulation: $(I-G_{\mathcal{P}\mathcal{P}})^{-1}\begin{bmatrix}
G_{\mathcal{P}\mathcal{D}_1} & X_{\mathcal{P}\mathcal{D}_2}
\end{bmatrix}$
has the same structure as $T_{\mathcal{W}\mathcal{X}}$ in \eqref{eq:TwxDef}; and the block row $[C \text{ }0]$ in $K$ represents the mapping from $\mathcal{D}$ to $\mathcal{W}_2$, as $C$ is the mapping from $\mathcal{D}_1$ to $\mathcal{W}_2 \subseteq \mathcal{D}_1$, and the zero entries indicate that there is no path between any two distinct vertices in $\mathcal{D}$ when vertices in $\mathcal{D}$ are externals signals. Thus, the full column rank of $K$ can be evaluated based on Lemma~\ref{lemma:RankPath} for this subgraph.

\section*{Proof of Theorem~\ref{theorem:partial}  }
The proof is to show that a unique $G_{j i}$ can be found given $T_{\bar{\mathcal{C}} \bar{\mathcal{X}}} $, $T_{i \bar{\mathcal{X}}}$ and $T_{j \bar{\mathcal{X}}}$. Note that by condition~\eqref{partial1:con1}, $w_i \notin \mathcal{D}$ holds. Let set $\mathcal{N}$ contain the remaining in-coming internal signals of $w_j$ which are not in $\mathcal{N}^-_j$, i.e. $\mathcal{N}$ contains the ones that are measured and have known directed edges to $w_j$. When $w_j \notin \mathcal{D}$, recall the $j$th row of $(I-G)T_{\mathcal{W}\mathcal{X}} = X$, and after permutation we have
\begin{equation}
\begin{bmatrix}
- G_{ji} & -G_{j \mathcal{N}^-_j \setminus \{w_i \}} K_1 & 1 & 0 
\end{bmatrix} \begin{bmatrix}
T_{i \bar{\mathcal{X}}} \\
T_{\mathcal{D} \bar{\mathcal{X}}} \\
T_{j \bar{\mathcal{X}}}\\
\star
\end{bmatrix} = \bar{X}+G_{j\mathcal{N}}T_{\mathcal{N}\bar{\mathcal{X}}} , \label{eq:partiproof1} 
\end{equation}
where $\bar{X}$ is a row vector with its columns corresponding to the signals in $\bar{\mathcal{X}}$ and thus is known; $G_{j\mathcal{N}}T_{\mathcal{N}\bar{\mathcal{X}}}$ is also given as the modules in $G_{j\mathcal{N}}$ are known and $\mathcal{N}$ is measured; $K_1$ satisfies $K_1 T_{\mathcal{D} \bar{\mathcal{X}}} = T_{\mathcal{N}^-_j \setminus \{w_i \} \bar{\mathcal{X}}}$ based on Lemma~\ref{lemma:K}. Furthermore, there exists $K_2$ such that $K_2 T_{\mathcal{D} \bar{\mathcal{X}}} = T_{\bar{\mathcal{C}} \bar{\mathcal{X}}}$, and $K_2$ has full column rank generically by condition~\eqref{partial1:con2} and Lemma~\ref{lemma:K}. This generically leads to 
\begin{equation}
T_{\mathcal{D} \bar{\mathcal{X}}} = K_2^\dagger T_{\bar{\mathcal{C}} \bar{\mathcal{X}}}, \label{eq:proofPar2}
\end{equation}
where $()^\dagger$ denotes the matrix's left inverse. Then combining the above equation and \eqref{eq:partiproof1} leads to
\begin{equation}
\begin{bmatrix}
- G_{j i} &-G_{j\mathcal{N}^-_j \setminus \{w_i \}} K_1 K_2^\dagger 
\end{bmatrix} \begin{bmatrix}
T_{i \bar{\mathcal{X}}} \\
T_{\bar{\mathcal{C}} \bar{\mathcal{X}}} 
\end{bmatrix} = P, \label{eq:proofPar1}
\end{equation}
where $P=\bar{X} - T_{j \bar{\mathcal{X}}}+G_{j\mathcal{N}}T_{\mathcal{N}\bar{\mathcal{X}}}$, and the above equation holds generically. In addition, due to conditions~\eqref{partial1:con1} and \eqref{partial1:con2}, it holds that 
\begin{equation*}
b_{\bar{\mathcal{X}}  \to \{w_i\}\cup \bar{\mathcal{C}}}=1+ b_{\bar{\mathcal{X}}  \to \bar{\mathcal{C}}},
\end{equation*}
and thus generically
\begin{equation}
rank(T_{(\{w_i\}\cup \bar{\mathcal{C}}) \bar{\mathcal{X}}})=1+ rank(T_{\bar{\mathcal{C}}\bar{\mathcal{X}}}), \label{eq:proofPar3}
\end{equation}
which implies that \eqref{eq:proofPar1} has a unique solution for $G_{ji}$ generically based on \cite[Lemma~2]{shi2020generic} and thus generic identifiability of $G_{ji}$.

When $w_j \in \mathcal{D}$, the $j$th row of $(I-G)T_{\mathcal{W}\mathcal{X}}=X$ can be written as follows after permutation:
\begin{equation*}
\begin{bmatrix}
- G_{ji} & -G_{j (\{w_j\} \cup \mathcal{N}^-_j \setminus \{w_i\})}\bar{K}_1 & 0 
\end{bmatrix} \begin{bmatrix}
T_{i \bar{\mathcal{X}}} \\
T_{\mathcal{D} \bar{\mathcal{X}}} \\
\star
\end{bmatrix} =  \bar{X}+G_{j\mathcal{N}}T_{\mathcal{N}\bar{\mathcal{X}}},  
\end{equation*}
where $\bar{K}_1 T_{\mathcal{D} \bar{\mathcal{X}}} = T_{( \{w_j\} \cup \mathcal{N}^-_j \setminus \{w_i \}  ) \bar{\mathcal{X}}}$ for some $\bar{K_1}$.
Note that for the above equation in the special case where $\mathcal{N} = \emptyset$, $G_{j\mathcal{N}}T_{\mathcal{N}\bar{\mathcal{X}}}$ disappears and $\bar{X}$ becomes non-zero, because $\bar{\mathcal{X}}$ must have a directed edge to $w_j \in \mathcal{D}$; otherwise, there exists a path from $\bar{\mathcal{X}}$ to $\mathcal{N}^-_j \setminus \{w_i\}$ which does not intersect with $\mathcal{D}$ based on condition~(1) and thus contradicts that $\mathcal{D}$ is a disconnecting set.
Finally, combining the above equation and \eqref{eq:proofPar2} leads to unique $G_{ji}$ generically given $CT_{\mathcal{W} \bar{\mathcal{X}}}$ and the first two conditions. 

\section*{Proof of Proposition~\ref{proposition:partialPath}}
We prove this result by showing that the conditions are equivalent to the conditions of Theorem~\ref{theorem:partial}. Firstly, based on \cite[Lemma~8]{shi2020generic}, condition~\eqref{partial1Path:con1} is equivalent to condition~\eqref{partial1:con1} of Theorem~\ref{theorem:partial}, and both conditions are satisfied by choosing $\mathcal{D}$ to be a minimum $\bar{\mathcal{X}} - (\mathcal{N}^-_j\setminus \{w_i\} ) \cup \bar{\mathcal{C}}$ disconnecting set. With this choice, $b_{\bar{\mathcal{X}} \to (\mathcal{N}^-_j \setminus \{w_i\} ) \cup \bar{\mathcal{C}}} = |\mathcal{D}|$ by the Menger's theorem \cite{shi2020generic}, and $\mathcal{D}$ is also a $\bar{\mathcal{X}} - \bar{\mathcal{C}}$ disconnecting set. Then if condition~\eqref{partial1Path:con2} is satisfied, it holds that
$$
|\mathcal{D}| = b_{\bar{\mathcal{X}} \to \bar{\mathcal{C}}},
$$
which implies that $\mathcal{D}$ is a minimum $\bar{\mathcal{X}}-\bar{\mathcal{C}}$ disconnecting set by the Menger's theorem, and thus $b_{\mathcal{D} \to \bar{\mathcal{C}}} = |\mathcal{D}|$, i.e. condition~\eqref{partial1Path:con2} in Theorem~\ref{theorem:partial}. On the other hand, if condition~\eqref{partial1Path:con2} in Theorem~\ref{theorem:partial} holds, there are maximally $|\mathcal{D}|$ vertex disjoint paths from $\bar{\mathcal{X}}$ via $\mathcal{D}$ to $\bar{\mathcal{C}}$, collected into set $\mathcal{P}$. As $\mathcal{D}$ is a minimum $\bar{\mathcal{X}} - (\mathcal{N}^-_j\setminus \{w_i\} ) \cup \bar{\mathcal{C}}$ disconnecting set, all the paths from $\bar{\mathcal{X}}$ to $(\mathcal{N}^-_j\setminus \{w_i\} )$ should intersect with the paths in $\mathcal{P}$, which implies condition~(2) of this result and thus concludes the proof.

\section*{Proof of Theorem~\ref{theorem:UnmeasureInput}}
When $w_j \notin \mathcal{D}$, as the conditions imply the first two conditions in Theorem~\ref{theorem:partial} with $\bar{\mathcal{X}}\cup \{x_i\}$, we can use part of the proof for Theorem~\ref{theorem:partial}, while the differences start from \eqref{eq:proofPar1} by replacing $\bar{\mathcal{X}}$ in \eqref{eq:proofPar1} with $\tilde{\mathcal{X}}$, where $\tilde{\mathcal{X}} = \bar{\mathcal{X}}\cup \{x_i\}$. In addition, it can be found that $\mathcal{D} \cup \{x_i\}$ is a $\tilde{\mathcal{X}} -\{w_i\}$ disconnecting set, and thus for some proper $K_i$, it holds
\begin{equation}
T_{i\tilde{\mathcal{X}}} = K_i \begin{bmatrix}
\textbf{e}_{x_i} \\
T_{\mathcal{D} \tilde{\mathcal{X}}} 
\end{bmatrix}, \label{eq:UnInputProof1}
\end{equation}
where $\textbf{e}_{x_i}$ is a row vector contains one entry as $1$ and zeros elsewhere, and it denotes the mapping from $\tilde{\mathcal{X}}$ to $x_i$ since $x_i \in \tilde{\mathcal{X}}$. Moreover, following the proof of Lemma~\ref{lemma:K}, $K_i$ is the external-to-internal mapping of a subnetwork, where $\mathcal{D} \cup \{x_i\}$ are external signals, and all in-coming edges of $\mathcal{D}$ are removed. As $\mathcal{D} \cup \{x_i\}$ intersects with all the paths from $w_i$ to $\mathcal{N}_i^*$ in $\mathcal{G}$, the subnetwork does not contain any loop around $w_i$. This indicates that $K_i$ has a special structure as 
$$
K_i = \begin{bmatrix}
\bar{H} & \bar{K}_i
\end{bmatrix},
$$
where the $\bar{H}$ is the \textit{known} module from $x_i$ to $w_i$ in the subnetwork, and note that $\bar{H} =1$ if $x_i$ is a signal in $r$. Combining the above equation, \eqref{eq:UnInputProof1}, \eqref{eq:proofPar2} and \eqref{eq:proofPar1} leads to 
\begin{equation}
\begin{bmatrix}
- G_{j i} \bar{H} &(- G_{j i} \bar{K}_i-G_{j \mathcal{N}^-_j \setminus \{w_i \}} K_1 )K_2^\dagger 
\end{bmatrix} \begin{bmatrix}
 \textbf{e}_{x_i} \\
T_{\bar{\mathcal{C}} \tilde{\mathcal{X}}} 
\end{bmatrix} = P, \label{eq:UnInputProof2}
\end{equation}
where $\begin{bmatrix}
\textbf{e}_{x_i} \\
T_{\bar{\mathcal{C}} \tilde{\mathcal{X}}} 
\end{bmatrix} $ denotes the mapping from $\tilde{\mathcal{X}}$ to $\bar{\mathcal{C}}\cup \{x_i\}$. Following a similar reasoning as in the proof of Lemma~\ref{lemma:K}, it can be obtained from the two graphical conditions that generically $G_{ji}\bar{H}$ and consequently $G_{ji}$ are obtained uniquely since $\bar{H}$ is known. The proof for the case where $w_j \in \mathcal{D}$ can be shown analogously.

\section*{Proof of Theorem~\ref{theorem:UnmeasureOutput}}
Firstly, it holds that $w_i \in \mathcal{D}$ since $\bar{\mathcal{C}}$ contains $w_i$, and let $\mathcal{N}$ denote the out-neighbors of $w_i$ that are not in $\mathcal{N}_i^+$. Considering the column of $C= T_{\mathcal{C}\mathcal{W}}(I-G)$ corresponding to $w_i$ and since $\mathcal{D}$ is a $( \mathcal{N}_i^+ \setminus \{w_j\}) \cup \{w_i\} - \bar{\mathcal{C}}$ disconnecting set, we have 
\begin{equation}
\begin{bmatrix}
T_{\bar{\mathcal{C}} j} &
T_{\bar{\mathcal{C}}\mathcal{D}} \bar{K}_1 
\end{bmatrix} \begin{bmatrix}
- G_{ji} \\
\begin{bmatrix}
-G_{\mathcal{N}_i^+ \setminus \{w_j \} i} \\ 1 \end{bmatrix}
\end{bmatrix} = P,  \label{eq:partialvvproof}
\end{equation}
where $T_{\bar{\mathcal{C}}\mathcal{D}} \bar{K}_1  = T_{\bar{\mathcal{C}} ( \mathcal{N}_i^+ \setminus \{w_j\}) \cup \{w_i\}}$ for some $\bar{K}_1$ based on Lemma~\ref{lemma:Kdual}; and $P=C_{\bar{\mathcal{C}} i}+T_{\bar{\mathcal{C}}\mathcal{N}}G_{\mathcal{N}i}$ where $P$ is known and $C_{\bar{\mathcal{C}} i}$ is now non-zero because $w_i \in \bar{\mathcal{C}}$. Note that $T_{\bar{\mathcal{C}}j}$ is given by $CT_{\mathcal{W}\mathcal{X}}$ due to the existence of $x_j$. In addition, it holds that for some $\bar{K}_2$
\begin{equation}
T_{\bar{\mathcal{C}}\mathcal{D}} = T_{\bar{\mathcal{C}}\bar{\mathcal{X}}}\bar{K}_2^\dagger, \label{eq:partialvvproof2}
\end{equation}
based on Lemma~\ref{lemma:Kdual}. Combining the above equation and \eqref{eq:partialvvproof}, and following the analysis of Theorem~\ref{theorem:partial} analogously conclude that unique $G_{ji}$ is guaranteed generically.

\ifCLASSOPTIONcaptionsoff
  \newpage
\fi

\bibliographystyle{IEEEtran}
\bibliography{ShiLibrary}

\end{document}